\newcommand\rurl[1]{\href{http://#1}{\nolinkurl{#1}}}
\title{Drawing Graphs on Few Lines and Few Planes%
\iflncs%
  \thanks{\arxivrefthanks}
\else%
  \thanks{Appears in the Proceedings of the 24th International Symposium on
  Graph Drawing and Network Visualization (GD 2016).}
\fi%
}
\author{Steven Chaplick \inst1 \and Krzysztof
Fleszar \inst1 \and Fabian Lipp \inst1 \and Alexander~Ravsky \inst2
\and Oleg~Verbitsky \inst3 \and Alexander~Wolff \inst1}
\institute{Lehrstuhl f\"ur Informatik~I, Universit\"at W\"urzburg, Germany,
  \rurl{www1.informatik.uni-wuerzburg.de/en/staff}\and
Pidstryhach Institute for Applied Problems of Mechanics and
Mathematics, National Academy
of Science of Ukraine, Lviv, Ukraine \and
Institut f\"ur Informatik,
Humboldt-Universit\"at zu Berlin, Germany}
\authorrunning{S.~Chaplick et al.}
\newcommand{\of}[1]{\left( #1 \right)}
\newcommand{\function}[2]{:#1 \rightarrow #2}
\newcommand{\setdef}[2]{\left\{ \hspace{0.5mm} #1 : \hspace{0.5mm} #2 \right\}}
\newcommand{\reals}{\mathbb{R}}
\newcommand{\maxdeg}{\Delta}
\DeclareMathOperator{\diam}{diam}
\DeclareMathOperator{\tn}{tn}
\DeclareMathOperator{\tw}{tw}
\DeclareMathOperator{\bw}{bw}
\DeclareMathOperator{\sep}{sep}
\DeclareMathOperator{\Conv}{Conv}
\DeclareMathOperator{\es}{es}
\DeclareMathOperator{\la}{la}
\DeclareMathOperator{\segm}{segm}
\DeclareMathOperator{\slop}{slop}
\DeclareMathOperator{\area}{area}
\DeclareMathOperator{\vt}{vt}
\DeclareMathOperator{\lvaname}{lva}
\newcommand{\lva}[1]{\lvaname(#1)}
\newcommand{\Wseparator}{$W$\kern-0.12em -separator\xspace}
\let\doproof\proof
\def\proof{\gdef\myqedtmp{\qed}\doproof}
\let\doendproof\endproof
\renewcommand\endproof{\myqedtmp\doendproof}
\newcommand{\qedhere}{\myqedtmp\gdef\myqedtmp{}}
\newcommand{\savespace}[1]{}
\begin{document}

\maketitle

\begin{abstract}
  We investigate the problem of drawing graphs in 2D and 3D such that
  their edges (or only their vertices) can be covered by few
  lines or planes.  We insist on straight-line edges and crossing-free
  drawings.  This problem has many connections to other challenging
  graph-drawing problems such as small-area or small-volume drawings,
  layered or track drawings,
and drawing graphs
  with low visual complexity.  While some facts about our problem are
  implicit in previous work, this is the first treatment of the
  problem in its full generality.
Our contribution is as follows.
\begin{itemize}
\item We show lower and upper bounds for the numbers of lines and
  planes needed for covering drawings of graphs in certain graph
  classes.  In some cases our bounds are asymptotically tight; in some cases we are
  able to determine exact values.
\item We relate our parameters to standard combinatorial
  characteristics of graphs (such as the chromatic number, treewidth,
  maximum degree, or arboricity) and to parameters that have been
  studied in graph drawing (such as the track number or the number of
  segments appearing in a drawing).
\item We pay special attention to planar graphs.  For example, we show
  that there are planar graphs that can be drawn in 3-space on a lot
  fewer lines than in the plane.
\end{itemize}
\end{abstract}

\section{Introduction}

It is well known that any graph admits a straight-line drawing in 3-space.
Suppose that we are allowed to draw edges only on a limited number of
planes. How many planes do we need for a given graph $G$?
For example, $K_6$ needs four planes; see Fig.~\ref{fig:K6-photo}.
Note that this question is different from the well-known concept
of a \emph{book embedding} where all vertices lie on one line (the
spine) and edges lie on a limited number of adjacent half-planes
(the pages).  In contrast, we put no restriction
on the mutual position of planes, the vertices can be located
in the planes arbitrarily, and the edges must be straight-line.

\begin{figure}
\begin{minipage}[b]{.57\textwidth}
  \centering
  \includegraphics{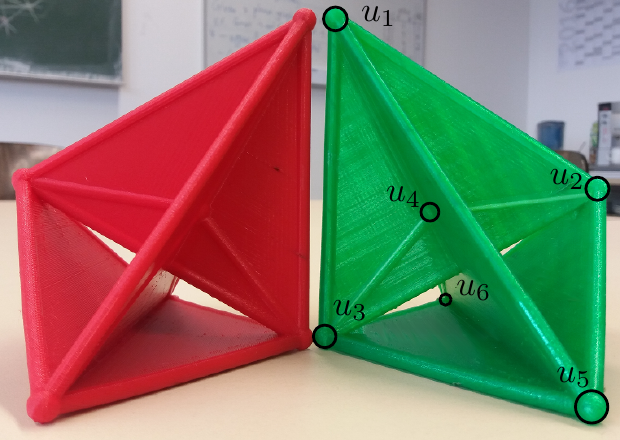}
  \caption{$K_6$ can be drawn straight-line and crossing-free on four
    planes.  This is optimal, that is, $\rho^2_3(K_6)=4$.}
  \label{fig:K6-photo}
\end{minipage}
\hfill
\begin{minipage}[b]{.4\textwidth}
  \includegraphics[page=1,scale=1.2]{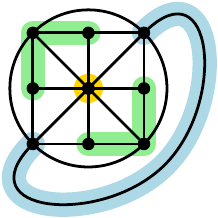}
  \vspace*{-6ex}

  \hspace{0ex}\hfill\includegraphics[page=3,scale=1.2]{pi13is3}
  \caption{Planar 9-vertex graph~$G$ with $\pi^1_3(G)=3$,
    3D-drawing on three lines.}
  \label{fig:three-lines}
\end{minipage}
\end{figure}
In a weaker setting, we require only the vertices to be located
on a limited number of planes (or lines).  For example, the graph in
Fig.~\ref{fig:three-lines}
can be drawn in 2D such that its vertices
are contained in three lines; we conjecture that it is
the smallest planar graph that needs more than two lines even in 3D.
This version of our problem is related to the well-studied
problem of drawing a graph straight-line in a 3D grid
of bounded volume \cite{DujmovicW13,w-3dgd-EA08}:
If a graph can be drawn with all vertices on a
grid of volume~$v$, then $v^{1/3}$ planes and $v^{2/3}$ lines suffice.
We now formalize the problem.

\begin{definition}\rm
  Let $1 \le l < d$, and let $G$ be a graph.  We define
  the \emph{$l$-dimensional affine cover number} of~$G$ in~$\reals^d$,
  denoted by $\rho^l_d(G)$, as the minimum number of $l$-dimensional planes
  in~$\mathbb R^d$ such that~$G$ has a drawing that is contained in
  the union of these planes.  We define $\pi^l_d(G)$, the
  \emph{weak} $l$-dimensional affine cover number of~$G$ in~$\reals^d$,
   similarly to $\rho^l_d(G)$,
  but under the weaker restriction that the vertices (and not
  necessarily the edges) of~$G$ are contained in the union of the
  planes.  Finally, the \emph{parallel} affine cover number, $\bar\pi^l_d(G)$,
  is a restricted version
  of~$\pi^l_d(G)$, in which we insist that the planes are parallel.  We
  consider only straight-line and crossing-free drawings. Note: $\rho^l_d(G)$,
  $\pi^l_d(G)$, and $\bar\pi^l_d(G)$ are only undefined when $d=2$ and $G$ is
  non-planar.
\end{definition}

Clearly, for any combination of $l$ and $d$, it holds that
$\pi^l_d(G)\le\bar\pi^l_d(G)$ and $\pi^l_d(G)\le\rho^l_d(G)$.
Larger values of~$l$ and~$d$ give us more freedom for drawing graphs
and, therefore, smaller $\pi$- and $\rho$-values.  Formally, for any
graph~$G$, if $l'\le l$ and $d'\le d$
then~$\pi^l_d(G)\le \pi^{l'}_{d'}(G)$, $\rho^l_d(G)\le \rho^{l'}_{d'}(G)$, and
$\bar\pi^l_d(G)\le \bar\pi^{l'}_{d'}(G)$.
But in most cases this freedom is not essential.
For example, it suffices to consider $l\le2$ because otherwise
$\rho^l_d(G)=1$.  More interestingly, we can actually focus on $d\le3$
because every graph can be drawn in~$3$-space as effectively as in high dimensional spaces, i.e.,
  for any integers $1 \le l \le d$, $d\ge 3$, and for any graph $G$,
  it holds that $\pi^l_d(G)=\pi^l_3(G)$,
  $\bar\pi^l_d(G)=\bar\pi^l_3(G)$, and $\rho^l_d(G)=\rho^l_3(G)$.
We prove this important fact in
Appendix~\appref{s:collapse}.  Thus, our task is to investigate the cases
$1\le l<d\le 3$.
We call $\rho^1_2(G)$ and $\rho^1_3(G)$ the \emph{line cover numbers} in 2D and
3D, $\rho^2_3(G)$ the \emph{plane cover number}, and analogously for the weak
versions.

\paragraph{Related work.}

We have already briefly mentioned 3D graph drawing on the grid, which
has been surveyed by Wood~\cite{w-3dgd-EA08} and by Dujmovi{\'c} and
Whitesides \cite{DujmovicW13}.
For example, Dujmovi\'c~\cite{d-glvls-JCTSB15}, improving on a result
of Di Battista et al.~\cite{BattistaFP13}, showed that any planar
graph can be drawn into a 3D-grid of volume $O(n \log n)$.
It is well-known that, in 2D, any planar graph admits a plane straight-line
drawing on an $O(n) \times O(n)$ grid~\cite{s-epgg-SODA90,fpp-hdpgg-Comb90}
and that the nested-triangles graph~$T_k = K_3
\times P_k$ (see Fig.~\ref{fig:nested-triangles}, p.~\pageref{fig:nested-triangles}) with $3k$ vertices
needs $\Omega(k^2)$ area~\cite{fpp-hdpgg-Comb90}.

An interesting variant of our problem is to study drawings whose edge
sets are represented (or covered) by as few objects as possible.  The
type of objects that have been used are straight-line segments
\cite{desw-dpgfs-CGTA07,dm-dptfs-CCCG14} and circular arcs
\cite{s-dgfa-JGAA15}.  The idea behind this objective is to keep the
visual complexity of a drawing low for the observer.  For example,
Schulz~\cite{s-dgfa-JGAA15} showed how to draw the dodecahedron
by using~$10$ arcs, which is optimal.

\savespace{
Classes of planar graphs with bounded $\pi^1_2$-value are of interest
in the context of \emph{untang\-ling}~\cite{pt-up-02}.
If $\pi^1_2(G) \le c$, then $G$ admits a straight-line drawing with at
least $n/c$ collinear vertices.
If~$G$ admits a set of~$k$ collinear vertices with a special property,
then any straight-line drawing of~$G$ can be untangled while at
least~$\sqrt k$ vertices remain fixed~\cite{RavskyV11}.
}

\paragraph{Our contribution.}
Our research goes into three directions.

First, we show lower and upper bounds
for the numbers of lines and planes needed for covering drawings of
graphs in certain graph classes such as
graphs of bounded degree or subclasses of planar graphs.
The most natural graph families to start with are
the complete graphs and the complete bipartite graphs.
Most versions of the affine cover numbers of these graphs
can be determined easily. Two cases are much more subtle:
We determine $\rho^2_3(K_n)$ and $\rho^1_3(K_{n,n})$ only
asymptotically, up to a factor of 2 (see Theorem~\ref{thm:KnLowerUpper} and Example~\ref{ex:rho12(K_n)}).
Some efforts are made to compute the exact values of $\rho^2_3(K_n)$
for small $n$ (see Theorem~\ref{thm:rhoCompleteGraphs}).
As another result in this direction, we prove that
$\rho^1_3(G)>n/5$ for almost all cubic graphs on $n$ vertices (Theorem~\ref{thm:rho^1_3-small-edge-separators}(b)).

Second, we relate the affine cover numbers to standard combinatorial characteristics of graphs
and to parameters that have been studied in graph drawing.
In Section~\ref{sec:vertices}, we characterize $\pi^1_3(G)$ and $\pi^2_3(G)$
in terms of the \emph{linear vertex arboricity} and the \emph{vertex thickness}, respectively.
This characterization implies that both $\pi^1_3(G)$ and $\pi^2_3(G)$
are linearly related to the chromatic number of the graph~$G$.
Along the way, we refine a result of Pach et al.~\cite{ptt-3dgdg-GD97}
concerning the volume of 3D grid drawings (Theorem~\ref{PiChar0}).
We also prove that any graph $G$ has balanced separators of size at most~$\rho^1_3(G)$
and conclude from this that $\rho^1_3(G)\ge\tw(G)/3$, where $\tw(G)$
denotes the treewidth of $G$ (Theorem~\ref{thm:rho^1_3-small-edge-separators}).
In Section~\ref{ss:rho12rho13}, we analyze the relationship
between~$\rho^1_2(G)$ and the segment number $\segm(G)$ of a graph,
which was introduced by Dujmovi\'c et al.~\cite{desw-dpgfs-CGTA07}.
We prove that $\segm(G)=O(\rho^1_2(G)^2)$ for any connected $G$ and show that this bound is optimal
(see Theorem~\ref{thm:segm-vs-rho12} and Example~\ref{ex:rho12-segm}).

Third, we pay special attention to planar graphs (Section~\ref{s:planar}).
Among other results, we show examples of planar graphs with a large gap
between the parameters $\rho^1_3(G)$ and $\rho^1_2(G)$ (see Theorem~\ref{thm:nested-triangles}).

We also investigate the parallel affine cover numbers~$\bar\pi^1_2$ and~$\bar\pi^1_3$.
Observe that for any graph~$G$,
$\bar\pi^1_3(G)$ equals the \emph{improper track number} of~$G$, which
was introduced by Dujmovi{\'c} et al.~\cite{dmw-lgbtw-SICOMP05}.

Due to lack of space, our results for the parallel affine cover numbers
(along with a survey of known related results)
appear in Appendix~\appref{s:parallel}.
We defer some other proofs to
Appendices~\appref{sec:gen-appendix} and~\appref{sec:plan-appendix}
and list some open problems in Appendix~\appref{sec:open-problems}.

\paragraph{Remark on the computational complexity.}

In a follow-up paper~\cite{ChaplickFLRVW16b}, we investigate the computational complexity of
computing the $\rho$- and $\pi$-numbers.
We argue that it is NP-hard to decide whether a given graph has
a~$\pi^1_3$- or~$\pi^2_3$-value of~2 and that both values are
even hard to approximate. This result is based on Theorems~\ref{PiChar0}
and~\ref{thm:PiCharTwo} and Corollaries~\ref{PiChi} and~\ref{CorPiChar}
in the present paper.  While the graphs with $\rho_3^2$-value~1
are exactly the planar graphs (and hence, can be recognized in linear
time), it turns out that recognizing graphs with a
$\rho_3^2$-value of~2 is already NP-hard. In contrast to this,
the problems of deciding whether $\rho^1_3(G)\le k$ or $\rho^1_2(G)\le k$
are solvable in polynomial time for any fixed~$k$. However,
the versions of these problems with $k$ being part of the input are
complete for the complexity class $\exists\mathbb{R}$ which is based on
the \emph{existential theory of the reals} and
that plays an important role in computational geometry~\cite{SchaeferS15}.

\paragraph{Notation.}

For a graph $G=(V,E)$, we use $n$ and~$m$ to denote the numbers of
vertices and edges of~$G$, respectively.  Let $\maxdeg(G)=\max_{v \in V}
\deg(v)$ denote the maximum degree of~$G$. Furthermore, we will use
the standard notation
$\chi(G)$ for the chromatic number,
$\tw(G)$ for the treewidth,
and $\diam(G)$ for the diameter of~$G$.
The Cartesian product of graphs $G$ and $H$ is denoted by~$G\times H$.

\savespace{
\emph{Cubic} graphs are graphs where all vertices have degree~3.
Recall also that a graph is \emph{$k$-connected} if it has more than $k$ vertices and
stays connected after removal of any set of up to $(k-1)$ vertices.
A planar graph $G$ is \emph{maximal} if adding an edge between any two
non-adjacent vertices of $G$ violates planarity. Maximal planar graphs
on more than three vertices are
3-connected. Clearly, all facial cycles in such graphs have length 3.
By this reason maximal planar graphs are also called \emph{triangulations}.
}

\section{The Affine Cover Numbers in \texorpdfstring{$\reals^3$}{R³}}

\subsection{Placing Vertices on Few Lines or Planes
  (\texorpdfstring{$\pi^1_3$ and $\pi^2_3$}{π13 and π23})}
\label{sec:vertices}

A \emph{linear forest} is a forest whose connected components
are paths. The \emph{linear vertex arboricity} $\lva G$ of a graph~$G$
equals the smallest size $r$ of a partition $V(G)=V_1\cup\dots\cup V_r$
such that every $V_i$ induces a linear forest. This notion, which is an induced
version of the fruitful concept of \emph{linear arboricity} (see Remark~\ref{rem:rho13-vs-la} below),
appears very relevant to our topic.
The following result is based on a construction of Pach et al.~\cite{ptt-3dgdg-GD97};
see Appendix~\appref{sec:gen-appendix} for the proof.

\newcommand{\contentThmPiCharZero}{%
  For any graph~$G$, it holds that $\pi^1_3(G)=\lva G$. Moreover, any
  graph $G$ can be drawn with vertices on $r$ lines in the 3D integer grid of size
  $r\times 4rn\times 4r^2n$, where $r=\lva G$.
}
\wormhole{PiChar0}
\begin{theorem}
  \label{PiChar0}
  \contentThmPiCharZero
\end{theorem}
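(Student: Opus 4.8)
The plan is to establish the two inequalities $\lva G\le\pi^1_3(G)$ and $\pi^1_3(G)\le\lva G$, proving the grid bound together with the second one. For the first inequality, take any straight-line crossing-free drawing of $G$ whose edges lie in the union of $r$ lines $\ell_1,\dots,\ell_r$, assign to each vertex $v$ an index $c(v)$ with $v\in\ell_{c(v)}$, and put $V_i=c^{-1}(i)$. The vertices of $V_i$ lie on $\ell_i$ and are thus linearly ordered along it; if some $w\in V_i$ had two neighbours in $V_i$ on the same side of it along $\ell_i$, the segment of one of the two incident edges would pass through the other neighbour, contradicting crossing-freeness, and if $G[V_i]$ contained a cycle, the same contradiction would arise at the vertex of the cycle that is extremal along $\ell_i$; hence each $G[V_i]$ has maximum degree at most $2$ and is acyclic, i.e.\ a linear forest, so $\lva G\le r$.

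For the second inequality let $r=\lva G$ and fix a partition $V=V_1\cup\dots\cup V_r$ with every $G[V_i]$ a linear forest. A linear forest is a spanning subgraph of a Hamiltonian path on its vertex set, so we may order $V_i=(v^i_1,\dots,v^i_{n_i})$ so that every edge of $G[V_i]$ joins consecutive vertices. Following the idea of Pach et al.~\cite{ptt-3dgdg-GD97}, I place the vertices on the hyperbolic paraboloid $S=\{z=xy\}$: put $v^i_j$ at $(i,\,y^i_j,\,i\,y^i_j)$, where the integers $y^i_1<\dots<y^i_{n_i}$ are specified below; then layer $i$ lies on the line $\ell_i=\{(i,t,it):t\in\reals\}\subseteq S$, and the edges of $G[V_i]$ become pairwise non-overlapping sub-segments of $\ell_i$ that avoid all other vertices. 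The key identity is that the chord from $(a,u,au)$ to $(b,v,bv)$ with $a<b$, parametrised by $x=a+s(b-a)$, $s\in[0,1]$, satisfies $z-xy=s(1-s)(b-a)(v-u)$, so it lies strictly above $S$ if $v>u$, strictly below if $v<u$, and on $S$ if $v=u$. I let the layers occupy pairwise disjoint integer intervals $R_1<R_2<\dots<R_r$, each of length $4n$ (hence all $y$-coordinates are distinct and lie in $\{1,\dots,4rn\}$); then every inter-layer edge has its lower-layer endpoint below its higher-layer endpoint, so it lies strictly above $S$ except at its endpoints and therefore cannot cross any edge of some $G[V_i]$, which lies on $S$. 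Two inter-layer edges between the same pair of layers never cross, since the lines $\ell_i,\ell_k$ are skew and two segments, each joining a point of $\ell_i$ to a point of $\ell_k$, can meet only at a shared endpoint. If two disjoint inter-layer edges cross, their four endpoints are coplanar and span at least three layers; if they span exactly three, the layer common to both edges supplies two endpoints, so the spanning plane contains a whole line $\ell_a$, and since the plane through $\ell_a$ and a point $(b,\beta,b\beta)$ meets every line $\ell_c$ ($c\ne a$) at the point with $y$-coordinate $\beta$, coplanarity would force two distinct vertices to share a $y$-coordinate — impossible. Finally $z=i\,y^i_j\le r\cdot4rn=4r^2n$, so the drawing lies in the $r\times4rn\times4r^2n$ integer grid.

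The remaining case — a crossing between two disjoint inter-layer edges spanning four distinct layers — is the main obstacle. Such a crossing again forces a coplanarity of the four endpoints, which, given any three of them, pins the fourth to a single point on its line $\ell_c$; thus, when choosing the $y^i_j$ inside their intervals, each vertex must be steered clear of a controlled set of forbidden positions, and the heart of the proof — in the spirit of Pach et al.~\cite{ptt-3dgdg-GD97} — is to verify that this can be done while the intervals stay as short as $4n$, which is exactly what the factor $4$ of slack is for. Everything else (the reduction to linear forests, the sign identity for chords of $S$, the skew-lines observation, and the three-layer case) is routine.
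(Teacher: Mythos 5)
Your lower-bound direction ($\lva G\le\pi^1_3(G)$) is correct and in fact more detailed than the paper, and your reduction of the upper bound to the same family of lines $\ell_i=\{(i,t,it)\}$ on the surface $z=xy$, with adjacent vertices of each linear forest placed consecutively on $\ell_i$, is exactly the paper's strategy (the paper routes it through the Pach--Thiele--T\'oth lemma for the balanced complete $r$-partite graph $K^r(rn)$ and the observation that consecutive vertices of a class can be joined without creating crossings). Your sign identity for chords of the paraboloid, the skew-lines argument for edges between the same pair of lines, and the coplanarity argument for the three-line case are all fine.

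However, there is a genuine gap: the case of two independent edges whose endpoints lie on four distinct lines is precisely the non-trivial content of the construction, and you leave it unproven. Worse, the repair you sketch --- greedily choosing the integer $y$-coordinates inside intervals of length $4n$ so as to avoid the ``forbidden'' coplanar positions --- does not obviously work within the grid claimed by the theorem: when a vertex $v$ is placed, coplanarity with an already-drawn edge $u_1v_1$ and an already-placed neighbour $u_2$ of $v$ forbids up to one point per such pair, i.e.\ on the order of (number of placed edges)$\times$(number of placed neighbours of $v$) positions, which for dense graphs is far larger than $4n$; the factor-$4$ slack cannot absorb this, and you cannot simply enlarge the intervals because the bound $4rn$ on the $y$-extent is part of the statement. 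The paper avoids this entirely by taking the explicit arithmetic placement of Pach et al.: with a prime $p\ge 2r-1$ (so $p<4r$ by Bertrand's postulate), class $i$ uses exactly the $t\in[0,pn)$ with $t\equiv i^2\pmod p$, and crossing-freeness (including your four-line case, and the extra property that no edge meets a segment between consecutive vertices of a class, which is what lets the intra-line linear-forest edges be added) is established by a congruence argument, not by generic-position counting. As it stands, your proof establishes the easy inequality and several special cases of the hard one, but the heart of the upper bound is missing.
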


\begin{corollary}\label{PiChi}
$\chi(G)/2 \le\pi^1_3(G) \le \chi(G)$.
\end{corollary}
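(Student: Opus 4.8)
The plan is to invoke Theorem~\ref{PiChar0}, which gives $\pi^1_3(G)=\lva G$, and thereby reduce the claim to the purely combinatorial sandwich $\chi(G)/2\le\lva G\le\chi(G)$. Both inequalities follow from the elementary observation that linear forests are bipartite.

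For the upper bound $\lva G\le\chi(G)$: take an optimal proper colouring of~$G$, which partitions $V(G)$ into $\chi(G)$ independent sets. Each such set induces an edgeless graph, which is a linear forest (every component is a single-vertex path). Hence this partition witnesses $\lva G\le\chi(G)$.

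For the lower bound $\chi(G)\le2\lva G$: start from an optimal partition $V(G)=V_1\cup\dots\cup V_r$ with $r=\lva G$ into parts each inducing a linear forest. Since a linear forest is a forest, and in particular a disjoint union of paths, each $G[V_i]$ is bipartite; fix a proper $2$-colouring of $G[V_i]$ using colours $\{2i-1,2i\}$. Taking the union of these colourings over all $i$ yields a proper colouring of $G$ with $2r$ colours (no edge is monochromatic: edges inside a part are handled by that part's $2$-colouring, and edges between parts get colours from disjoint colour sets). Therefore $\chi(G)\le 2r=2\lva G$, i.e., $\chi(G)/2\le\lva G$.

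There is essentially no obstacle here; the only point worth a remark is that both inequalities are tight — the upper bound is attained by complete graphs (where $\lva{K_n}=\lceil n/2\rceil$ and $\chi(K_n)=n$ forces the factor-$2$ slack to be real for odd $n$, but even cycles/paths show equality $\lva{}=\chi$ for bipartite graphs), and one can exhibit graphs for which $\lva G$ is roughly $\chi(G)/2$. We will just note that the corollary is an immediate consequence of Theorem~\ref{PiChar0} and the two-line argument above, and leave sharpness to later examples.
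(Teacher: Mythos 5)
Your proof is correct and follows essentially the same route as the paper: invoke Theorem~\ref{PiChar0} to identify $\pi^1_3(G)$ with $\lva G$, then note that independent sets are linear forests (giving $\lva G\le\chi(G)$) and that linear forests are $2$-colourable (giving $\chi(G)\le 2\lva G$). No gaps; the extra remarks on tightness are consistent with Example~\ref{pi^1_3(K_n)}.
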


\savespace{
\begin{proof}
  We have $\lva G\le \chi(G)$ because any independent set is a linear
  forest.  On the other hand, $\chi(G)\le 2\lva G$ because any linear
  forest is 2-colorable.
\end{proof}
}

Corollary \ref{PiChi} readily implies that $\pi^1_3(G) \le
\Delta(G)+1$~\cite{b-ocnon-Cam41}.
This can be considerably improved using
a relationship between the linear vertex arboricity and the maximum degree
that is established by Matsumoto~\cite{Matsumoto90}.
Matsumoto's result implies that
$
\pi^1_3(G) \le \Delta(G)/2+1
$
for any connected graph~$G$.
Moreover, if $\Delta(G)=2d$, then $\pi^1_3(G)=d+1$ if and only if $G$ is
a cycle or the complete graph~$K_{2d+1}$.

We now turn to the weak plane cover numbers.
The \emph{vertex thickness} $\vt(G)$ of a graph $G$ is the smallest
size $r$ of a partition $V(G)=V_1\cup\dots\cup V_r$ such that
$G[V_1],\dots,G[V_r]$ are all planar.
We prove the following theorem in Appendix~\appref{sec:gen-appendix}.

\newcommand{\contentThmPiCharTwo}{%
  For any graph~$G$, it holds that $\pi^2_3(G)=\bar\pi^2_3(G)=\vt(G)$
  and that $G$ can be drawn such that all vertices lie on a 3D
  integer grid of size $\vt(G)\times O(m^2)\times O(m^2)$,
  where $m$ is the number of edges of $G$.
  Note that this drawing occupies $\vt(G)$ planes.

}
\wormhole{thm:PiCharTwo}
\begin{theorem}
  \label{thm:PiCharTwo}
  \contentThmPiCharTwo
\end{theorem}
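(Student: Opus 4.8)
The plan is to prove the two identities $\pi^2_3(G)=\vt(G)$ and $\bar\pi^2_3(G)=\vt(G)$ together with the grid bound, by establishing the chain $\vt(G)\le\pi^2_3(G)\le\bar\pi^2_3(G)\le\vt(G)$; the middle inequality is trivial, so the work is in the two outer ones. For the lower bound $\vt(G)\le\pi^2_3(G)$, I would take any drawing realizing $\pi^2_3(G)=:r$ on planes $P_1,\dots,P_r$, assign each vertex to a plane containing it (breaking ties arbitrarily), and let $V_i$ be the vertices assigned to $P_i$. The induced subgraph $G[V_i]$ then has a straight-line drawing in the plane $P_i$: every edge of $G[V_i]$ joins two vertices lying in $P_i$, and since the original drawing is crossing-free and edges are straight segments, the restriction to $P_i$ is a plane straight-line drawing (one must note that an edge of $G[V_i]$ might be drawn \emph{outside} $P_i$ in the global drawing, but since both endpoints are in $P_i$ we may simply re-draw that edge as the straight segment inside $P_i$; the only edges lying in $P_i$ that could interfere are those of $G[V_i]$ themselves, and mutual crossings among them are forbidden). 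Hence each $G[V_i]$ is planar, so $\vt(G)\le r$.

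For the upper bound and the grid drawing, fix an optimal vertex-thickness partition $V=V_1\cup\dots\cup V_r$ with $r=\vt(G)$ and each $G[V_i]$ planar. I would draw each $G[V_i]$ by a planar straight-line grid drawing on an $O(|V_i|)\times O(|V_i|)$ grid via de~Fraysseix--Pach--Pollack or Schnyder~\cite{fpp-hdpgg-Comb90,s-epgg-SODA90}, then lift the $i$-th drawing into the plane $\{x=i\}$ in $\reals^3$. Taking the planes $x=1,\dots,x=r$ gives $r$ parallel planes, which already handles $\bar\pi^2_3(G)\le\vt(G)$ once crossing-freeness is verified; the remaining edges (those with endpoints in different $V_i$'s) are drawn as straight segments between the appropriate layers. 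The genuine obstacle — and the reason the $O(m^2)$ bound appears rather than $O(n)$ — is ensuring that these inter-layer edges neither cross each other nor cross the in-layer drawings. The standard device, used e.g.\ in the Pach--Thiele--T\'oth style constructions underlying Theorem~\ref{PiChar0}, is to place the vertices along a moment-type curve within each layer so that any two segments are in \emph{general position}; with $O(m^2)$ candidate coordinate values one can greedily perturb the vertex positions so that no two of the at most $m$ edge-segments share an interior point and no edge passes through a non-incident vertex. Because distinct layers are parallel, an inter-layer segment meets a given layer-plane in exactly one point, so it suffices to argue about pairwise segment intersections in the projection, which a general-position argument on an $O(m^2)$-sized grid settles.

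The step I expect to be the main technical obstacle is precisely this collision-avoidance on a polynomially bounded grid: one needs to simultaneously guarantee (i) that within each layer the planar drawing stays crossing-free (free, since we use a known grid embedding), (ii) that two inter-layer edges do not cross, and (iii) that an inter-layer edge does not pass through a vertex it is not incident to, nor through any in-layer edge. A clean way to organize this is a dimension/counting argument: there are $O(m^2)$ ``bad'' events (pairs of edges, edge--vertex incidences), each of which is avoided unless the vertices fall on a lower-dimensional algebraic subset, so a grid of side $O(m^2)$ contains a valid placement by a simple union-bound over forbidden hyperplanes. Once the placement is fixed, the $x$-coordinate takes $r=\vt(G)$ values and the $y,z$-coordinates take $O(m^2)$ values, yielding the claimed $\vt(G)\times O(m^2)\times O(m^2)$ grid and, since the $r$ planes $x=\mathrm{const}$ are parallel, simultaneously $\pi^2_3(G)\le\bar\pi^2_3(G)\le\vt(G)$, closing the chain of inequalities.
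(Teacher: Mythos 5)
Your overall skeleton is the same as the paper's: the chain $\vt(G)\le\pi^2_3(G)\le\bar\pi^2_3(G)\le\vt(G)$, with the lower bound read off from any drawing (note that you do not even need to ``re-draw'' anything there: an edge of $G[V_i]$ is a straight segment with both endpoints in the plane $P_i$, hence it already lies in $P_i$), and the upper bound via planar grid drawings of the parts $G[V_i]$ lifted to $r=\vt(G)$ parallel planes. The gap is exactly in the step you flag as the main obstacle, and it is a genuine one, for two reasons. First, your plan to ``greedily perturb the vertex positions'' conflicts with your claim that in-layer planarity is ``free, since we use a known grid embedding'': once you move vertices individually, the in-layer drawings move with them, and keeping them crossing-free on a polynomially bounded integer grid is no longer automatic. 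The paper avoids this tension by perturbing each layer \emph{as a whole}, applying to $\delta_i$ a random integer affine map (a shift by $(p,q)$ with $0\le p,q<s$ composed with a dilating rotation given by a coprime pair $1\le b<a\le t$), which preserves each in-layer drawing verbatim.

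Second, and more importantly, your ``union bound over forbidden hyperplanes'' does not address the hardest bad event: two inter-layer edges $u_1v_1$ and $u_2v_2$ with $u_1,u_2\in V_i$ and $v_1,v_2\in V_j$. For such a pair the crossing condition depends only on whether $v_2-v_1$ is a positive multiple of $u_1-u_2$, and this is invariant under any translation of layer $j$; so if you randomize only over shifts (or argue ``each bad event is a hyperplane condition'' in the shift parameters), the bad event has probability $0$ or $1$ and the union bound has nothing to work with. This is precisely why the paper introduces the random rotation and counts the admissible coprime pairs $(a,b)$ via Farey fractions, getting a crossing probability of $O(1/t^2)$ per such pair of edges; together with the $1/s^2$ and $1/s$ bounds for the vertex-on-edge and one-endpoint-apart events, choosing $t=O(m)$ and $s=O(m^2)$ makes the total failure probability less than $1$ and yields the $\vt(G)\times O(m^2)\times O(m^2)$ grid. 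A repair of your individual-vertex-perturbation route is conceivable (moving $v_1$ alone does change the direction of $v_2-v_1$), but then you must first scale up the layer drawings so that small perturbations keep them planar and give each vertex enough perturbation choices for the union bound over $\Theta(m^2)$ events; the naive accounting then overshoots the claimed $O(m^2)$ grid size, so as written the quantitative part of the theorem is not established by your argument. Your appeal to a moment-curve placement is also misplaced here: that device belongs to the line-cover construction behind Theorem~\ref{PiChar0}, not to this layered-planes construction.
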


\begin{corollary}\label{CorPiChar}
  $\chi(G)/4 \le \pi^2_3(G) \le \chi(G)$.
\end{corollary}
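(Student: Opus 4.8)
The plan is to reduce the statement to a pair of elementary facts about proper colorings, exactly mirroring the proof of Corollary~\ref{PiChi}. By Theorem~\ref{thm:PiCharTwo} we have $\pi^2_3(G)=\vt(G)$, so it suffices to prove $\chi(G)/4 \le \vt(G) \le \chi(G)$.

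For the upper bound $\vt(G)\le\chi(G)$, I would start from an optimal proper coloring of $G$ with $\chi(G)$ colors and read it as a vertex partition $V(G)=V_1\cup\dots\cup V_{\chi(G)}$. Each $V_i$ is an independent set, so $G[V_i]$ is edgeless and in particular planar; hence this partition certifies $\vt(G)\le\chi(G)$.

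For the lower bound, I would take an optimal partition $V(G)=V_1\cup\dots\cup V_r$ with $r=\vt(G)$ and each $G[V_i]$ planar. By the Four Color Theorem each $G[V_i]$ has a proper $4$-coloring; giving each part a private set of four colors and overlaying these colorings produces a proper coloring of $G$ with at most $4r$ colors. Thus $\chi(G)\le 4\,\vt(G)=4\,\pi^2_3(G)$, which is the claimed inequality $\chi(G)/4\le\pi^2_3(G)$.

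The only substantive ingredient is the Four Color Theorem; the rest is bookkeeping about disjoint palettes, so I do not anticipate a real obstacle. It is worth remarking that the weaker (and elementary) $5$- or $6$-colorability of planar graphs would only yield a constant of $5$ or $6$ in place of $4$, so the factor $4$ in the statement is precisely where the full strength of the Four Color Theorem is used.
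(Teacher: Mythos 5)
Your proof is correct and is exactly the intended argument: the paper (whose proof of this corollary is omitted, mirroring the space-saved proof of Corollary~\ref{PiChi}) likewise combines Theorem~\ref{thm:PiCharTwo} ($\pi^2_3(G)=\vt(G)$) with the facts that independent sets induce planar graphs and that planar graphs are 4-colorable. No gaps; the use of the Four Color Theorem for the factor $4$ is as the authors intend.
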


\newcommand{\contentExPiK}{%
  \begin{enumerate}[(a)]
  \item
\textit{
$\pi^1_3(K_n)=\lceil n/2\rceil$.}
\item \textit{
  $\pi^1_3(K_{p,q})=2$ for any $1 \le p \le q$; except for
  $\pi^1_3(K_{1,1})=\pi^1_3(K_{1,2})=1$.}
\item \textit{
$\pi^2_3(K_n)=\lceil n/4\rceil$; therefore,
$\pi^2_3(G)\le \lceil n/4\rceil$ for every graph~$G$.}
  \end{enumerate}%
}
\wormhole{pi^1_3(K_n)}
\begin{example}\label{pi^1_3(K_n)}
\contentExPiK
\end{example}

\subsection{Placing Edges on Few Lines or Planes
  (\texorpdfstring{$\rho^1_3$ and $\rho^2_3$}{ρ13 and ρ23})}
\label{sec:edges}

Clearly, $\maxdeg(G)/2\le\rho^1_3(G)\le m$ for any graph~$G$.
Call a vertex~$v$ of a graph~$G$ \emph{essential} if $\deg v\ge 3$ or
if~$v$ belongs to a~$K_3$ subgraph of~$G$.
Denote the number of essential vertices in~$G$ by~$\es(G)$.

\begin{lemma}\label{Ess}
  \begin{enumerate}[(a)]
  \item
$\rho^1_3(G) > (1+\sqrt{1+8\es(G)})/2$.%
\item
$\rho^1_3(G)>\sqrt{m^2/n-m}$ for any graph~$G$ with $m\ge n\ge 1$.
  \end{enumerate}
\end{lemma}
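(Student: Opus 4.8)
The plan is to bound the number of edges that can lie on a single line in a crossing-free straight-line drawing, and then use a counting argument across all $\rho^1_3(G)$ lines. First I would establish the key local fact: if $\ell$ is one of the lines in a covering drawing and $k$ vertices of $G$ lie on $\ell$, then $\ell$ contains at most $k-1$ edges of $G$. Indeed, the vertices on $\ell$ appear in some linear order $v_1,\dots,v_k$, and an edge drawn along $\ell$ joining $v_i$ and $v_j$ with $j>i+1$ would pass through $v_{i+1}$, which is impossible in a valid drawing (a vertex cannot lie in the relative interior of an edge); hence only the $k-1$ ``consecutive'' edges $v_iv_{i+1}$ can be present. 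So each line carries at most (number of its vertices) $-\,1$ edges.

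Next I would set $r=\rho^1_3(G)$ and take an optimal drawing on lines $\ell_1,\dots,\ell_r$. Every edge lies on at least one line, and every vertex lies on at least one line (its endpoints do). Let $k_i$ be the number of vertices on $\ell_i$. By the local fact, $m\le\sum_{i=1}^r(k_i-1)=\Bigl(\sum_i k_i\Bigr)-r$. For part (a) I would instead count more carefully using essential vertices: an essential vertex $v$ (degree $\ge 3$, or in a triangle) cannot have all its incident edges on a single line — at most two edges at $v$ can be collinear, and if $v$ is in a $K_3$ the third vertex of that triangle is off the line through the other two — so every essential vertex is an endpoint that ``forces'' structure onto at least two distinct lines. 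More usefully: restrict attention to the essential vertices; each of the $r$ lines contains at most $r$ of them is too weak, so the right bookkeeping is that the essential vertices on a common line, being consecutive-only as above but each needing a second line for a third incident edge/triangle, force $\es(G)\le\binom{r}{2}$-type bound. Concretely I expect $\es(G)<\binom{r}{2}$, i.e. $r^2-r>2\es(G)$, which rearranges to the stated $r>(1+\sqrt{1+8\es(G)})/2$.

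For part (b), I would combine $m\le\sum_i(k_i-1)$ with the fact that the $k_i$ cannot be too large individually: two lines meet in at most one point, so a vertex lies on at most... no — rather, I would bound $\sum_i k_i$ by noting each of the $n$ vertices is counted with multiplicity equal to the number of covering lines through it, but to get the clean bound I would use convexity. Assuming (essentially) that we may take $k_i\le$ some common value, or more robustly using $\sum_i\binom{k_i}{2}\ge\binom{?}{}$, the cleanest route is: since $m\le\sum(k_i-1)$ and each edge determines a line through two of its-endpoint vertices, while $\sum_i\binom{k_i}{2}\ge m$ is false in general — so instead I'd use that $m\le\sum_i(k_i-1)\le\sum_i k_i - r$ and, separately, that a line with $k_i$ vertices and the crossing-free condition forces the $k_i$ to be spread out; applying the Cauchy–Schwarz / power-mean inequality to $m\le\sum(k_i-1)$ together with $\sum k_i \le$ (something like $rn/\text{avg}$) yields $r^2>m^2/n-m$. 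The main obstacle is exactly this last step — getting from the per-line edge bound to a bound involving only $m$ and $n$ requires controlling $\sum_i k_i$, i.e. the total vertex-line incidences; I would handle it by observing that if a vertex lay on two covering lines then edges through it on both are constrained, and in the worst case one shows $\sum_i k_i \le n + m$ (each extra incidence ``pays for'' an edge), which plugged into $m \le \bigl(\sum_i k_i\bigr) - r$ combined with the Cauchy–Schwarz estimate $\bigl(\sum_i(k_i-1)\bigr)^2 \le r\sum_i(k_i-1)^2$ delivers the inequality after routine algebra.
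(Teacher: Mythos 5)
Your part (a) is essentially the paper's argument, just stated more loosely: an essential vertex has two incident edges that cannot lie on a common line (at most two edges through a vertex are collinear, and the third vertex of a triangle is off the line of the other two), so each essential vertex is an intersection point of two of the $r=\rho^1_3(G)$ covering lines; since two distinct lines meet in at most one point, distinct essential vertices yield distinct line pairs, whence $\es(G)\le\binom{r}{2}$ and the stated bound follows. That step (distinct vertices $\Rightarrow$ distinct pairs of lines) is the one you should state explicitly instead of gesturing at a ``$\binom{r}{2}$-type bound,'' but the idea is the right one.

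Part (b) has a genuine gap, and you have located it yourself. Your route via the per-line edge count $m\le\sum_i(k_i-1)$ cannot reach $\rho^1_3(G)>\sqrt{m^2/n-m}$: the auxiliary claim $\sum_i k_i\le n+m$ is not justified, and even granted it only yields $m\le n+m-r$, i.e.\ an \emph{upper} bound $r\le n$, which is the wrong direction; likewise the Cauchy--Schwarz step $\bigl(\sum_i(k_i-1)\bigr)^2\le r\sum_i(k_i-1)^2$ needs an upper bound on $\sum_i(k_i-1)^2$ of order $m\sqrt n$ that you have no way to supply. The per-line edge count discards exactly the information that drives the bound, namely how many lines are forced through each individual vertex. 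The paper's proof of (b) is a weighted refinement of (a): through a vertex $v$ at least $\lceil\deg v/2\rceil$ covering lines must pass (at most two edges at $v$ per line), so $v$ accounts for at least $\binom{\lceil\deg v/2\rceil}{2}$ pairs of intersecting lines, and these pairs are distinct for distinct vertices because two lines meet in at most one point. Hence $\binom{r}{2}\ge\sum_v\binom{\lceil\deg v/2\rceil}{2}\ge\frac18\sum_v(\deg v)^2-\frac14\sum_v\deg v$, and the quadratic--arithmetic mean inequality $\sum_v(\deg v)^2\ge(2m)^2/n$ gives $\binom{r}{2}\ge m^2/(2n)-m/2$, i.e.\ $r>\sqrt{m^2/n-m}$. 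If you redo (b), replace your incidence bookkeeping by this vertex-by-vertex count of line pairs; the rest is routine algebra.
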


\begin{proof}
(a)
  In any drawing of a graph $G$, any essential vertex is shared
  by two edges not lying on the same line.  Therefore,
  each such vertex is an intersection point of at least two lines, which
  implies that $\es(G)\le \binom{\rho^1_3(G)}{2}$.  Hence,
  $\rho^1_3(G) \ge \big(1+\sqrt{1+8\es(G)}\,\big)/2>\sqrt{2\es(G)}$.

(b)
  Taking into account multiplicity of intersection points (that is,
  each vertex $v$ requires at least $\lceil\deg v/2\rceil (\lceil\deg
  v/2\rceil-1)/2$ intersecting line pairs), we obtain
  \begin{eqnarray*}
    \binom{\rho^1_3(G)}{2} & \ge & \frac 12 \sum_{v\in V(G)}
    \left\lceil\frac{\deg v}{2}
    \right\rceil\left(\left\lceil\frac{\deg v}2\right\rceil-1\right)
    \ge \sum \frac{\deg v(\deg v-2)}{8} = \\
    & = & \frac 18\sum (\deg v)^2-\frac 14\sum \deg v\ge \frac
    1{8n}\left(\sum \deg v\right)^2-\frac 14 2m
    =\frac{m^2}{2n}-\frac m2.
  \end{eqnarray*}
  The last inequality follows by the inequality between arithmetic and
  quadratic means.
  Hence, $\rho^1_3(G) > \sqrt{m^2/n-m}$.
\end{proof}

Part (a) of Lemma \ref{Ess} implies that $\rho^1_3(G)>\sqrt{2n}$
if a graph $G$ has no vertices of degree 1 and 2,
while Part (b) yields $\rho^1_3(G)>\sqrt{m/2}$ for all such~$G$.
Note that a disjoint union of $k$ cycles can have no essential vertices, but each
cycle will need $3$ intersection points of lines, i.e., such a graph has
$\rho^1_3 \in \Omega(\sqrt{k})$. Thus, $\rho^1_3$ cannot be bounded from above by a function
of essential vertices.

\begin{remark}\label{rem:rho13-vs-la}
The \emph{linear arboricity} $\la(G)$ of a graph $G$ is the minimum
number of linear forests which partition the edge set of~$G$; see~\cite{h-cagI-ANYAS70}.
Clearly, we have $\rho^1_3(G)\ge\la(G)$.
There is no function of $\la(G)$ that is an upper bound for $\rho^1_3(G)$.
  Indeed, let $G$ be an arbitrary cubic graph.  Akiyama et al.~\cite{AEH}
  showed that $\la(G)=2$.  On the other hand, any vertex of~$G$ is
  essential, so $\rho^1_3(G)>\sqrt{2n}$ by Lemma~\ref{Ess}(a). %
Theorem~\ref{thm:rho^1_3-small-edge-separators} below shows an even larger gap.
\end{remark}

We now prove a general lower bound for $\rho^1_3(G)$ in terms of the treewidth of $G$.
Note for comparison that $\pi^1_3(G) \le \chi(G) \le \tw(G)+1$ (the last inequality holds because the graphs of treewidth at most $k$ are exactly partial $k$-trees and the construction of a $k$-tree easily implies that it is $k+1$-vertex-chromatic)
\savespace{(and even $\bar\pi^1_3(G)$ is bounded from above by a function of $\tw(G)$;
see Section~\ref{barPi^1_3})}.
The relationship between $\rho^1_3(G)$ and $\tw(G)$ follows from the
fact that graphs with low parameter $\rho^1_3(G)$ have small separators.
This fact is interesting by itself and has yet another consequence:
Graphs with bounded vertex degree can have linearly large value of~$\rho^1_3(G)$
(hence, the factor of $n$ in the trivial bound $\rho^1_3(G)\le m\le\frac12\,n\,\Delta(G)$
is best possible).

We need the following definitions. Let $W\subseteq V(G)$.
A set of vertices $S\subset V(G)$ is a \emph{balanced \Wseparator} of
the graph $G$ if $|W\cap C|\le|W|/2$ for every
connected component $C$ of $G\setminus S$. Moreover, $S$ is a \emph{strongly balanced
\Wseparator} if there is a partition $W\setminus S=W_1\cup W_2$ such that
$|W_i|\le|W|/2$ for both $i=1,2$ and there is no path between $W_1$ and $W_2$ avoiding $S$.
Let $\sep_W(G)$ (resp.~$\sep^*_W(G)$) denote the minimum $k$ such that $G$ has
a (resp.\ strongly) balanced \Wseparator $S$ with $|S|=k$.
Furthermore, let $\sep(G)=\sep_{V(G)}(G)$ and $\sep^*(G)=\sep^*_{V(G)}(G)$.
Note that $\sep_W(G)\le\sep^*_W(G)$ for any $W$ and, in particular, $\sep(G)\le\sep^*(G)$.

It is known \cite[Theorem 11.17]{FlumG06} that $\sep_W(G)\le\tw(G)+1$ for every $W\subseteq V(G)$.
On the other hand, if $\sep_W(G)\le k$ for all $W$ with $|W|=2k+1$, then $\tw(G)\le3k$.

The \emph{bisection width} $\bw(G)$ of a graph $G$ is the minimum possible
number of edges between two sets of vertices $W_1$ and $W_2$ with $|W_1| =
\lceil n/2 \rceil$ and $|W_2| = \lfloor n/2 \rfloor$  partitioning $V(G)$.
Note that $\sep^*(G)\le\bw(G)+1$.

\begin{theorem}
  \label{thm:rho^1_3-small-edge-separators}
  \begin{enumerate}[(a)]
  \item

$\rho^1_3(G)\ge \bw(G)$.
\item
$\rho^1_3(G)>n/5$ for almost all cubic graphs with $n$ vertices.
\item
$\rho^1_3(G)\ge\sep^*_W(G)$ for every $W\subseteq V(G)$.
\item
$\rho^1_3(G) \ge \tw(G)/3$.
  \end{enumerate}
\end{theorem}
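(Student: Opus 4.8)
\emph{A common sweeping engine.} I would fix an optimal drawing of $G$ inside $k = \rho^1_3(G)$ lines $\ell_1,\dots,\ell_k \subseteq \reals^3$ and, after a generic rotation of the coordinates, assume that the $n$ vertices have pairwise distinct $x$-coordinates and that each $\ell_i$ meets every plane $H_t := \{x = t\}$ in exactly one point. Then $H_t$ meets the whole drawing in at most $k$ points $p_1(t),\dots,p_k(t)$, and since distinct edges of a crossing-free straight-line drawing meet only at common endpoints, each $p_i(t)$ that lies on the drawing is either a vertex or lies in the relative interior of a \emph{unique} edge. The single observation powering everything is: if a path $P$ of $G$ joins a vertex with $x<t$ to a vertex with $x>t$, then its drawing is a connected curve crossing $H_t$ and so passes through some $p_i(t)$; hence either $p_i(t)$ is a vertex of $P$, or $P$ uses the edge whose interior contains $p_i(t)$ and thus runs through both of its endpoints.

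\emph{Parts~(a) and~(c).} For~(a) I would take $t$ strictly between the $x$-coordinates of the $\lfloor n/2\rfloor$th and $(\lfloor n/2\rfloor+1)$th vertex, so that the two sides of $H_t$ give a balanced bipartition of $V(G)$; every edge across it meets $H_t$ in its interior at one of the $k$ points $p_i(t)$, and no point is interior to two edges, so at most $k$ edges cross and $\bw(G)\le k$. (Running the same sweep for all~$t$ even produces a linear order of $V(G)$ in which every prefix--suffix cut has at most~$k$ edges.) For~(c), given $W$, I would pick $t$ so that each open side of $H_t$ carries at most $|W|/2$ vertices of~$W$: between the two middle $W$-vertices if $|W|$ is even, and at the $x$-coordinate of the median $W$-vertex $w_0$ if $|W|$ is odd (losing such a $w_0$ is forced, since a strongly balanced \Wseparator\ must meet~$W$ when $|W|$ is odd). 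Then I would let $S$ contain, for each $i$ with $p_i(t)$ on the drawing, the vertex $p_i(t)$ if it is a vertex and otherwise one endpoint of the edge whose interior contains it; so $|S|\le k$, and $w_0\in S$ in the odd case. Setting $W_1 = \{w\in W : x(w)<t\}\setminus S$ and $W_2 = \{w\in W : x(w)>t\}\setminus S$ gives $W\setminus S = W_1\cup W_2$ with $|W_i|\le|W|/2$, and the observation above shows that every $W_1$--$W_2$ path meets~$S$; hence $S$ is a strongly balanced \Wseparator\ and $\sep^*_W(G)\le k$.

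\emph{Part~(d).} This follows from~(c) and the quoted fact that ``$\sep_W(G)\le k$ for all $W$ with $|W|=2k+1$'' implies $\tw(G)\le 3k$: if $\rho^1_3(G)<\tw(G)/3$, put $k=\rho^1_3(G)$, so $\tw(G)>3k$; then the implication fails, i.e.\ there is a $W$ with $|W|=2k+1$ and $\sep_W(G)\ge k+1$, and hence $\rho^1_3(G)\ge\sep^*_W(G)\ge\sep_W(G)\ge k+1>\rho^1_3(G)$, a contradiction. So $\rho^1_3(G)\ge\tw(G)/3$.

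\emph{Part~(b) --- the main obstacle.} This part is not a corollary of the sweep; note that~(a) only gives $\rho^1_3(G)\ge\bw(G)$, which for a random cubic graph is $\Theta(n)$ but with a constant well below $1/5$, so something sharper is needed. The plan is a counting argument: the number of labelled cubic graphs on $n$ vertices is $n^{(3/2+o(1))n}$, while in any drawing of a cubic graph on at most $n/5$ lines \emph{every} vertex is essential and therefore sits at an intersection point of those lines; I would bound the number of labelled graphs obtainable this way --- over all combinatorially distinct configurations of the $\le n/5$ lines, of the vertices placed on them, and of the edges chosen along each line --- by $n^{cn}$ for some constant $c<3/2$, so that such cubic graphs form a vanishing fraction. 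I expect the crux to be precisely this count: bounding the number of admissible configurations (in particular, how many vertices may lie on concurrent triples of lines) and of edge selections on each line tightly enough to keep the exponent below $3/2$, and it is this balancing that pins down the constant~$1/5$.
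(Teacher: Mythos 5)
Your parts (a), (c) and (d) are correct and essentially identical to the paper's argument: the paper also fixes a drawing on $r=\rho^1_3(G)$ lines and sweeps a plane that is not parallel to any line through two vertices, observing that such a plane meets the drawing in at most $r$ points, which gives $\bw(G)\le r$ for a balanced vertex split and, for a given $W$, yields a strongly balanced \Wseparator{} by putting into $S$ the (at most one) vertex of $W$ on the plane plus one endpoint of each crossed edge; part (d) is then exactly the quoted separator--treewidth implication, as in your contradiction argument. Your handling of stray non-$W$ vertices on the sweep plane (absorb them into $S$) is a harmless variant of the paper's choice to avoid them.

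The genuine gap is part (b). The paper proves it in one line from part (a): it invokes the result of Kostochka and Melnikov that a random cubic graph on $n$ vertices has bisection width at least $n/4.95$ with probability $1-o(1)$, so $\rho^1_3(G)\ge\bw(G)\ge n/4.95>n/5$ for almost all cubic graphs. Your premise that the bisection-width constant for random cubic graphs is ``well below $1/5$'' is exactly what the cited bound denies, and it is the reason the theorem's constant is $1/5$ in the first place; having dismissed this route, you replace it with a counting plan that you do not carry out --- you explicitly defer the crux (bounding the number of realizable configurations by $n^{cn}$ with $c<3/2$), so nothing is proved. Worse, the plan as sketched does not obviously close: with $k\le n/5$ lines there are $\Theta(k^2)=\Theta(n^2)$ candidate intersection points, so the naive count of placements of $n$ labelled vertices already contributes a factor of order $n^{(2-o(1))n}$, which exceeds the $n^{(3/2+o(1))n}$ labelled cubic graphs before you even account for the arrangement and the edge choices; pushing the exponent below $3/2$ would require a substantially less direct count, i.e.\ a new idea. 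So as written, part (b) is missing, while the intended proof is an immediate corollary of your own part (a) together with the cited probabilistic lower bound on $\bw$.
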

\begin{proof}
(a)
Fix a drawing of the graph $G$ on $r=\rho^1_3(G)$ lines in~$\mathbb R^3$.
Choose a plane $L$ that is not parallel
to any of the at most $\binom{n}{2}$ lines passing through two vertices of the drawing.
Let us move $L$ along the orthogonal direction
  until it separates the vertex set of~$G$ into two almost equal
  parts~$W_1$ and~$W_2$.
The plane~$L$ can intersect at
  most $r$ edges of~$G$, which implies that $\bw(G)\le r$.

(b) follows from Part (a) and the fact that a random cubic graph on $n$ vertices
has bisection width at least $n/4.95$ with probability $1-o(1)$
(Kostochka and Melnikov~\cite{KostochkaM93}).

(c) Given $W\subseteq V(G)$, we have to prove that $\sep^*_W(G)\le\rho^1_3(G)$.
Choose a plane $L$ as in the proof of Part (a) and move it
 until it separates $W$ into two equal parts~$W'_1$ and~$W'_2$;
if $|W|$ is odd, then $L$ should contain one vertex $w$ of $W$.
If $|W|$ is even, we can ensure that $L$ does not contain any vertex of $G$.
We now construct a set $S$ as follows.
If $L$ contains a vertex $w\in W$, i.e., $|W|$ is odd, we put $w$ in $S$.
Let $E$ be the set of those edges which are intersected by $L$ but
are not incident to the vertex $w$ (if it exists).
Note that $|E|<r$ if $|W|$ is odd and $|E|\le r$ if $|W|$ is even.
Each of the edges in $E$ contributes one of its incident vertices into $S$.
Note that $|S|\le r$.
Set $W_1=W'_1\setminus S$ and~$W_2=W'_2\setminus S$
and note that there is no edge between these sets of vertices.
Thus, $S$ is a strongly balanced \Wseparator.

(d) follows from (c) by the relationship between treewidth and
balanced separators.
\end{proof}

On the other hand, note that $\rho^1_3(G)$ cannot be bounded from above
by any function of $\tw(G)$. Indeed, by Lemma \ref{Ess}(a) we have $\rho^1_3(T)=\Omega(\sqrt n)$
for every caterpillar $T$ with linearly many vertices of degree 3.
The best possible relation in this direction is
$\rho^1_3(G) \le m < n \tw(G)$.
The factor $n$ cannot be improved here (take $G=K_n$).

\newcommand{\contentExRho}{%
  \begin{enumerate}[(a)]
  \item \textit{
$\rho^1_3(K_n)=\binom{n}{2}$ for any $n\ge 2$.}
  \item \textit{
$pq/2 \le \rho^1_3(K_{p,q}) \le pq$ for any $1 \le p \le q$.}
\end{enumerate}%
}
\wormhole{ex:rho12(K_n)}
\begin{example}\label{ex:rho12(K_n)}
\contentExRho
\end{example}

\begin{question} Can we tighten the bounds for $\rho^1_3(K_{p,q})$?
\end{question}

We now turn to the plane cover number.

\newcommand{\contentExRhoKpq}{%
\textit{For any integers $1 \le p \le q$, it holds that $\rho^2_3(K_{p,q}) = \lceil p/2 \rceil$.}%
}
\wormhole{Kpq}
\begin{example}\label{Kpq}
\contentExRhoKpq
\end{example}

\savespace{Since any planar graph has less than $3n$ edges, we easily
conclude that $\rho^1_3(G)<3n\rho^2_3(G)$. Examples~\ref{ex:rho12(K_n)}
and~\ref{Kpq} show that the factor of $n$ is best possible here.}

Determining the parameter $\rho^2_3(G)$ for complete graphs $G=K_n$
is a much more subtle issue. We are able to determine the asymptotics
of $\rho^2_3(K_n)$ up to a factor of~2.

By a \emph{combinatorial cover} of a graph $G$ we mean a set of subgraphs
$\{G_i\}$ such that every edge of $G$ belongs to $G_i$ for some~$i$.
A \emph{geometric cover} of a crossing-free drawing $d \colon V(K_n)\to\mathbb R^3$ of a
complete graph $K_n$ is a set $\mathcal L$ of planes in $\mathbb R^3$ so that for
each pair of vertices $v_i, v_j \in V(K_n)$ there is a plane $\ell \in
\mathcal L$ containing both points $d(v_i)$ and $d(v_j)$.
This geometric cover $\mathcal L$ induces a combinatorial cover
$\mathcal K_{\mathcal L} = \{G_\ell \mid \ell \in \mathcal L\}$ of the graph
$K_n$, where $G_\ell$ is the subgraph of $K_n$ induced by the set $d^{-1}(\ell)$.
Note that each $G_\ell$ is a $K_s$ subgraph with $s\le4$ (because $K_5$ is not planar).

Let $c(K_n,K_s)$ denote the minimum size of a combinatorial
cover of $K_n$ by $K_s$ subgraphs ($c(K_n,K_s)=0$ if $s>n$).
The asymptotics of the numbers $c(K_n,K_s)$ for $s=3,4$
can be determined via the results about \emph{Steiner systems}
by Kirkman and Hanani~\cite{Bol1,h-ecbibd-AMS61}.
This yields the following bounds for~$\rho^2_3(K_n)$
(see Appendix~\appref{sec:gen-appendix}).

\newcommand{\contentThmKnLowerUpper}{%
For all~$n\ge3$,
$$
\of{1/2+o(1)}n^2 = c(K_n,K_4) \le \rho^2_3(K_n) \le c(K_n,K_3) = \of{1/6+o(1)} n^2.
$$%
}
\wormhole{thm:KnLowerUpper}
\begin{theorem}
  \label{thm:KnLowerUpper}
  \contentThmKnLowerUpper
\end{theorem}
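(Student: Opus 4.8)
The statement has two parts: the sandwich $c(K_n,K_4)\le\rho^2_3(K_n)\le c(K_n,K_3)$, and the asymptotic evaluations of $c(K_n,K_3)$ and $c(K_n,K_4)$. My plan is to obtain the inequalities by passing between geometric covers (planes carrying an optimal drawing) and combinatorial covers by small cliques — most of the needed ingredients are already set up in the paragraph preceding the theorem — and to obtain the asymptotics from the counting bound together with the classical existence theorems for Steiner systems (Kirkman for $S(2,3,n)$, Hanani for $S(2,4,n)$). I would split the proof into three steps.

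First, for the lower bound I would argue as follows. For $n=3$ the inequality reads $0\le1$, so assume $n\ge4$. Fix a crossing-free straight-line drawing $d$ of $K_n$ on $r=\rho^2_3(K_n)$ planes; their collection $\mathcal L$ is a geometric cover, since for every edge $v_iv_j$ the segment $d(v_i)d(v_j)$ lies in $\bigcup\mathcal L$ and hence in a single $\ell\in\mathcal L$, forcing $d(v_i),d(v_j)\in\ell$. As observed before the theorem, each induced clique $G_\ell=K_n[d^{-1}(\ell)]$ has at most four vertices, since a plane carrying five vertices would contain a plane drawing of $K_5$. Discarding the planes carrying at most one vertex and enlarging every remaining $G_\ell$ with fewer than four vertices to an arbitrary $K_4$-subgraph of $K_n$ containing it, I obtain a combinatorial cover of $K_n$ by at most $r$ copies of $K_4$, so $c(K_n,K_4)\le r$.

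Next, for the upper bound, let $\mathcal T$ be a minimum combinatorial cover of $K_n$ by triangles, $|\mathcal T|=c(K_n,K_3)$. I would place the $n$ vertices of $K_n$ in general position in $\reals^3$, for instance on the moment curve $t\mapsto(t,t^2,t^3)$, so that no three are collinear and no four are coplanar. The resulting straight-line drawing is crossing-free: since no three vertices are collinear, adjacent edges meet only at their common endpoint and no vertex lies on a non-incident edge, while two independent edges span four non-coplanar points and so lie on skew, in particular disjoint, lines. Each triangle $T\in\mathcal T$ has affinely independent vertices spanning a unique plane $\pi_T$; since every edge of $K_n$ lies in some $T$, its segment lies on $\pi_T$. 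Hence the at most $|\mathcal T|$ distinct planes $\{\pi_T:T\in\mathcal T\}$ contain the whole drawing, and $\rho^2_3(K_n)\le|\mathcal T|=c(K_n,K_3)$.

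Finally, for the asymptotics: for $s\in\{3,4\}$ each copy of $K_s$ covers $\binom s2$ edges, giving $c(K_n,K_s)\ge\binom n2/\binom s2$, and when $n$ is admissible — $n\equiv1,3\pmod6$ for $s=3$, and $n\equiv1,4\pmod{12}$ for $s=4$ — the graph $K_n$ decomposes into exactly $\binom n2/\binom s2$ copies of $K_s$. For arbitrary $n$ I would take the least admissible $n'\ge n$ (so $n'=n+O(1)$), realise such a decomposition of $K_{n'}$ on a vertex set containing $\{1,\dots,n\}$, keep only the blocks meeting $\{1,\dots,n\}$ in at least two vertices, and in each kept block replace every vertex outside $\{1,\dots,n\}$ by an arbitrary vertex of $\{1,\dots,n\}$: every pair inside $\{1,\dots,n\}$ lies in a unique block of the decomposition, which is kept and still contains that pair afterwards, so the modified blocks cover $K_n$ with at most $\binom{n'}2/\binom s2=(1+o(1))\binom n2/\binom s2$ copies of $K_s$. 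This yields $c(K_n,K_s)=(1+o(1))\binom n2/\binom s2$ and hence the stated asymptotics (for $s=3$ this equals $(1/6+o(1))n^2$). I expect this padding argument to be the main, if modest, obstacle: a $K_s$-decomposition of $K_n$ exists only for $n$ in special residue classes, so one must move to an admissible $n'$ and check that discarding and re-rooting blocks still yields a genuine cover of $K_n$ of the same leading order; the rest is immediate from the setup preceding the theorem.
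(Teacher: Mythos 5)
Your proof follows essentially the same route as the paper's: the lower bound via the combinatorial cover by cliques on at most four vertices induced by a geometric cover of an optimal drawing, the upper bound by covering an arbitrary (e.g.\ moment-curve) crossing-free 3D drawing of $K_n$ with the planes spanned by the triangles of a minimum $K_3$-cover, and the asymptotics from the edge-counting bound together with Kirkman's and Hanani's Steiner-system theorems, with a padding step to non-admissible $n$ that the paper leaves implicit. Note that your value $c(K_n,K_4)=(1/12+o(1))n^2$ agrees with the paper's own computation $n^2/12+\Theta(n)$ in the appendix, so the constant $1/2$ in the displayed statement of the theorem appears to be a typo rather than a discrepancy in your argument.
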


Note that we cannot always realize a combinatorial cover of $K_n$ by
copies of $K_4$ geometrically. For example, $c(K_6,K_4)=3 <
4=\rho^2_3(K_6)$ (see Theorem~\ref{thm:rhoCompleteGraphs}).

In order to determine $\rho^2_3(K_n)$ for particular values of $n$,
we need some properties of geometric and
combinatorial covers of~$K_n$.

\newcommand{\contentLemKnCovTriangle}{%
  Let $d \colon V(K_n)\to\mathbb R^3$ be a crossing-free drawing of $K_n$ and $\mathcal L$ a
  geometric cover of $d$.
  For each 4-vertex graph $G_\ell\in\mathcal K_{\mathcal L}$,
  the set $d(G_\ell)$ not only belongs to a plane~$\ell$, but also
  defines a triangle with an additional vertex in its interior.}
\wormhole{lem:KnCovTriangle}
\begin{lemma}
  \label{lem:KnCovTriangle}
  \contentLemKnCovTriangle
\end{lemma}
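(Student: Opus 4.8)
The plan is to analyze the structure of a 4-vertex combinatorial cover element $G_\ell = K_4$ and show that its four points, already known to lie in a common plane $\ell$, cannot be in convex position. First I would recall that the drawing $d$ is straight-line and crossing-free, so in particular the four edges of $K_4$ drawn among $d(G_\ell)$ together with the two ``diagonals'' — all six segments of the complete graph on these four coplanar points — must be pairwise non-crossing. Since all four points lie in the single plane $\ell$, this is now a purely planar statement: a crossing-free straight-line drawing of $K_4$ in the plane.

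The key step is the classical observation that $K_4$ has a straight-line crossing-free drawing in the plane only when the four points are \emph{not} in convex position; equivalently, one point must lie strictly inside the triangle formed by the other three. Indeed, if the four points $p_1,p_2,p_3,p_4$ were in convex position (forming a convex quadrilateral, after reordering), then the two diagonals of that quadrilateral are both edges of $K_4$ and they cross — contradicting crossing-freeness. (I should also dispose of degenerate sub-cases: no three of the points can be collinear, since a straight-line drawing with three collinear vertices would force the middle one to lie on an edge between the outer two, violating the crossing-free/simple drawing convention; and the points are distinct because $d$ is a drawing.) Hence the only possibility is that exactly one of the four points lies in the interior of the triangle spanned by the other three, which is precisely the assertion that $d(G_\ell)$ ``defines a triangle with an additional vertex in its interior.''

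I would then just need to phrase this cleanly: let $G_\ell$ have vertex set $\{v_1,v_2,v_3,v_4\}$ with images $p_i = d(v_i)$; these are four distinct coplanar points, no three collinear, and the induced drawing of $K_4$ on them is crossing-free; by the planar $K_4$ fact one point, say $p_4$, is in the interior of $\triangle p_1 p_2 p_3$; this is the claimed triangle-with-interior-point configuration. The main obstacle — really the only thing requiring care — is the handling of degeneracies (collinear triples, or a point on the boundary of the triangle rather than strictly inside), which one rules out using the standing assumption that drawings are straight-line and crossing-free in the strict sense (vertices not lying on non-incident edges). Everything else is the standard fact that a convex quadrilateral's diagonals intersect, which needs no further justification.
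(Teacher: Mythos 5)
Your argument is correct and is essentially the same as the paper's proof: since the four coplanar points carry a crossing-free straight-line drawing of $K_4$, they cannot be in convex position (the diagonals of a convex quadrilateral would cross), so one point lies inside the triangle of the other three. Your extra care with degenerate cases (collinear triples, points on the triangle boundary) is a fine addition but not a different approach.
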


\newcommand{\contentLemKnCovCommon}{%
  Let $d \colon V(K_n)\to\mathbb R^3$ be a crossing-free drawing of $K_n$ and $\mathcal L$ a
  geometric cover of $d$.
  No two different 4-vertex graphs $G_\ell,G_{\ell'}\in\mathcal K_{\mathcal
  L}$ can have three common vertices.
}
\wormhole{lem:KnCovCommon}
\begin{lemma}
  \label{lem:KnCovCommon}
  \contentLemKnCovCommon
\end{lemma}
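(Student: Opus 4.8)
The plan is to argue by contradiction. Suppose $G_\ell$ and $G_{\ell'}$ are distinct 4-vertex members of $\mathcal{K}_{\mathcal L}$ sharing three common vertices, say $a,b,c$, and let $x$ (resp.\ $y$) be the fourth vertex of $G_\ell$ (resp.\ $G_{\ell'}$), with $x\ne y$. The points $d(a),d(b),d(c)$ are not collinear: if they were, then in $G_\ell$ the vertex $x$ together with this line would force a crossing among the edges of the $K_4$ drawn in $\ell$, contradicting that $d(G_\ell)$ is a planar drawing (alternatively, apply Lemma~\ref{lem:KnCovTriangle}, which says the four points of $G_\ell$ form a triangle with an interior point, so no three of them are collinear). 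Hence $d(a),d(b),d(c)$ span a unique plane, and since this triple lies in both $\ell$ and $\ell'$, we get $\ell=\ell'$. But then $d(x)\in\ell=\ell'$ and $d(y)\in\ell'$, so all five points $d(a),d(b),d(c),d(x),d(y)$ lie in the single plane $\ell$; the induced subgraph on $\{a,b,c,x,y\}$ is $K_5$, drawn straight-line and crossing-free inside $\ell$ (as a subdrawing of the crossing-free drawing $d$), which is impossible since $K_5$ is not planar.

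Carrying this out, the steps in order are: (1) name the shared vertices and the two private vertices, and record $x\neq y$; (2) establish that three shared vertices of a 4-vertex member cannot be collinear, invoking Lemma~\ref{lem:KnCovTriangle} (or re-deriving it from crossing-freeness of the $K_4$ drawn in the plane); (3) conclude $\ell = \ell'$ from the uniqueness of the plane through three non-collinear points; (4) deduce that five vertices lie in the common plane and derive a $K_5$ in the plane, the desired contradiction.

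I expect the only delicate point to be step~(2): one must be careful that "four points in a plane with no crossing among the $\binom{4}{2}=6$ straight-line edges" really does forbid three of the points being collinear. This is exactly the content of Lemma~\ref{lem:KnCovTriangle}, so the cleanest route is simply to cite it; if one prefers a self-contained argument, note that three collinear points $d(a),d(b),d(c)$ with, say, $d(b)$ between $d(a)$ and $d(c)$ make the edge $ac$ pass through $d(b)$, which is already a forbidden configuration in a crossing-free straight-line drawing. Everything else is routine.
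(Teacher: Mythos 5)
Your proposal is correct and follows essentially the same route as the paper: note that the three shared vertices cannot be mapped to collinear points, so they span a unique plane containing both $4$-vertex sets, forcing five points of $d(K_n)$ into one plane and hence a straight-line crossing-free drawing of $K_5$, a contradiction. The only cosmetic difference is that you spell out the $K_5$ step and the non-collinearity argument (via Lemma~\ref{lem:KnCovTriangle} or directly), which the paper leaves implicit.
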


\newcommand{\contentThmRhoCompleteGraphs}{%
  For $n \le 9$, the value of $\rho^2_3(K_n)$ is bounded by the numbers in
  Table~\ref{tab:Kn}.
}
\wormhole{thm:rhoCompleteGraphs}
\begin{theorem}
  \label{thm:rhoCompleteGraphs}
  \contentThmRhoCompleteGraphs
  \begin{table}[tbh]
    \caption{Lower and upper bounds for $\rho^2_3(K_n)$ for small
      values of~$n$.}
    \label{tab:Kn}
    \centering
    \setlength{\tabcolsep}{0.175cm}
    \begin{tabular}{crrrrrr}\hline\noalign{\smallskip}
      $n$   & 4 & 5 & 6 & 7 & 8 & 9 \\ \midrule
      $\ge$ & 1 & 3 & 4 & 6 & 6 & 7 \\
      $\le$ & 1 & 3 & 4 & 6 & 7 &   \\ \bottomrule
    \end{tabular}
  \end{table}
\end{theorem}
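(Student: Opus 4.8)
The plan splits the proof into the upper bounds (the ``$\le$'' row of Table~\ref{tab:Kn}) and the lower bounds (the ``$\ge$'' row). \textbf{For the upper bounds}, the entry for $K_4$ is immediate since $K_4$ is planar, and the drawing of $K_6$ on four planes is the one depicted in Fig.~\ref{fig:K6-photo}. For $K_5$ I would draw a $K_4$ on vertices $u_1,\dots,u_4$ in a base plane $\pi_0$ (as a triangle with the fourth vertex in its interior) and place the fifth vertex $v$ off $\pi_0$; then the two planes through $v,u_1,u_2$ and through $v,u_3,u_4$ carry the remaining four edges, for three planes in total, and no crossing arises because every edge at $v$ meets $\pi_0$ only in its $u$-endpoint. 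For $K_7$ (six planes) and $K_8$ (seven planes) I would give similar but bulkier ad hoc constructions---fix a complete subgraph drawn optimally in a base plane and attach each further vertex through a few auxiliary planes, each carrying all the new edges at that vertex---and relegate the routine check that crossings can be avoided to a figure or appendix.

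\textbf{For the lower bounds}, the starting point is Theorem~\ref{thm:KnLowerUpper}, which gives $\rho^2_3(K_n)\ge c(K_n,K_4)$, where $c(K_n,K_4)$ equals the covering number $C(n,4,2)$ (the minimum number of $4$-subsets of an $n$-set hitting every pair). Using the Schönheim bound together with small explicit coverings I would record $c(K_5,K_4)=3$, $c(K_6,K_4)=3$, $c(K_7,K_4)=5$, $c(K_8,K_4)=6$, and $c(K_9,K_4)\ge7$; this already settles the lower bounds for $n=5,8,9$. For $n=6$ and $n=7$ the combinatorial bound is one short, so the extra plane must be extracted geometrically. The uniform tool here is that a geometric cover of a crossing-free drawing of $K_n$ by $r$ planes induces a combinatorial cover by $r$ cliques $G_\ell$ of order $\le 4$ that pairwise share at most two vertices (three shared vertices would lie on the intersection line of two planes, so their triangle could not be drawn crossing-free; this is also Lemma~\ref{lem:KnCovCommon} when both cliques have order $4$).

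\textbf{The case $n=6$.} Suppose $K_6$ had a geometric cover by three planes $\ell_1,\ell_2,\ell_3$. A short count on the induced cover shows that all three cliques are $K_4$'s, that they pairwise share exactly two vertices, and that the three intersection pairs $p_1,p_2,p_3$ partition $V(K_6)$, so the cliques are $p_1\cup p_2$, $p_2\cup p_3$, and $p_3\cup p_1$. Geometrically, $p_k$ lies on one of the three lines $\ell_i\cap\ell_j$, and the three pairwise-intersection lines of three distinct planes are either concurrent (in $\ell_1\cap\ell_2\cap\ell_3$), mutually parallel, or all equal; the last case would make all six vertices collinear. If the lines are parallel, some $G_{\ell_i}$ has its four vertices on two parallel lines, hence in convex position, contradicting Lemma~\ref{lem:KnCovTriangle}. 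If they are concurrent with common point $P$, I would invoke the elementary fact that four points lying two-and-two on two lines through $P$ (none of them at $P$) form ``a triangle with an interior point'' precisely when $P$ lies strictly between the two points on exactly one of the two lines; writing $\epsilon_k\in\{0,1\}$ for whether $P$ separates the pair $p_k$, Lemma~\ref{lem:KnCovTriangle} applied to the three cliques forces $\epsilon_1\oplus\epsilon_2=\epsilon_2\oplus\epsilon_3=\epsilon_3\oplus\epsilon_1=1$, which is impossible mod $2$. Hence $\rho^2_3(K_6)\ge4$.

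\textbf{The case $n=7$.} I would again argue by contradiction from a cover by five planes. A counting argument---each vertex needs at least two planes, and if it lies in exactly two then both must be $K_4$'s meeting only in that vertex---shows that at least three of the five induced cliques are $K_4$'s, after which one enumerates the (few) combinatorial covers of $K_7$ by five cliques of order $\le 4$ with all pairwise intersections of size $\le 2$. The plan is to show that each such cover contains three $K_4$'s whose vertex sets, restricted to six of the seven vertices, form exactly the pair-partition configuration ruled out in the $n=6$ step, so that the same parity obstruction (via Lemma~\ref{lem:KnCovTriangle}) applies, and to dispatch the handful of remaining configurations by further applications of Lemmas~\ref{lem:KnCovTriangle} and~\ref{lem:KnCovCommon}. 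This gives $\rho^2_3(K_7)\ge6$, matching the six-plane drawing and completing Table~\ref{tab:Kn}. I expect this last case---the combinatorial enumeration of the admissible five-clique covers of $K_7$ and the verification that each one embeds the $n=6$ obstruction---to be the main obstacle; the rest is either classical (covering numbers) or a short geometric check.
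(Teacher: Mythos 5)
Your lower bounds for $n=5,8,9$ via $\rho^2_3(K_n)\ge c(K_n,K_4)$ and the Sch\"onheim bound are fine (the paper instead gets $n=8$ from the monotonicity $\rho^2_3(K_8)\ge\rho^2_3(K_7)$ and proves $c(K_9,K_4)\ge7$ by a direct incidence count), and your $n=6$ lower bound is a correct, genuinely different route: the paper also reduces to three $K_4$'s whose pairwise intersections are single edges, but then derives the contradiction from Lemma~\ref{lem:KnCovTriangle} together with the Krein--Milman theorem, rather than from your equal/parallel/concurrent trichotomy of the three intersection lines and the parity count, which is a nice self-contained alternative. The genuine gap is the case $n=7$, in both directions. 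For the lower bound you reduce to an enumeration of all covers of $K_7$ by five cliques of order at most $4$ with pairwise intersections of size at most $2$, plus the unproved claim that each such cover exhibits the $n=6$ pair-partition obstruction; this enumeration is exactly what you defer, and it is the core of the case (admissible covers do exist, e.g.\ two $K_4$'s through a vertex $v_0$ plus three $(2{+}2)$-cliques on the remaining six vertices, as well as covers using triangles, so the structural claim needs an actual proof). The paper avoids the enumeration entirely: counting vertex--clique incidences ($\le 5\cdot4=20<21$) yields a vertex $v_0$ lying in exactly two cliques, both $K_4$'s meeting only in $v_0$; if the remaining at most three cliques covered every edge of $K_7-v_0\cong K_6$, their planes would geometrically cover a crossing-free drawing of $K_6$ by at most three planes, contradicting the $n=6$ bound, and an uncovered edge inside one of the two triples then quickly forces an uncovered cross edge. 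Adopting this reduction would turn your declared ``main obstacle'' into a short argument.

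On the upper-bound side, your generic attachment scheme as described does not reach the values in Table~\ref{tab:Kn} for $n=7$ and $n=8$. If the new vertex is placed off all existing planes, every auxiliary plane through it carries at most two of its new edges, so extending the four-plane drawing of $K_6$ gives $4+\lceil 6/2\rceil=7$ planes for $K_7$, not $6$. The missing idea (Lemma~\ref{lem:RhoKn+1}(b)) is to place the new vertex \emph{on} one of the existing planes, inside the triangle formed by the three vertices that this plane contains in the drawing of Fig.~\ref{fig:K6-photo}; then three of its six new edges are already covered and $\lceil 3/2\rceil=2$ extra planes suffice. Likewise $\rho^2_3(K_8)\le 7$ requires inserting two vertices inside triangles lying on planes of the $K_6$ drawing and covering the remaining new edges by three further planes; the naive ``one batch of auxiliary planes per new vertex'' scheme overshoots. (Your $K_5\le3$ construction is fine and is exactly Lemma~\ref{lem:RhoKn+1}(a).) A small further slip: $c(K_7,K_4)=5$ does not follow from the Sch\"onheim bound, which gives only $4$; but since you argue $n=7$ geometrically anyway, that value is never actually needed.
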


\begin{proof}
  Here, we show only the bounds for $n=6$.  For the remaining proofs,
  see Appendix~\appref{sec:gen-appendix}.
  Fig.~\ref{fig:K6-photo} shows that $\rho^2_3(K_6)\le 4$. Now we
  show that $\rho^2_3(K_6)\ge 4$. Assume that $\rho^2_3(K_6)<4$.
  Consider a combinatorial cover $\mathcal
  K_{\mathcal L}$ of $K_6$ by its complete planar subgraphs
  corresponding to a geometric cover $\mathcal L$ of its drawing by
  $3$ planes.
  Graph $K_6$ has $15$ edges, so to cover it by complete
  planar graphs we have to use at least two copies of $K_4$ and,
  additionally, a copy of $K_k$ for $3 \leq k \leq 4$. But, since each two
  copies of~$K_4$ in~$K_6$ have a common edge (and by
  Lemma~\ref{lem:KnCovCommon} this edge is unique), the cover
  $\mathcal K_{\mathcal L}$ consists of three copies of $K_4$.
  Denote these copies by $K_4^1$, $K_4^2$, and $K_4^3$. By
  Lemma~\ref{lem:KnCovTriangle}, for each $i$, $d(K_4^i)$ is a triangle
  with an additional vertex $d(v_i)$ in its interior.  Let
  $V_0=\{v_1,v_2,v_3\}$.  By the Krein--Milman theorem~\cite{km-eprcs-SM40,Wik4},
  the convex hull $\Conv(d(K_6))$ is the convex hull
  $\Conv(d(V(K_6))\setminus d(V_0))$. If
  all the vertices $v_i$ are mutually distinct then the set
  $d(V(K_6))\setminus d(V_0)$ is a triangle, so the drawing $d$ is planar,
  a contradiction.  Hence, $v_i=v_j$ for some $i\ne j$.  Let $k$ be
  the third index that is distinct from both $i$ and $j$. Since
  graphs $K_4^i$ and $K_4^j$ have exactly one common edge, this is an
  edge $(v_i,v)$ for some vertex $v$ of $K_6$ (see
  Fig.~\ref{fig:K6-photo} with $u_4$ for $v_i$ and $u_1$ for $v$).
  Let $V(K_4^i)=\{v,v_i,v_i^1,v_i^2\}$ and
  $V(K_4^j)=\{v,v_j,v_j^1,v_j^2\}$.  Since the union $K_4^1\cup
  K_4^2\cup K_4^3$ covers all edges of~$K_6$, all edges
  $(v_i^1,v_j^1)$, $(v_i^1,v_j^2)$, $(v_i^2,v_j^1)$, and
  $(v_i^2,v_j^2)$ belong to~$K_4^k$. Thus
  $V(K_4^k)=\{v_i^1,v_i^2,v_j^1,v_j^2\}$.  But vertices $v_i^1$,
  $v_i^2$, $v_j^1$, and $v_j^2$ are in convex position (see
  Fig.~\ref{fig:K6-photo}), a contradiction to
  Lemma~\ref{lem:KnCovTriangle}.
\end{proof}

\begin{question}
Estimate $\rho^2_3(G)$ (and $\pi^2_3(G)$?) from above for toroidal and planar projective graphs.
\end{question}

\begin{question}
$\rho^2_3(G)=O(1)$ for cubic graphs?
\end{question}

\section{The Affine Cover Numbers of Planar Graphs
  (\texorpdfstring{$\reals^2$ and $\reals^3$}{R² and R³})}\label{s:planar}

\subsection{Placing Vertices on Few Lines
  (\texorpdfstring{$\pi^1_2$ and $\pi^1_3$}{π12 and π13})}%

\savespace{
Call a drawing \emph{outerplanar} if all the vertices lie on
the outer face.  An \emph{outerplanar graph} is a graph admitting
an outerplanar drawing.  Note that this definition does not depend on
whether straight line or curved drawings are considered.
}

Combining Corollary~\ref{PiChi} with the 4-color theorem yields
$\pi^1_3(G) \le 4$ for planar graphs.
Given that outerplanar graphs are 3-colorable (they are partial
2-trees), we obtain $\pi^1_3(G) \le 3$ for these graphs.
These bounds can be improved using
the equality $\pi^1_3(G)=\lva G$ of Theorem \ref{PiChar0}
and known results on the linear vertex arboricity:
  \begin{enumerate}[(a)]
  \item For any planar graph~$G$, it holds that $\pi^1_3(G) \le 3$
    \cite{Goddard91,Poh90}.
  \item \label{enum:chartrand-kronk}
    There is a planar graph $G$ with $\pi^1_3(G)=3$ \cite{ck-papg-JLMS69}.
  \item \label{enum:outerplanar}
    For any outerplanar graph $G$, $\pi^1_3(G) \le 2$
    \cite{AkiyamaEGW89,BroereM85,Wang88}.
  \end{enumerate}

According to Chen and He~\cite{ChenH96}, the
upper bound $\lva{G}\le 3$ for planar graphs by Poh~\cite{Poh90} is
constructive and yields a polynomial-time algorithm for partitioning
the vertex set of a given planar graph into three parts, each inducing
a linear forest.
By combining this with the construction given in Theorem~\ref{PiChar0}, we
obtain a polynomial-time algorithm that draws a given planar graph such
that the vertex set ``sits'' on three lines.

The example of Chartrand and Kronk~\cite{ck-papg-JLMS69}
is a 21-vertex planar graph whose \emph{vertex arboricity} is~3,
which means that the vertex set of this graph cannot even be split
into two parts both inducing (not necessarily linear) forests.
Raspaud and Wang \cite{RaspaudW08} showed that all 20-vertex planar graphs have vertex arboricity at most~2.
We now observe that a smaller example of a planar graph attaining the extremal value $\pi^1_3(G)=3$
can be found by examining the \emph{linear} vertex arboricity.

\newcommand{\contentRemPlanarLvaThree}{%
  The planar 9-vertex graph $G$
  in Fig.~\ref{fig:three-lines} has $\pi^1_3(G)=\lva G=3$.%
}
\wormhole{rem:PlanarLvaThree}
\begin{example}
  \label{rem:PlanarLvaThree}
  \contentRemPlanarLvaThree
\ (See a proof in Appendix~\appref{sec:plan-appendix}.)
\end{example}

\begin{question}\label{NineLVA}
  Is there a planar graph~$G$ with less than 9 vertices and
  $\lva{G}=3$?
\end{question}

Now we show lower bounds for the parameter~$\pi^1_2(G)$.
\savespace{
Recall that the \emph{dual} of a 3-connected planar graph $G$ is a graph $G^*$ whose
vertices are the faces of $G$ (represented by their facial cycles).
The definition does not depend on a particular embedding of $G$ in the plane
by the Whitney theorem, which says that all embeddings of a 3-connected planar graph
in the sphere are equivalent up to a homeomorphism
(therefore, the set of facial cycles of~$G$ does not
depend on a particular plane embedding).
Two faces are adjacent in $G^*$ iff they share a common edge.  The
dual graph~$G^*$ is also a polyhedral graph, and $(G^*)^*$, is
isomorphic to~$G$.  In a cubic graph, every vertex has
degree~3; the dual of any cubic 3-connected planar graph is a triangulation.
Conversely, the dual of a triangulation is a cubic graph.
}

Recall that the \emph{circumference} of a graph $G$, denoted by $c(G)$, is the
length of a longest cycle in $G$.
For a planar graph $G$, let $\bar v(G)$ denote the maximum $k$ such that
$G$ has a straight-line plane drawing with $k$ collinear vertices.

\begin{lemma}\label{lem:CycleCover} Let $G$ be a planar graph. Then $\pi^1_2(G)\ge n/\bar v(G)$. If $G$ is a triangulation then
    $\pi^1_2(G)\ge (2n-4)/c(G^*)$.
\end{lemma}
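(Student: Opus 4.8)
The plan is to prove the two inequalities separately: the first by a pigeonhole argument, the second by extracting from each covering line a long cycle in the dual graph $G^*$.

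For $\pi^1_2(G)\ge n/\bar v(G)$: fix a plane straight-line drawing of $G$ realizing $k:=\pi^1_2(G)$, i.e.\ one whose $n$ vertices all lie on $k$ lines. By pigeonhole one of these lines carries at least $n/k$ vertices, and these vertices are collinear in a plane straight-line drawing of $G$; hence $\bar v(G)\ge n/k$, that is, $\pi^1_2(G)=k\ge n/\bar v(G)$.

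For the triangulation bound, fix a drawing realizing $k=\pi^1_2(G)$ on lines $\ell_1,\dots,\ell_k$ and let $s_j$ be the number of vertices on $\ell_j$, so $\sum_j s_j\ge n$. Since a triangulation on $n$ vertices has $2n-4$ faces (so $G^*$ has $2n-4$ vertices), it suffices to show that a single line $\ell$ carrying $s$ vertices forces a cycle of length at least $2s$ in $G^*$: then $c(G^*)\ge 2s_j$ for every $j$, whence $n\le\sum_j s_j\le k\,c(G^*)/2$ and $\pi^1_2(G)=k\ge 2n/c(G^*)>(2n-4)/c(G^*)$. To produce this cycle I would order the vertices $v_1,\dots,v_s$ of $\ell$ as they occur along $\ell$ and perturb $\ell$ to a simple curve $\ell^{+}$ that runs immediately above $\ell$, detouring over the top of each $v_i$ (i.e.\ through the faces incident to $v_i$ lying above $\ell$) and closed up through the outer face. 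The faces that $\ell^{+}$ traverses, in order, form a closed walk $W$ in $G^*$. First I would check that $W$ is a cycle: a line meets a convex triangular face in a single interval, so each face pierced by $\ell$ is visited by $\ell^{+}$ exactly once; a face visited only while $\ell^{+}$ detours over some $v_i$ touches $\ell$ only at $v_i$, hence is not pierced and, crucially, is incident to no other $v_{i'}$, because $v_i$ and $v_{i'}$ with $|i-i'|\ge 2$ are non-adjacent (the straight edge between them would have to lie on $\ell$ and pass through an intermediate vertex); and the outer face is visited once.

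For the length bound $|W|\ge 2s$ the key mechanism is that between two \emph{non-adjacent} consecutive vertices $v_i,v_{i+1}$ the line $\ell$ pierces at least two faces: the face $\ell$ enters just after $v_i$ and the face it leaves just before $v_{i+1}$ are distinct, since a common face would be a triangle containing the non-adjacent pair $\{v_i,v_{i+1}\}$. A gap $v_iv_{i+1}$ that is an \emph{edge} lying on $\ell$ contributes fewer pierced faces, but the triangulation structure compensates: a vertex interior to a run of two or more consecutive $\ell$-edges has degree at least $4$ (its wedges above and below $\ell$ must each be triangulated non-degenerately, forcing a neighbour above and one below besides the two $\ell$-neighbours), and a length-one run of $\ell$-edges both of whose endpoints have degree $3$ forces one endpoint onto the (three-vertex) outer face of $G$; collecting the faces produced by these high-degree vertices and outer vertices should recover $|W|\ge 2s$. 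The step I expect to be the main obstacle is precisely this last accounting: since $c(G^*)$ can be far smaller than $n$ for a cubic planar $G^*$, no additive slack is affordable and one needs $|W|\ge 2s$ exactly, which calls for a careful case analysis organised by the maximal runs of consecutive $\ell$-edges --- balancing, run by run, the pierced faces that are lost against the degree-$\ge 4$ run-interior vertices and the bounded supply of outer-face vertices --- and may well require running the symmetric curve $\ell^{-}$ below $\ell$ in parallel.
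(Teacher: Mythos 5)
Your first inequality is fine and is exactly the pigeonhole argument the paper treats as obvious. The second part, however, has a genuine gap, and it sits precisely at the step you flag: the per-line claim that a line carrying $s$ vertices forces a single dual cycle of length at least $2s$ is not delivered by your construction. The walk traced by $\ell^{+}$ has length equal to the number of edges it crosses, namely the edges leaving some $v_i$ into the open half-plane above $\ell$ plus the other edges crossing $\ell$ transversally, and this can be as small as $s+1$. Concretely, draw a path $v_1\cdots v_s$ on $\ell$, add an apex $a$ above and an apex $b$ below joined to every $v_i$, and add the edge $ab$ (drawn passing to the left of $v_1$); this is a triangulation, and $\ell^{+}$ crosses only the $s$ edges $av_i$ and the edge $ab$, so $W$ is a cycle of length $s+1$, far short of $2s$. (Here $c(G^*)=2s$ does hold, but only via a cycle that goes out along the faces above $\ell$ and returns along the faces below --- a route your one-sided curve never takes; running $\ell^{-}$ as well produces a second cycle of length $s+1$, and you cannot add the lengths of two cycles when lower-bounding $c(G^*)$.) The compensation you hope to get from degree-$\ge 4$ run-interior vertices does not materialize on the upper side: in this example every interior $v_i$ has degree $4$ yet contributes a single upper crossing. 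Note also that your target claim is equivalent to $\bar v(G)\le c(G^*)/2$ for triangulations, a statement substantially stronger than the lemma itself (it would make the second inequality a consequence of the first), and nothing in the outline establishes it; finally, if some $v_i$ lies on the outer triangle with no edge above it, $\ell^{+}$ re-enters the outer face in mid-walk, so $W$ need not even be a cycle.

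The paper's proof avoids per-line vertex counting altogether. For each of the $k=\pi^1_2(G)$ cover lines $\ell$ it takes two parallel copies $\ell',\ell''$, close enough to $\ell$ that together they meet the interior of every face touched by $\ell$, and passing through no vertex. Each copy crosses only interiors of edges, visits each bounded (triangular, hence convex) face at most once and the outer face once up to closing at infinity, and therefore traces a cycle in $G^*$ through the outer face. The counting is then global rather than per line: every face of the triangulation has its three corners on at least two distinct cover lines (the corners are not collinear), so its interior is crossed by copies of at least two lines, whence the $2k$ cycles have total length at least $2(2n-4)$; since each cycle has length at most $c(G^*)$, this gives $2k\,c(G^*)\ge 2(2n-4)$, i.e.\ $\pi^1_2(G)\ge (2n-4)/c(G^*)$. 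This double counting of faces is what replaces the delicate run-by-run accounting that your approach would require and cannot, as set up, complete.
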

\begin{proof} Since the first claim is obvious, we prove only the second.
  Let $\gamma(G)$ denote the minimum number of cycles in the dual graph~$G^*$
  sharing a common vertex and covering every vertex of $G^*$ at least twice.
  Note that, as $G$ is a triangulation, $\gamma(G)\ge(4n-8)/c(G^*)$,
  where $2n-4$ is the number of vertices in $G^*$ (as a consequence of Euler's
  formula).
  We now show $\pi^1_2(G)\ge\gamma(G)/2$, which implies the claimed result.

  Given a drawing realizing $\pi^1_2(G)$ with line set~$\mathcal L$,
  for every line $\ell \in \mathcal L$, draw two parallel
  lines~$\ell',\ell''$ sufficiently close to $\ell$ such that they
  together intersect the interiors of all faces touched by~$\ell$
and do not go through any vertex of the drawing.
Note that $\ell'$ and $\ell''$ cross boundaries of faces only via inner points of edges.
Each such crossing corresponds to a transition from one vertex to another
along an edge in the dual graph $G^*$. Since all the faces of $G$ are triangles,
each of them is visited by each of $\ell'$ and $\ell''$ at most once.
Therefore, the faces crossed along $\ell'$ and the faces crossed along $\ell''$,
among them the outer face of~$G$, each form a cycle in~$G^*$.
It remains to note that every face $f$
of the graph $G$ is crossed at least twice, because $f$ is intersected by at least two different lines from
$\mathcal L$ and each of these two lines has a parallel copy that crosses~$f$.
\end{proof}

An infinite family of triangulations $G$
with $\bar v(G)\le n^{0.99}$ is constructed in~\cite{RavskyV11}.
By the first part Lemma \ref{lem:CycleCover} this implies that
there are infinitely many triangulations $G$ with $\pi^1_2(G)\ge n^{0.01}$.
The second part of Lemma \ref{lem:CycleCover} along with an estimate of
Gr\"unbaum and Walther \cite{GruenbaumW73} (that was used also in~\cite{RavskyV11})
yields a stronger result.

\begin{theorem}\label{thm:pi12large}
There are infinitely many triangulations $G$ with $\Delta(G)\le12$
and $\pi^1_2(G)\ge n^{0.01}$.
\end{theorem}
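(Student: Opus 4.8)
The plan is to combine the second part of Lemma~\ref{lem:CycleCover} with a known construction of 3-connected cubic planar graphs that simultaneously have small circumference and bounded face sizes, and then dualize. Concretely, I would start from the Gr\"unbaum--Walther result~\cite{GruenbaumW73}: there is an infinite family of 3-connected cubic planar graphs $H$ whose circumference is sublinear, in fact $c(H)\le |V(H)|^{\alpha}$ for some constant $\alpha<1$ (the precise exponent in~\cite{GruenbaumW73} is what makes the $n^{0.01}$ bound go through after the arithmetic below). The key extra feature I need, which the Gr\"unbaum--Walther graphs provide, is that these cubic graphs have \emph{bounded face size}: every face is bounded by a cycle of length at most some absolute constant (in their construction the faces are triangles, quadrilaterals, and pentagons, so face size $\le 5$ suffices; even a weaker bound would do).

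Next I would dualize. If $H$ is a 3-connected cubic planar graph with all faces of size at most $6$, then $G := H^*$ is a triangulation (every vertex of $H$ has degree $3$, so every face of $G=H^*$ is a triangle) and, crucially, $\maxdeg(G) = \max_{f}\,|f| \le$ (the maximum face size of $H$); if that bound is $\le 12$ we get $\Delta(G)\le 12$ as required. (If the Gr\"unbaum--Walther graphs only give face size $\le 5$, we are comfortably inside the budget; I would state the construction with whatever face-size bound it actually yields and check it is at most $12$.) The number of vertices of $G$ is the number of faces of $H$, which by Euler's formula is $|V(H)|/2 + 2$, so it is linear in $|V(H)|$; call it $n$. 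And $G^* = (H^*)^* \cong H$ by the Whitney theorem, so $c(G^*) = c(H) \le |V(H)|^{\alpha} = \Theta(n^{\alpha})$.

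Now I would plug into Lemma~\ref{lem:CycleCover}(second part): since $G$ is a triangulation,
\[
\pi^1_2(G)\;\ge\;\frac{2n-4}{c(G^*)}\;\ge\;\frac{2n-4}{\Theta(n^{\alpha})}\;=\;\Omega\!\bigl(n^{1-\alpha}\bigr).
\]
For the constant $\alpha$ coming out of Gr\"unbaum--Walther, $1-\alpha$ is a small positive constant, and in particular, for all sufficiently large members of the family, $\pi^1_2(G)\ge n^{0.01}$ (shrinking the exponent from $1-\alpha$ down to $0.01$ only makes the inequality easier, and absorbs the constant in the $\Omega$). Since the family is infinite and $\Delta(G)\le 12$ throughout, this proves the theorem.

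The main obstacle is making sure the Gr\"unbaum--Walther family really delivers \emph{both} properties at once with explicit enough constants: a sublinear circumference bound $c(H) = O(|V(H)|^{\alpha})$ with $\alpha$ small enough that $1-\alpha > 0.01$, \emph{and} an absolute bound on face sizes so that the dual triangulation has $\Delta \le 12$. I expect this to be exactly the content that~\cite{GruenbaumW73} (as already used in~\cite{RavskyV11}) supplies — their construction iterates a fixed gadget, which keeps both the face sizes and the "per-level" structure under control while driving the circumference exponent below $1$ — so the work is really just bookkeeping: tracking the exponent through the iteration and reading off the maximum face size of the gadget. A secondary, purely routine point is the Euler-formula computation relating $|V(G)|$, $|V(G^*)|=2n-4$, and $|V(H)|$, and confirming the "$\ge 2$ lines cross each face" hypothesis implicit in Lemma~\ref{lem:CycleCover} is automatically satisfied here (it is, since $\pi^1_2 \ge 2$ already forces it, and a triangulation on more than three vertices certainly has $\pi^1_2 \ge 2$).
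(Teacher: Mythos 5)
Your proposal is correct and follows essentially the same route as the paper: the paper also dualizes, invokes the Gr\"unbaum--Walther shortness-exponent bound $\sigma\le\log 26/\log 27=0.988\ldots$ for cubic 3-connected planar graphs with faces of size at most $12$ (exactly the duals of triangulations with $\Delta\le 12$), and concludes via the second part of Lemma~\ref{lem:CycleCover}. The bookkeeping you defer (face-size bound and the exponent arithmetic giving $1-\sigma>0.01$) is precisely what the cited result supplies, so there is no gap.
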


\begin{proof}
The \emph{shortness exponent} $\sigma_{\mathcal G}$
of a class $\mathcal G$ of graphs is the infimum of the set of the reals
$\liminf_{i\to\infty}{\log c(H_i)}/{\log |V(H_i)|}$
for all sequences of $H_i\in\mathcal G$ such that $|V(H_i)|<|V(H_{i+1})|$.
Thus, for each~$\epsilon>0$, there are infinitely many graphs $H\in\mathcal G$
with $c(H)<|V(H)|^{\sigma_{\mathcal G}+\epsilon}$.
The dual graphs of triangulations with maximum vertex degree at most 12
are exactly the cubic 3-connected planar graphs with each face incident to
at most 12 edges (this parameter is well defined by the Whitney theorem).
Let $\sigma$ denote the shortness exponent for this class of graphs.
It is known \cite{GruenbaumW73} that $\sigma\le\frac{\log26}{\log27}=0.988\ldots$.
The theorem follows from this bound by the second part of~Lemma \ref{lem:CycleCover}.
\end{proof}

\begin{problem}
Does $\pi^1_2(G)=o(n)$ hold for all planar graphs $G$?
\end{problem}

\begin{question}
  Let~$G$ be a planar graph with $\maxdeg(G)=O(1)$.
Does this imply $\pi^1_2(G)=o(n)$?
Does $\maxdeg(G)=3$ imply $\pi^1_2(G)=O(1)$?
Note that the proof of Theorem \ref{thm:pi12large} cannot be extended to
planar graphs of maximum vertex degree 6 because the shortness exponent
of the cubic 3-connected planar graphs with each faces incident to
at most 6 edges is known to be equal to~1~\cite{Ewald73}.
\end{question}

A \emph{track drawing}~\cite{flw-sldri-JGAA03} of a graph is
a plane drawing for which there are parallel lines, called \emph{tracks}, such that
every edge either lies on a track or its endpoints lie on two consecutive tracks.
We call a graph \emph{track drawable} \label{page:track} if it has a track drawing.
Let $\tn(G)$ be the minimum number of tracks of a track drawing of~$G$. Note that $\pi^1_2(G)\le\bar\pi^1_2(G)\le\tn(G)$.

The following proposition is similar to a lemma of
Bannister et al.~\cite[Lemma~1]{bddew-tllpd-arXiv15} who say it is
implicit in the earlier work of Felsner et al.~\cite{flw-sldri-JGAA03}.

\begin{theorem}[cf.~\cite{flw-sldri-JGAA03,bddew-tllpd-arXiv15}]
  \label{PiTrDr}
  Let $G$ be a track drawable graph. Then $\pi^1_2(G)\le 2$.
\end{theorem}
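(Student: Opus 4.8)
The statement is that a track-drawable graph $G$ satisfies $\pi^1_2(G)\le 2$, i.e., $G$ can be redrawn straight-line and crossing-free in the plane with all vertices on just two lines. So I start from a given track drawing of $G$, which places the vertices on some number $t$ of parallel tracks $T_1,\dots,T_t$ (say horizontal), with every edge either inside a single track or between two consecutive tracks $T_i,T_{i+1}$. The goal is to "fold" all these tracks onto two lines while keeping the drawing planar. The natural target configuration is two rays (or two lines) meeting at a very flat angle, say emanating from a common apex near the origin: put the \emph{odd}-indexed tracks' vertices onto one ray and the \emph{even}-indexed tracks' vertices onto the other. Then an edge within a track goes to a segment along one of the two rays (fine, as long as the order of vertices on each track is preserved so these sub-segments don't overlap badly — I'll need to be careful that an edge inside a track doesn't pass through a vertex of the same track lying between its endpoints, but since in a track drawing a track carries only a path/matching-like structure, or at worst one can subdivide, I should check what exactly "lies on a track" permits — I'll assume each track's induced edges form a set of intervals that can be realized without an intermediate vertex, or handle it by a tiny perturbation off the line; actually the cleaner route is to note that track drawings as defined allow in-track edges only between consecutive vertices on the track, so this is a non-issue), and an edge between consecutive tracks $T_i,T_{i+1}$ becomes an edge between a point on one ray and a point on the other ray.

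The key step is to choose the placement so that the cyclic/linear order of vertices along each track, together with the two-ray geometry, forces crossing-freeness. Here is the concrete scheme I would use: index the vertices so that on track $T_i$ they appear left-to-right as $v^i_1,v^i_2,\dots$. Place all odd tracks on the upper ray $y = \varepsilon x$, $x>0$, and all even tracks on the lower ray $y=-\varepsilon x$, $x>0$, for a tiny $\varepsilon>0$. Along the upper ray, order the odd-track vertices by increasing $x$ as: first all of $T_1$ (in its left-to-right order), then all of $T_3$, then $T_5$, etc.; symmetrically on the lower ray: first all of $T_2$, then $T_4$, etc. Now an inter-track edge connects a vertex of some $T_i$ to a vertex of $T_{i\pm1}$, i.e., an upper-ray point from "block $T_i$" to a lower-ray point from "block $T_{i+1}$" (or $T_{i-1}$). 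Because the blocks on each ray are arranged consecutively, the edges between block $T_i$ (upper) and block $T_{i+1}$ (lower) live in a "lens-shaped" region that is disjoint from the region used by edges between block $T_{i+2}$ and block $T_{i+3}$, and so on; consecutive lenses $T_i$–$T_{i+1}$ and $T_{i+1}$–$T_{i+2}$ share only the single block $T_{i+1}$ and meet along it without crossing, by planarity of the original track layout. So the main thing to verify is: \emph{within} one lens — edges between consecutive blocks — the straight-line drawing is crossing-free, which reduces to the statement that a bipartite graph drawn with one part on one ray in order and the other part on the other ray in order has a planar straight-line drawing iff the original (already crossing-free) consecutive-tracks drawing does; this is exactly the standard fact that two adjacent tracks form a planar bipartite graph drawn with the two vertex orders, and such a drawing is combinatorially the same whether the two tracks are parallel lines or two rays through a common distant point, since that transformation is a projective map preserving the relevant betweenness and hence the crossing pattern. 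I would formalize "same crossing pattern" by observing that for two edges $ab$ and $cd$ with $a,c$ on one ray and $b,d$ on the other, whether they cross depends only on whether the pairs $(a,c)$ and $(b,d)$ are "linked" in the two orders — identical to the parallel-tracks case.

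The hard part — and the only real content beyond bookkeeping — is exactly that last reduction: showing that collapsing $t$ parallel tracks onto two rays, blockwise, does not create a crossing between an edge in one lens and an edge in another lens, and does not create a crossing within a lens. For cross-lens crossings I would argue geometrically that by choosing $\varepsilon$ small enough (so the rays are nearly the same horizontal line) and scaling the block widths appropriately, the lens for tracks $(T_i,T_{i+1})$ is contained in a thin horizontal slab whose $x$-extent is the union of the $x$-extents of block $T_i$ and block $T_{i+1}$; two lenses whose index sets $\{i,i+1\}$ and $\{j,j+1\}$ are disjoint then occupy disjoint $x$-ranges (since the blocks are laid out in disjoint $x$-ranges on each ray), so their edges can't cross; two lenses sharing one index share exactly the vertices of that common block and are "stacked" consistently, so no crossing there either — this mirrors how the original track drawing has its consecutive-track-bands stacked without crossing. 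For within-lens crossings I invoke the claim above that the bipartite drawing between two rays is crossing-free precisely when the corresponding two-parallel-tracks drawing is, which it is by hypothesis. I'd present the within-lens claim as a short lemma ("two linearly ordered point sets, one on each of two rays from a common apex, connected straight-line; edges $ab,cd$ cross iff $(a,c)$ and $(b,d)$ separate each other in their orders") and then the whole proof is assembling these pieces. I would expect to need to say a word about degenerate cases: isolated vertices, vertices shared structure when $t=1$ (trivial) or $t=2$ (the drawing is already essentially on two lines after a projective tilt), and in-track edges (place them as short segments along the ray between consecutive block-neighbors, which is automatically fine since consecutive vertices on a ray have no vertex between them).
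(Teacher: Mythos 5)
Your construction has a genuine flaw, and it sits exactly at the step you identify as the only real content: the claim that the two lenses $(T_i,T_{i+1})$ and $(T_{i+1},T_{i+2})$ ``share only the single block $T_{i+1}$ and meet along it without crossing, by planarity of the original track layout.'' In the original track drawing these two lenses lie on \emph{opposite} sides of the track $T_{i+1}$; after your mod-2 folding, the blocks of $T_i$ and $T_{i+2}$ both lie on the same ray, so both lenses are squeezed into the same wedge, on the same side of $T_{i+1}$'s block, and planarity of the original drawing no longer protects you. Concretely, take the track-drawable graph $C_4=a\,b_1\,c\,b_2$ with tracks $T_1=\{a\}$, $T_2=\{b_1,b_2\}$, $T_3=\{c\}$ and all four edges between consecutive tracks. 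Your scheme forces $a$ and $c$ onto one ray and $b_1,b_2$ onto the other; any such placement puts the four points in convex position with $a,c$ adjacent on the hull and $b_1,b_2$ adjacent on the hull, so two of the four edges of this $K_{2,2}$ are the crossing diagonals. Hence no choice of $\varepsilon$, block widths, or within-block orders rescues the two-ray, odd/even assignment: an edge of lens $(T_1,T_2)$ must cross an edge of lens $(T_2,T_3)$.

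The missing idea is to use all four half-lines of the two intersecting lines, not two of them, and to assign track $T_i$ to the half-line indexed $i \bmod 4$, placing the tracks in \emph{spiral} order at increasing distance from the crossing point (this is the paper's proof, following Felsner et al.\ and Bannister et al., who use three half-lines). Then consecutive tracks always sit on rotationally consecutive half-lines, so the lens between $T_i$ and $T_{i+1}$ lives in the quarter-plane wedge between their half-lines; consecutive lenses occupy different wedges (meeting only along the shared half-line), and the only lenses sharing a wedge, namely $(T_i,T_{i+1})$ and $(T_{i+4},T_{i+5})$, are separated radially because the spiral places later tracks strictly farther out. Your within-lens lemma (crossings between two rays are governed by whether endpoint pairs interleave in the two linear orders, exactly as for two parallel tracks) is correct and is still what one needs for each wedge; it is the global mod-2 folding that has to be replaced by the mod-4 spiral.
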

\begin{wrapfigure}[6]{r}{0.2\linewidth}
  \includegraphics[page=3,scale=1.0,bb=0 0 23mm 16mm]{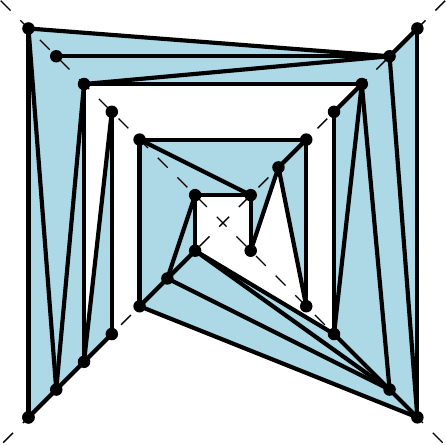}
\end{wrapfigure}

\noindent\emph{Proof.}\nobreakspace
  Consider a track drawing of~$G$, which we now transform to a drawing on two
  intersecting lines.
  Put the tracks consecutively along a spiral so that they correspond to
  disjoint intervals on the half-lines as depicted on the right.
  Tracks whose indices are equal modulo~4 are placed on the same half-line;
  for more details see Fig.~\appref{fig:trdr} in Appendix~\appref{sec:plan-appendix} on
  page~\apppageref{fig:trdr}.
  (Bannister et al.~\cite[Fig.~1]{bddew-tllpd-arXiv15} use three half-lines
  meeting in a point.)
\qed

Observe that any tree is track drawable:
two vertices are aligned on the same track
iff they are at the same
distance from an arbitrarily assigned root.
Moreover, any outerplanar graph is track drawable~\cite{flw-sldri-JGAA03}.
This yields an improvement over
the bound $\pi^1_3(G) \le 2$ for outerplanar graphs stated in the beginning of this section.

\begin{corollary}\label{Pi12Outerplanar}
  For any outerplanar graph~$G$, it holds that $\pi^1_2(G)\le 2$.
\end{corollary}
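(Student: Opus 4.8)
The plan is to obtain this as an immediate consequence of the two facts established just above. First, by a result of Felsner, Liotta, and Wismath~\cite{flw-sldri-JGAA03} (recalled in the paragraph preceding the corollary), every outerplanar graph is track drawable. Second, Theorem~\ref{PiTrDr} asserts that any track drawable graph~$G$ satisfies $\pi^1_2(G)\le 2$. Composing these two statements yields $\pi^1_2(G)\le 2$ for every outerplanar~$G$, which is exactly the claim. Note that, by the monotonicity $\pi^1_3(G)\le\pi^1_2(G)$, this is a genuine strengthening of the earlier bound $\pi^1_3(G)\le 2$ for outerplanar graphs.

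In slightly more concrete terms, I would take an arbitrary outerplanar graph~$G$, fix the track drawing~$\delta$ of~$G$ guaranteed by~\cite{flw-sldri-JGAA03}, and then apply to~$\delta$ the spiral transformation from the proof of Theorem~\ref{PiTrDr}: the tracks are laid out along a spiral built from two intersecting half-lines, with tracks whose indices agree modulo~$4$ placed on the same half-line, keeping the drawing straight-line and crossing-free. The output is a straight-line crossing-free drawing of~$G$ contained in the union of two lines, certifying $\pi^1_2(G)\le 2$.

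I expect no real obstacle here, since both ingredients are external results we are entitled to invoke. If a self-contained argument were desired, the only nontrivial step to reprove would be the track drawability of outerplanar graphs; a natural route is to repeatedly peel off the vertices currently on the outer face and assign them track levels that alternate in parity as one peels inward (equivalently, one can use a suitable layering of a maximal outerplanar supergraph), so that every edge either stays within a track or connects consecutive tracks. But for the corollary as stated this is unnecessary, and the one-line composition of \cite{flw-sldri-JGAA03} with Theorem~\ref{PiTrDr} suffices.
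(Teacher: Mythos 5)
Your proposal is correct and is exactly the paper's argument: the corollary follows immediately from the track drawability of outerplanar graphs (Felsner et al.) combined with Theorem~\ref{PiTrDr}. No gaps.
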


\savespace{
Next, we consider a generalization of trees, the class of \emph{2-trees}, which is recursively defined as follows:
\begin{itemize}
\item
the graph consisting of two adjacent vertices is a 2-tree;
\item
if $G$ is a 2-tree and $H$ is obtained from $G$ by adding a new
vertex and connecting it to two adjacent vertices of $G$,
then $H$ is a 2-tree.
\end{itemize}
A graph is a \emph{partial 2-tree} if it is a subgraph of a 2-tree.
It is well known that the class of partial 2-trees coincides
with the class of graphs with treewidth at most 2.
Any outerplanar graph is a partial 2-tree, and the same holds
for series-parallel graphs (the latter class is sometimes defined
so that it coincides with the class of partial 2-trees).
Note that not all 2-trees are track drawable (for example, the graph
consisting of three triangles that share one edge).

Ravsky and Verbitsky~\cite[Theorem 4.5]{RavskyV11} have shown that any
partial 2-tree admits a drawing with at least $n/30$ collinear
vertices. This suggests the following question.

\begin{problem}%
  $\pi^1_2(G)=O(1)$ for 2-trees?
\end{problem}
}

\subsection{Placing Edges on Few Lines
  (\texorpdfstring{$\rho^1_2$ and $\rho^1_3$}{ρ12 and ρ13})}\label{ss:rho12rho13}

The parameter $\rho^1_2(G)$ is related to two parameters introduced by Dujmovi{\'c}
et al.~\cite{desw-dpgfs-CGTA07}.
They define a \emph{segment} in a straight-line drawing of a graph $G$
as an inclusion-maximal (connected) path of edges of $G$ lying on a line.
A \emph{slope} is an inclusion-maximal set of parallel segments.
The \emph{segment number} (resp., \emph{slope number}) of a planar graph $G$ is the minimum possible
number of segments (resp., slopes) in a straight-line drawing of $G$.
We denote these parameters by $\segm(G)$ (resp., $\slop(G)$).
Note that $\slop(G)\le\rho^1_2(G)\le\segm(G)$.

These parameters can be far away from each other.
Figure~\ref{fig:nested-triangles} shows a graph with $\slop(G)=O(1)$ and $\rho^1_2(G)=\Omega(n)$
(see the proof of Theorem~\ref{thm:nested-triangles}).
On the other hand, note that $\rho^1_2(mK_2)=1$ while $\segm(mK_2)=m$
where $mK_2$ denotes the graph consisting of $m$ isolated edges.
The gap between $\rho^1_2(G)$ and $\segm(G)$ can be large even for
connected graphs. It is not hard to see that~$\segm(G)$ is bounded from below
by half the number of odd degree vertices (see~\cite{desw-dpgfs-CGTA07} for details). Therefore, if we take a caterpillar $G$
with $k$ vertices of degree 3 and $k+2$ leaves, then $\segm(G)\ge n/2$, while
$\rho^1_2(G)=O(\sqrt n)$ because $G$ can easily be drawn in a square grid of area $O(n)$.
Note that, for the same $G$, the gap between $\slop(G)$ and $\rho^1_2(G)$ is also
large. Indeed, $\slop(G)=2$ while $\rho^1_2(G)>\sqrt{n-2}$ by Lemma~\ref{Ess}(a).

It turns out that a large gap between $\rho^1_2(G)$ and $\segm(G)$ can be shown
also for 3-connected planar graphs and even for triangulations.

\begin{example}\label{ex:rho12-segm}
  \textit{There are triangulations with $\rho^1_2(G)=O(\sqrt n)$ and $\segm(G)=\Omega(n)$.}\footnote{%
A triangulation~$G$ with $\segm(G)=O(\sqrt{n})$
has been found by Dujmovi{\'c} at al.~\cite[Fig.~12]{desw-dpgfs-CGTA07}.}
Note that this gap is the best possible because any 3-connected graph~$G$
has minimum vertex
degree~$3$ and, hence, $\rho^1_2(G)\ge\rho^1_3(G)>\sqrt{2n}$ by Lemma~\ref{Ess}(a).
Consider the graph shown in Fig.~\ref{fig:Sashas-pattern}.
Its vertices are placed on the standard orthogonal grid and two slanted grids,
which implies that at most $O(\sqrt n)$ lines are involved.
The pattern can be completed to a triangulation by adding three vertices around it
and connecting them to the vertices on the pattern boundary.
Since the pattern boundary contains $O(\sqrt n)$ vertices, $O(\sqrt n)$
new lines suffice for this. Thus, we have $\rho^1_2(G)=O(\sqrt n)$ for
the resulting triangulation $G$. Note that the vertices drawn fat in Fig.~\ref{fig:Sashas-pattern}
have degree~$5$, and there are linearly many of them.
This implies that $\segm(G)=\Omega(n)$.
\end{example}

\begin{figure}[tb]
\begin{minipage}[b]{.35\textwidth}
  \centering
  \includegraphics[scale=0.7]{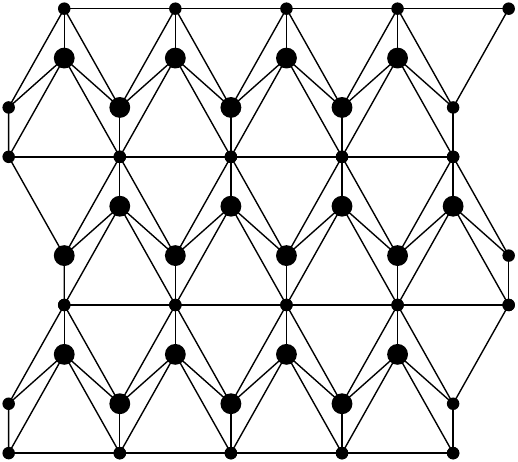}
  \caption{The main body of a triangulation~$G$ with
    $\rho^1_2(G)=O(\sqrt n)$ and $\segm(G)=\Omega(n)$.}
  \label{fig:Sashas-pattern}
\end{minipage}
\hfill
\begin{minipage}[b]{.25\textwidth}
    \centering
    \begin{tikzpicture}[scale=0.48]
      \foreach \deg in {90,210,330} {
        \foreach \dist in {0.5,1.5,2.5,3.5} {
          \fill (\deg:\dist) circle(4pt);
        }
        \draw (\deg:0.5) -- (\deg:3.5);
      }
      \foreach \dist in {0.5,1.5,2.5,3.5} {
        \draw (90:\dist) -- (210:\dist) -- (330:\dist) -- cycle;
      }
    \end{tikzpicture}
    \caption{The nested-triangles graph $T_k$.}%
    \label{fig:nested-triangles}
\end{minipage}
\hfill
\begin{minipage}[b]{.35\textwidth}
    \centering
    \includegraphics[page=2]{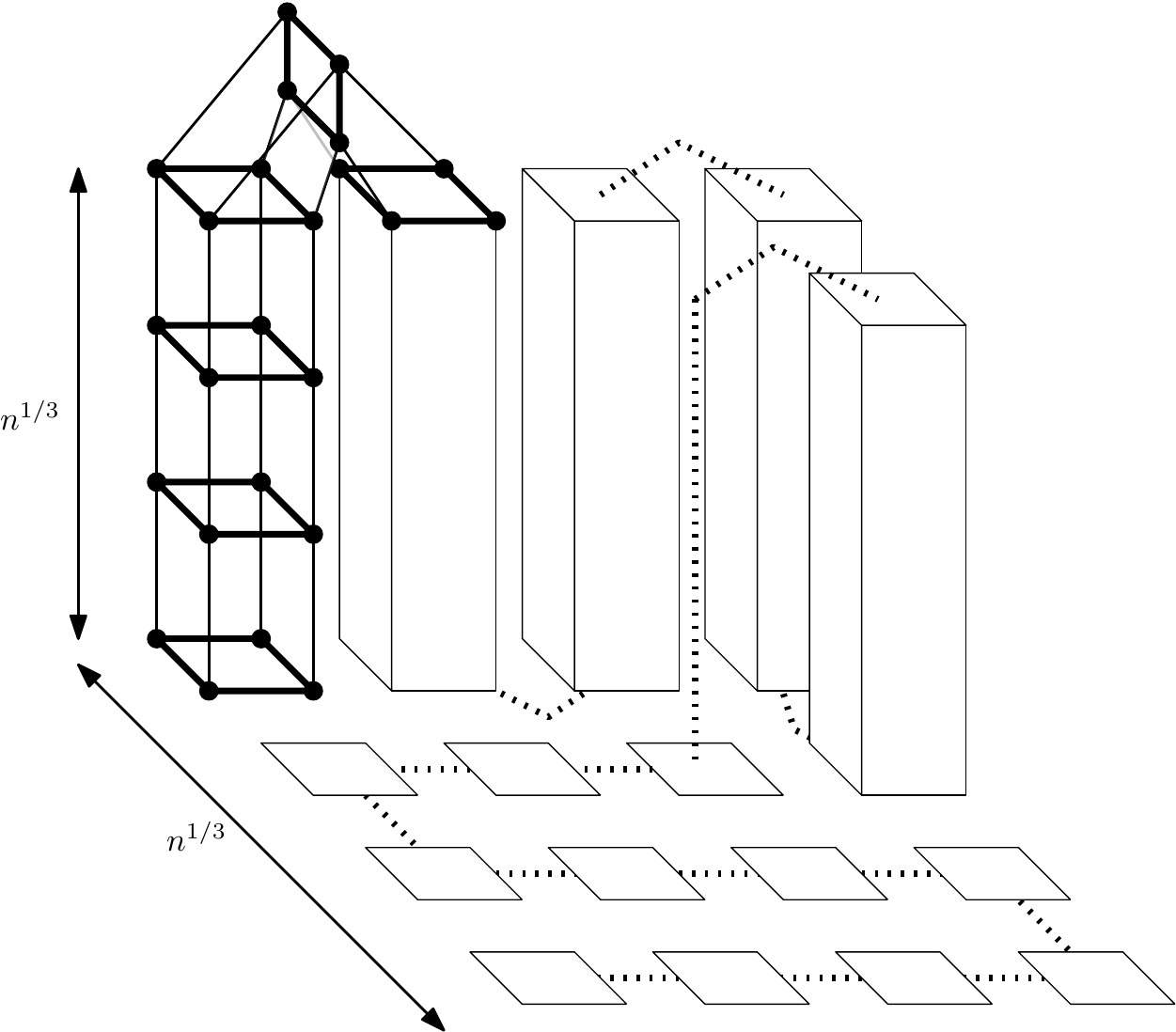}
    \caption{Sketch of the construction in the proof of
      Theorem~\ref{thm:nested-triangles}(b).}
    \label{fig:3D-grid-small}
\end{minipage}
\end{figure}

Somewhat surprisingly, the parameter $\segm(G)$ can be bounded from above
by a function of $\rho^1_2(G)$ for all connected graphs.

\newcommand{\contentThmSegmRho}{%
  For any connected planar graph~$G$, $\segm(G)=O(\rho^1_2(G)^2)$.%
}
\wormhole{thm:segm-vs-rho12}
\begin{theorem}\label{thm:segm-vs-rho12}
\contentThmSegmRho
\end{theorem}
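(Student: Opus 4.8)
The plan is to take an optimal drawing $D$ of $G$ on $k:=\rho^1_2(G)$ lines $\ell_1,\dots,\ell_k$ (a straight-line crossing-free drawing of the planar graph $G$, so $\segm(G)\le\segm(D)$) and to bound the number of segments of $D$ purely in terms of the arrangement of these $k$ lines. The first observation is that at most two edges incident to any fixed vertex can lie on a single line, since two incident edges pointing the same way along a line would overlap; hence every vertex of degree $\ge 3$ lies on at least two of the $k$ lines. Let $V^{\ast}$ denote the set of vertices of $G$ lying on at least two of the lines. Then $V^{\ast}$ contains all vertices of degree $\ge 3$, and each point of $V^{\ast}$ is an intersection point of the arrangement of the $k$ lines.

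The core of the argument is the claim that, unless $G$ is a single path (or has at most two vertices), every segment of $D$ contains a vertex of $V^{\ast}$. Take a segment $s$, lying on a line $\ell$, and suppose it avoids $V^{\ast}$; then every vertex of $s$ has degree $\le 2$ and lies on $\ell$ only. An interior vertex of $s$ then has both its edges on $\ell$; an endpoint of $s$ of degree $2$ would likewise have its second edge on $\ell$, in the direction away from $s$, contradicting maximality of $s$; so both endpoints of $s$ are leaves and every vertex of $s$ has all its edges inside $s$ — i.e.\ $s$ is a connected component of $G$, forcing $G=s$ to be a path. Next, two distinct segments on the same line are vertex-disjoint (a shared vertex would have two collinear incident edges, merging the two segments), so a vertex $v$ lies in at most one segment per line through it, hence in at most $t(v)$ segments, where $t(v)$ is the number of the $k$ lines through $v$.

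Combining these, $\segm(G)\le\segm(D)\le\sum_{v\in V^{\ast}}t(v)$. The points of $V^{\ast}$ are distinct intersection points of the arrangement, and since each pair among the $k$ lines meets in at most one point we have $\sum_{p}\binom{t(p)}{2}\le\binom{k}{2}$; using $t\le 2\binom{t}{2}$ for $t\ge 2$ this gives $\sum_{v\in V^{\ast}}t(v)\le k(k-1)$, so $\segm(G)<\rho^1_2(G)^2$, while the degenerate cases ($G$ a path or $|V(G)|\le 2$) give $\segm(G)\le 1$. I expect the only genuinely delicate point to be the ``every segment meets $V^{\ast}$'' claim — in particular nailing down the behaviour at degree-$2$ vertices and at segment endpoints so that the sole escape is the trivial path case; once that is in place, the rest is routine counting in the line arrangement.
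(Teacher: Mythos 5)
Your proof is correct, but it takes a genuinely different route from the paper's. You work directly inside an optimal drawing of $G$ on $k=\rho^1_2(G)$ lines and count segments of that drawing via the line arrangement: every segment must contain a vertex lying on at least two of the covering lines (the only escape being the trivial case that $G$ itself is a path), distinct segments on a common line are vertex-disjoint, and the incidence count $\sum_p\binom{t(p)}{2}\le\binom{k}{2}$ then gives $\segm(G)\le k(k-1)$. All the individual steps (degree-$\ge3$ vertices lie on two lines, maximality forces a $V^{\ast}$-avoiding segment to be a whole component, the per-line disjointness of segments) check out. The paper instead builds a fresh drawing: it contracts all maximal straight paths onto the branching vertices, takes an arbitrary straight-line drawing of the reduced planar graph, re-expands the path bundles, pendant paths and attached cycles with small segments and triangles, and charges segments to branching vertices, obtaining $\segm(G)\le\sum_{\deg v\ge3}\deg v\le 4\rho^1_3(G)(\rho^1_3(G)-1)$ via the counting in Lemma~\ref{Ess}(b). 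What each buys: your argument is shorter, needs no new drawing, yields the clean explicit constant $\segm(G)\le\rho^1_2(G)(\rho^1_2(G)-1)$, and additionally shows that the optimal line-cover drawing itself already has few segments; the paper's construction, by bounding the segment count through essential-vertex counting rather than through a planar drawing on the covering lines, gives the formally stronger bound in terms of the 3D parameter $\rho^1_3(G)\le\rho^1_2(G)$, which your approach cannot reach since it is tied to a plane drawing on the $\rho^1_2(G)$ lines. Either way the quadratic dependence is optimal by Example~\ref{ex:rho12-segm}.
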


Note that
$
\maxdeg(G)/2 \le \rho^1_3(G) \le \rho^1_2(G) \le \segm(G) \le m
$
for any planar graph~$G$.
For all inequalities here except the second one, we already know
that the gap between the respective pair of parameters can be very large
(by considering a caterpillar with linearly many degree 3 vertices and
applying Lemma~\ref{Ess}(a), by Example~\ref{ex:rho12-segm}, and
by considering the path graph $P_n$, for which $\segm(P_n)=1$).
Part (b) of the following theorem shows a large gap also between the
parameters $\rho^1_3(G)$ and~$\rho^1_2(G)$, that is, some planar graphs can be
drawn much more efficiently, with respect to the line
cover number, in 3-space than in the plane.

\begin{theorem}\label{thm:nested-triangles}
\begin{enumerate}[(a)]
\item There are infinitely many planar graphs with constant
  maximum degree, constant treewidth, and linear $\rho^1_2$-value.%
\item For infinitely many $n$ there is a planar
  graph $G$ on $n$ vertices with $\rho^1_2(G)=\Omega(n)$ and $\rho^1_3(G)=O(n^{2/3})$.
\end{enumerate}
\end{theorem}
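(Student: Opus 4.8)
The plan is to use the nested-triangles graph $T_k=K_3\times P_k$ (on $n=3k$ vertices) for part~(a) and the ``square prism'' $C_4\times P_k$ (on $n=4k$ vertices) for part~(b). Write $\Delta_1,\dots,\Delta_k$ (resp.\ $C_1,\dots,C_k$) for the consecutive triangles (resp.\ $4$-cycles), and call these cycles the \emph{layers}. For part~(a) the two combinatorial parameters are immediate: $\maxdeg(T_k)=4$, and $\tw(T_k)=3$ because a path decomposition of width~$3$ is obtained by sliding a window of four vertices across the ladder $\Delta_i\cup\Delta_{i+1}$, while $\tw(T_k)\ge 3$ since $T_k$ contains the triangular prism, which has a $K_4$ minor and hence is not series--parallel. (In the concentric drawing $\slop(T_k)=O(1)$ as well, matching Fig.~\ref{fig:nested-triangles}.) The real content of part~(a), and of the lower bound in part~(b), is that $\rho^1_2\ge\Omega(n)$ for these graphs.

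For this I would first prove a purely combinatorial--topological fact valid for \emph{every} plane drawing of $T_k$ or $C_4\times P_k$: the layers are pairwise vertex-disjoint cycles, hence their regions form a laminar family; moreover, for $1<i<k$ the layer $\Delta_i$ (resp.\ $C_i$) separates layer $i-1$ from layer $i+1$ in any embedding, which forces this laminar family to be a union of at most two nested chains whose lengths sum to $k$ (one chain if the outer face is an ``end'' face, two chains meeting at the outer face otherwise). Fix the longer chain $C_{i_1}\subsetneq\dots\subsetneq C_{i_t}$ with $t\ge k/2$ (all containments strict, since the polygons are vertex-disjoint and crossing-free). Now the geometric step. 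For triangles this is a one-liner: each $\Delta_{i_s}$ is convex, so the line through any side of $\Delta_{i_{s'}}$ with $s'>s$ misses $\operatorname{int}(\Delta_{i_{s'}})$ and hence misses both endpoints of every edge of $\Delta_{i_s}$ (which lie strictly inside); so the $t$ edges $a_{i_s}b_{i_s}$ occupy $t$ distinct lines. For $4$-cycles, convexity fails, so I would use an angle argument: pick a point $O$ in $\operatorname{int}(C_{i_1})$ that is generic (not on any line through two vertices of the drawing); $O$ lies inside every $C_{i_s}$, so traversing the simple closed polygon $C_{i_s}$ winds the direction toward $O$ by $\pm2\pi$, whence one of its four edges $f_s$ subtends an angle $\ge\pi/2$ at $O$. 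Two disjoint bounded subsegments of a single line cannot both subtend $\ge\pi/2$ at a point off that line (their angular intervals are disjoint subintervals of one interval of length $\pi$, and only the subinterval straddling the perpendicular foot can have length $\ge\pi/2$), so the $f_s$ lie on $t$ distinct lines. Either way $\rho^1_2\ge t\ge k/2=\Omega(n)$.

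It remains to show $\rho^1_3(C_4\times P_k)=O(n^{2/3})$, where $\mathbb R^3$ genuinely helps. The naive ``stacked squares'' drawing sits in a $2\times 2\times k$ box and needs $\Theta(n)$ lines, so instead I would fold the prism into a near-cubical box (this is the construction sketched in Fig.~\ref{fig:3D-grid-small}). Let $N=\lceil k^{1/3}\rceil$ and let $\sigma\colon\{1,\dots,k\}\to\{0,\dots,N-1\}^3$ be a snake path through the $N\times N\times N$ grid, with $\sigma(j)$ and $\sigma(j+1)$ at adjacent grid points. Place the $x$-th vertex of layer~$j$ at $3\sigma(j)+\delta_x$, where $\delta_1,\dots,\delta_4$ are the four central corners $(1,1,1),(2,1,1),(2,2,1),(1,2,1)$ of a $3\times3\times3$ cell. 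Then every layer $C_j$ is drawn as an axis-aligned unit square, and every edge joining layers $j$ and $j+1$ is an axis-parallel segment of length~$3$ in direction $\sigma(j+1)-\sigma(j)$; hence \emph{all} edges are axis-parallel and the whole drawing lies in a grid of side $O(N)=O(n^{1/3})$. Crossing-freeness is routine: the snake is a simple path, so consecutive squares are joined through the single corridor between their cells, and keeping each square in the central column of its cell keeps it clear of all corridors. An $O(n^{1/3})\times O(n^{1/3})\times O(n^{1/3})$ grid is covered by $O(n^{2/3})$ axis-parallel lines, which therefore cover all edges, giving $\rho^1_3(C_4\times P_k)=O(n^{2/3})$.

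The main obstacle is the lower bound: carefully forcing the layers into at most two nested chains in an \emph{arbitrary} plane embedding, and then converting nesting into ``$\Omega(n)$ distinct lines'' (trivial by convexity for $T_k$, but requiring the winding/$\pi/2$-angle argument for $C_4\times P_k$). The $3$-dimensional upper bound is conceptually easy; the only point needing care there is verifying that the folded grid drawing is crossing-free.
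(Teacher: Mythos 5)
Your part~(a) and both lower bounds are sound. For the nesting of the layers you give a direct separation argument where the paper simply invokes 3-connectivity and Whitney's theorem to obtain a chain of at least $k/2$ nested triangles; your convexity one-liner for $T_k$ and the winding/$\tfrac{\pi}{2}$-angle argument for the (possibly non-convex) quadrilaterals of $C_4\times P_k$ are both correct, and the latter is a nice self-contained way to turn nesting into $\Omega(n)$ distinct lines without convexity. Using $C_4\times P_k$ rather than $T_k$ as the witness for (b) is also legitimate, since your lower bound applies to it; the paper instead carries out the construction for $C_4\times P_k$ as a warm-up and then adapts it to $T_k$.

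The gap is in the $\rho^1_3=O(n^{2/3})$ construction, i.e., exactly the step you declared routine. You place every square at the translates of the fixed corners $(1,1,1),\dots,(1,2,1)$, so all $k$ squares are parallel to the $xy$-plane. Whenever the snake makes a horizontal step, say $\sigma(j+1)-\sigma(j)=e_1$, the four connecting edges are horizontal segments at relative height $z=1$ with $y\in\{1,2\}$: the edge leaving the corner $(1,1,1)$ towards $(4,1,1)$ passes through the non-incident vertex $(2,1,1)$ and overlaps the square's own edge from $(1,1,1)$ to $(2,1,1)$; moreover the squares of layers $j$ and $j+1$ are then coplanar, and further overlaps occur (for instance $(4,1,1)$ lies in the interior of the edge from $(2,1,1)$ to $(5,1,1)$). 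Since a snake through an $N\times N\times N$ grid must make many steps in each axis direction, no single fixed orientation of the squares avoids this: a square has edges in two of the three axis directions and collides with the corridors in either of those directions. This is precisely why the paper's construction stacks the squares into vertical prisms (so that within a prism every inter-layer edge is perpendicular to the squares) and then explicitly rotates and moves the squares at the tops and bottoms of the prisms, paying $O(n^{2/3})$ additional, non-grid lines for these $O(n^{2/3})$ bends. Your construction needs the same repair; a small perturbation is not an option, because it destroys the axis-parallel collinearity on which the $O(n^{2/3})$ count rests.
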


\begin{proof}
Consider the nested-triangles graph $T_k=C_3\times P_k$ shown in Fig.~\ref{fig:nested-triangles}.
To prove statements~(a) and~(b), it suffices to establish the following bounds:
\begin{enumerate}[(i)]
\item \label{nested-triang:bound-lower}
$\rho^1_2(T_k) \ge n/2$ and
\item \label{nested-triang:bound-upper}
$\rho^1_3(T_k)=O(n^{2/3})$.
\end{enumerate}
To see the linear lower bound~\eqref{nested-triang:bound-lower}, note that~$T_k$ is 3-connected.
Hence, Whitney's theorem implies that, in any plane drawing of~$T_k$,
there is a sequence of nested triangles of length at least $k/2$.
The sides of the triangles in this sequence
must belong to pairwise different lines.
Therefore, $\rho^1_2(T_k) \ge 3k/2=n/2$.

For the sublinear upper bound~\eqref{nested-triang:bound-upper},
first consider the graph $C_4\times P_k$.
We build wireframe rectangular prisms that are stacks of $O(\sqrt[3]{n})$ squares each.
These prisms are placed onto the base plane in an $O(\sqrt[3]{n}) \times
O(\sqrt[3]{n})$ grid; see Fig.~\ref{fig:3D-grid-small}.
So far we can place the edges on the $O(n^{2/3})$ lines of the 3D cubic grid of
volume $O(n)$.
Next, we construct a path that traverses all squares by passing through the
prisms from top to bottom (resp., vice versa) and connecting neighboring prims.
We rotate and move some of the squares at the top (resp., bottom) of the prisms
to be able to draw the edges between neighboring prisms according to this path.
For this ``bending'' we need $O(n^{2/3})$ additional lines.
In Appendix~\appref{sec:plan-appendix} we provide a
drawing; see Fig.~\appref{fig:3D-grid} on page~\apppageref{fig:3D-grid}.
The same approach works for the graph $T_k = C_3\times P_k$.
In addition to the standard 3D grid, here we need also its slanted, diagonal
version (and, again, additional lines for bending in
the cubic box of volume $O(n)$).
The number of lines increases just by a constant factor.
\end{proof}

\savespace{
\begin{problem}%
Recall that, by Theorem~\ref{thm:rho^1_3-small-edge-separators},
graphs of bounded degree can have linearly large parameter $\rho^1_3(G)$.
  If a \emph{planar} graph $G$ has bounded degree, is it true that $\rho^1_3(G)=o(n)$?
\end{problem}
}

We are able to determine the exact values of $\rho^1_2(G)$ for complete bipartite graphs $K_{p,q}$
that are planar.

\newcommand{\contentExKTwoNMinusTwo}{%
\textit{$\rho^1_2(K_{1,q})=\lceil m/2\rceil$ and $\rho^1_2(K_{2,q})=\lceil (3n-7)/2\rceil=\lceil (3m-2)/4\rceil$.}
}
\wormhole{K2n-2}
\begin{example}\label{K2n-2}
\contentExKTwoNMinusTwo
See Appendix~\appref{sec:plan-appendix} for details.
\end{example}

\begin{question}
$\rho^1_3(K_{2,q})<\rho^1_2(K_{2,q})$?
\end{question}

Motivated by Example~\ref{K2n-2}, we ask:
\begin{problem}%
  What is the smallest $c$ such that $\rho^1_2(G)\le (c+o(1)) m$
  for any planar graph~$G$? Example~\ref{K2n-2} shows that $c\ge 3/4$.
  Durocher and Mondal~\cite{dm-dptfs-CCCG14}, improving on an earlier
  bound of Dujmovi\'c et al.~\cite{desw-dpgfs-CGTA07}, showed that
  $\segm(G)<\frac73n$ for any planar graph~$G$. This implies that $c\le 7/9$.
\end{problem}%

For any binary tree~$T$, it holds that $\rho^1_2(T)=O(\sqrt{n\log n})$.
This follows from the known fact~\cite{cgkt-oaars-CGTA02} that $T$
has an orthogonal drawing on a grid of size $O(\sqrt{n\log n})
\times O(\sqrt{n\log n})$.
For complete binary trees lower and upper bounds are described in
Example~\appref{prop:CompleteBinTreeRho12} in Appendix~\appref{sec:edges-appendix}.

\begin{question}
  $\rho^1_3(G)=\rho^1_2(G)$  trees?
\end{question}

\begin{question}
  Is it true that for trees $\Delta(G)=O(1)$ implies $\rho^1_2(G)=o(1)$?
More specifically, $\rho^1_2(G)=O(n^{1-1/\Delta(G)})$?
\end{question}

\bibliographystyle{splncs03}
\bibliography{abbrv,planes}

\clearpage
\appendix

\noindent{\Large\bfseries Appendix}

\section{Collapse of the Multidimensional Affine Hierarchy}
\label{s:collapse}

\begin{theorem}\label{dCollapse}
  For any integers $1 \le l \le d$, $d\ge 3$, and for any graph $G$,
  it holds that $\pi^l_d(G)=\pi^l_3(G)$,
  $\bar\pi^l_d(G)=\bar\pi^l_3(G)$, and $\rho^l_d(G)=\rho^l_3(G)$.
\end{theorem}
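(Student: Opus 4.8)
The plan is to prove, for each of $\rho$, $\pi$, $\bar\pi$, the two inequalities ``$\le$'' and ``$\ge$'' separately. One direction is immediate: fixing an optimal drawing of $G$ in $\mathbb{R}^3$ together with a cover by $\rho^l_3(G)$ $l$-flats and viewing $\mathbb{R}^3$ as a coordinate subspace of $\mathbb{R}^d$, we obtain a drawing in $\mathbb{R}^d$ covered by the same $l$-flats, so $\rho^l_d(G)\le\rho^l_3(G)$; the same works verbatim for $\pi$, and for $\bar\pi$ since parallel flats stay parallel. So the entire content lies in the reverse inequality: compressing a drawing in $\mathbb{R}^d$ down to $\mathbb{R}^3$ without increasing the number of covering flats. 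I would first dispose of the trivial case $l\ge 3$, in which all three quantities equal $1$ for every $d\ge 3$ (a single $l$-flat contains a copy of $\mathbb{R}^3$, and every graph has a straight-line crossing-free drawing in $\mathbb{R}^3$, e.g.\ with vertices in general position). Hence I may assume $l\in\{1,2\}$, and then $l<3\le d$, so all quantities are well defined.

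For the reverse inequality, fix a drawing $\Gamma$ of $G$ in $\mathbb{R}^d$ realizing $\rho^l_d(G)=k$, covered by $l$-flats $F_1,\dots,F_k$, and regard $\Gamma$ as the compact set of dimension at most $1$ obtained as the union of the vertex points and the edge segments. I would apply a \emph{generic} linear surjection $p\colon\mathbb{R}^d\to\mathbb{R}^3$. The key claim is that a generic such $p$ is injective on $\Gamma$: injectivity fails only when $\ker p$ meets the set of secant directions $\{\mathbb{R}(x-y):x\ne y\in\Gamma\}\subseteq\mathbb{RP}^{d-1}$, which has dimension at most $2\dim\Gamma\le 2$, whereas ``$\ker p$ contains a fixed line'' cuts out a subvariety of codimension $3$ in the space of projections; since $2<3$, the bad projections form a null set. (For $d=3$ there is nothing to do: $p$ is a linear isomorphism.) Since $p$ is affine it sends segments to segments, and injectivity of $p$ on $\Gamma$ forces $p(\Gamma)$ to be a straight-line crossing-free drawing of $G$ in $\mathbb{R}^3$: distinct vertices stay distinct, no edge passes through a non-incident vertex, adjacent edges meet only at the shared vertex, and non-adjacent edges stay disjoint, as each failure would produce two distinct points of $\Gamma$ with the same image.

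It remains to control the covering flats. For each $i$, the image $p(F_i)$ is an affine subspace of $\mathbb{R}^3$ of dimension at most $l\le 2$; choose an $l$-flat $F_i'\supseteq p(F_i)$ in $\mathbb{R}^3$. Then $p(\Gamma)\subseteq\bigcup_i p(F_i)\subseteq\bigcup_i F_i'$, and the number of \emph{distinct} flats among $F_1',\dots,F_k'$ is at most $k$, giving $\rho^l_3(G)\le k=\rho^l_d(G)$. The argument for $\pi$ is identical, tracking only the vertex images. For $\bar\pi$, if $F_1,\dots,F_k$ are translates of a single linear subspace $U$, then the $p(F_i)$ are translates of $p(U)$; fixing an $l$-dimensional $\tilde U\supseteq p(U)$ in $\mathbb{R}^3$ and setting $F_i'=\tilde U+p(v_i)$ for arbitrary $v_i\in F_i$ yields parallel $l$-flats covering $p(\Gamma)$, at most $k$ of them distinct, so $\bar\pi^l_3(G)\le\bar\pi^l_d(G)$. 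Together with the trivial inequalities this gives all three equalities.

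The only real obstacle I anticipate is the genericity claim: verifying cleanly that the projections which destroy injectivity on $\Gamma$ (equivalently, which create a crossing, merge two vertices, or push an edge through a vertex) form a set of measure zero in the space of linear maps $\mathbb{R}^d\to\mathbb{R}^3$. This is a standard general-position dimension count as sketched above, and it is precisely here that the hypothesis $d\ge 3$ (target dimension $\ge 2\cdot 1+1$) enters; the rest is bookkeeping about how flats and parallelism behave under the projection.
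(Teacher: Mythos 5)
Your proposal is correct, and its skeleton matches the paper's: both directions reduce to finding a linear map $\mathbb{R}^d\to\mathbb{R}^3$ that is injective on the drawing, after which segments map to segments, crossings cannot be created, each covering $l$-flat maps into an $l$-flat of $\mathbb{R}^3$, and parallel flats stay parallel, so none of the three parameters can increase. The difference is in how the injective map is produced. The paper states the key fact for a finite family $\mathcal L$ of lines (the lines supporting the edges of the drawing) and proves it by induction on $d$, compressing one dimension at a time: for $d>3$ the secant set $L_0-L_0$ is contained in the finitely many planes $\ell-\ell'$ with $\ell,\ell'\in\mathcal L$, hence the cone over it lies in finitely many $3$-dimensional linear subspaces, so one can pick a kernel line $\ell_0$ meeting it only in the origin and project along $\ell_0$ to $\mathbb{R}^{d-1}$. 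You instead take a single generic surjection $\mathbb{R}^d\to\mathbb{R}^3$ and argue by a dimension count (secant directions of the $1$-dimensional complex $\Gamma$ form a set of dimension at most $2$, while forcing a fixed line into the kernel is codimension $3$) that almost every projection is injective on $\Gamma$. Both hinge on the same avoidance of secant directions; your one-shot version is shorter and more standard general-position machinery, but its crux is exactly the genericity claim you only sketch (one must justify the semialgebraic dimension bound on the bad set), whereas the paper's inductive construction is completely elementary, using nothing beyond linear algebra and the fact that finitely many proper subspaces cannot exhaust $\mathbb{R}^d$. Your write-up is also somewhat more explicit than the paper about the final bookkeeping (extending possibly degenerate images $p(F_i)$ to $l$-flats, preserving parallelism, and the trivial case $l\ge 3$), which the paper leaves implicit; there is no gap in either treatment.
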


\begin{proof}
  The theorem follows from the following fact: For any finite family
  $\mathcal L$ of lines in~$\reals^d$, there exists a linear
  transformation $A \colon \reals^d\to\reals^3$ that is injective
on $L_0=\bigcup\setdef{\ell}{\ell\in\mathcal L}$,
the set of all points of the lines in~$\mathcal L$.

  We prove this claim by induction on~$d$.  If $d=3$, we let $A$ be
  the identity map on $\reals^3$. Suppose that $d>3$.

Regarding two lines $\ell$ and $\ell'$ in $\reals^d$ as 1-dimensional
affine subspaces, we consider the Minkowski difference $\ell - \ell'=\setdef{l-l'}{l\in\ell,\,l'\in\ell'}$.
Note that this is a plane, i.e., a 2-dimensional affine subspace of~$\reals^d$.
Denote $L=L_0-L_0=\bigcup\setdef{\ell-\ell'}{\ell,\ell'\in \mathcal L}$.
Let $L'=\setdef{tl}{t\in\reals,\,l\in L}$ be the union of all lines going through the
origin $0$ of $\mathbb R^d$ and intersecting the set $L$.
  Since the set $L$ is contained in a union of finitely many planes in the
  space $\mathbb R^d$, the set $L'$ is contained in a union of finitely
  many $3$-dimensional linear subspaces of $\reals^d$ (each of them contains the origin $0$).
Since $d>3$, there exists a line $\ell_0\ni 0$ such
  that $\ell_0\cap L=\{0\}$.  Now, let $A_0\function{\reals^d}{\reals^{d-1}}$
be an arbitrary linear transformation such that $\ker A_0=\ell_0$.
  Let $x,y\in L_0$ be arbitrary points such that $A_0x=A_0y$.
  Then $x-y\in (L_0-L_0)\cap\ker A_0=L\cap\ell_0=\{0\}$, so $x=y$.
  Thus, the map $A_0$ is injective on~$L_0$.

  By the inductive assumption applied to the family of lines
  $\setdef{A_0\ell}{\ell\in\mathcal L}$ in $\reals^{d-1}$, there exists a linear
  transformation $A_1 \function{\reals^{d-1}}{\reals^3}$
injective on the union of all $\ell\in\mathcal L$.
It remains to take the composition $A=A_1A_0$.
\end{proof}

\section{The Parallel Affine Cover Numbers}
\label{s:parallel}

\subsection{Placing Vertices on Few Parallel Lines in 3-Space
  (\texorpdfstring{$\bar\pi^1_3$}{‾π13})}
\label{barPi^1_3}

The concept of a proper track drawing was introduced by Dujmovi\'c et
al.~\cite{dpw-tlg-DMTCS04} in combinatorial terms with the following
geometric meaning.  We call a 3D drawing of a graph~$G$ a
\emph{proper track drawing} if there are parallel lines, called
\emph{tracks}, such that every vertex of $G$ lies on one of the tracks
and every edge connects vertices lying on two different tracks. Edges
between any two tracks are not allowed to cross each other.
Furthermore, we call a drawing of~$G$ an \emph{improper track drawing} if
we allow edges between consecutive vertices of a same track.  The
\emph{proper track number}~$\tn^{\mathrm{p}}(G)$ (\emph{improper track
  number}~$\tn^{\mathrm{i}}(G)$) of~$G$ is the minimum number of
tracks of a proper (improper) track drawing of~$G$.
While the definition above does not excludes crossing of two edges if they are
between disjoint pairs of tracks, note that all such crossings can be removed
by slightly shifting the vertices within each track. We, therefore, have
$$
\bar\pi^1_3(G)=\tn^{\mathrm{i}}(G)
$$
for any graph~$G$.
By~\cite[Lemma 2.2]{dmw-lgbtw-SICOMP05},
$\tn^{\mathrm{p}}(G)/2\le\tn^{\mathrm{i}}(G)\le\tn^{\mathrm{p}}(G)$.
Therefore, the upper bounds for $\tn^{\mathrm{p}}(G)$ surveyed by Dujmovi{\'c} et
al.~\cite[Table~1]{dpw-tlg-DMTCS04}) imply also upper
bounds on $\bar\pi^1_3(G)$ for different classes of graphs~$G$.

In particular, Dujmovi{\'c}, Morin, and Wood \cite{dmw-lgbtw-SICOMP05} prove that
$$
\tn^{\mathrm{p}}(G)\le 3^{\tw(G)}\cdot 6^{(4^{\tw(G)}-3{\tw(G)}-1)/9}
$$
for any graph $G$.
Note that any upper bound for $\bar\pi^1_3(G)$ implies also an upper bound for~$\rho^2_3(G)$.

\begin{lemma}\label{RhoByPi}
  $\rho^2_3(G)\le \binom{\bar\pi^1_3(G)}{2}$.
\end{lemma}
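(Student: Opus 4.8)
The plan is to reuse, essentially verbatim, a drawing witnessing $\bar\pi^1_3(G)$, and to observe that its edges are already distributed among $\binom{\bar\pi^1_3(G)}{2}$ planes for free. Write $r=\bar\pi^1_3(G)$; by the definition of $\bar\pi^1_3$ (equivalently, since $r=\tn^{\mathrm{i}}(G)$), the graph $G$ has a crossing-free straight-line drawing in $\mathbb{R}^3$ in which every vertex lies on one of $r$ pairwise parallel lines $\ell_1,\dots,\ell_r$. We may assume $r\ge 2$; the case $r\le 1$ is degenerate, since then $G$ is a linear forest and fits into a single plane.

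First I would associate to each unordered pair $1\le i<j\le r$ the plane $P_{ij}$ spanned by $\ell_i$ and $\ell_j$; this is well defined because two distinct parallel lines are coplanar and determine a unique plane. This produces a family $\mathcal{P}=\{P_{ij}: 1\le i<j\le r\}$ of at most $\binom{r}{2}$ planes (fewer if three or more tracks happen to be coplanar, which only helps). Next I would verify that $\bigcup\mathcal{P}$ contains every edge of the given drawing: for an edge $uv$ with $u\in\ell_i$ and $v\in\ell_j$, either $i\ne j$, and then both endpoints — hence the whole segment $uv$ — lie in $P_{ij}$; or $i=j$, and then $uv$ lies on $\ell_i$, which is contained in $P_{ij}$ for any choice of $j\ne i$ (such $j$ exists because $r\ge 2$). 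In both cases $uv\subseteq\bigcup\mathcal{P}$.

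Finally, since we have not modified the drawing at all, it is still crossing-free, and it is now exhibited as a crossing-free straight-line drawing of $G$ whose edges are covered by the at most $\binom{r}{2}$ planes of $\mathcal{P}$; by definition this gives $\rho^2_3(G)\le\binom{r}{2}=\binom{\bar\pi^1_3(G)}{2}$. There is no genuine obstacle here — the whole content is the elementary fact that parallel lines are coplanar, combined with the observation that grouping the edges of a fixed crossing-free drawing into planes creates no new crossings; the only points deserving a word are the well-definedness of $P_{ij}$ and the trivial small case $r\le 1$.
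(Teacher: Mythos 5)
Your proof is correct and is essentially the paper's own argument, which is the one-line observation that any two of the parallel lines span a plane and all edges of the drawing lie in these $\binom{r}{2}$ planes. (Minor quibble: for $r=1$ the claimed bound $\binom{1}{2}=0$ fails whenever $G$ has an edge, so your ``fits into a single plane'' remark does not rescue that degenerate case --- but the paper's proof silently ignores it as well.)
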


\begin{proof}
Any two parallel lines of the drawing lie in a plane, and all the edges are located in these planes.
\end{proof}

Since $\bar\pi^1_3(G)\le\tn^{\mathrm{p}}(G)$, Lemma \ref{RhoByPi}
implies that the parameter $\rho^2_3(G)$ is bounded from above by a function of
the treewidth of~$G$.

Whether or not $\tn^{\mathrm{p}}(G)$ and, hence, $\bar\pi^1_3(G)$
is bounded for the class of planar graphs is a
long-standing open problem; see~\cite{dmw-lgbtw-SICOMP05}.  The best lower bound is $\bar\pi^1_3(G)\ge 7$; see
\cite[Cor.~10]{dpw-tlg-DMTCS04}.
The best upper bound $\bar\pi^1_3(G)=O(\sqrt n)$ for planar graphs is obtained by Dujmovi\'c and Wood~\cite{DujmovicW04}.
Wood~\cite[Theorem~2]{w-bdgalqn-DMTCS08} proved that for all $\Delta\ge 3$ and for all
sufficiently large $n > n(\Delta)$, there is a simple $\Delta$-regular
$n$-vertex graph with $\tn^{\mathrm{p}}(G)\ge c\sqrt{\Delta}n^{1/2-1/\Delta}$ for
some absolute constant $c$, which implies that $\bar\pi^1_3(G)$
is unbounded even over graphs of bounded degree.

\begin{question}
These results suggest the following relaxation
of the open problem about $\bar\pi^1_3(G)$ for planar graphs.
  Let $G$ be a {\it planar} graph and $\maxdeg(G)=O(1)$.
  Is $\bar\pi^1_3(G)=O(1)$?
\end{question}

\begin{theorem}\label{PlanarBarPi}
  \begin{enumerate}
  \item[(a)]
If $\bar\pi^1_3(G)\le 3$ then $G$ is planar and $\pi^1_2(G)\le \bar\pi^1_3(G)$.
\item[(b)]
If $\pi^1_2(G)\le2$ for a planar graph $G$, then $\bar\pi^1_3(G)\le 4$.
\item[(c)]
If $G$ is, moreover, track drawable,\footnote{%
in the sense of the definition in page~\pageref{page:track}.}
then $\bar\pi^1_3(G)\le 3$.
  \end{enumerate}

\end{theorem}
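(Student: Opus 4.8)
All three parts rest on one structural observation about drawings whose vertices sit on a family of lines that are either pairwise parallel or concurrent through a point $O$: in such a drawing a crossing can occur only between two edges whose four endpoints lie on just two of the lines, and any edge with both endpoints on a single line of the family must join two consecutive vertices of that line, since otherwise it runs through a vertex. Moreover, for a pair $(\ell,\ell')$ that is either two parallel lines or two rays from $O$ spanning an angle in $(0,\pi)$, the straight segments $uv$ and $u'v'$ with $u,u'\in\ell$ and $v,v'\in\ell'$ cross iff the positions of $u,u'$ along $\ell$ and of $v,v'$ along $\ell'$ are ordered oppositely; this ``non-inversion'' criterion is literally the same in both cases, so a crossing-free configuration on one such pair transports verbatim to any other pair using the same vertex orders on the two lines.

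For part (a), if $\bar\pi^1_3(G)\le2$ then the at most two parallel lines are coplanar, the whole drawing lies in one plane, and this already gives planarity and $\pi^1_2(G)\le\bar\pi^1_3(G)$. If $\bar\pi^1_3(G)=3$, take a drawing on three parallel lines with vertex sets $V_1,V_2,V_3$ and read off the order of each $V_i$ along its line, the within-$V_i$ edges (consecutive), and the three between-sets subgraphs on $(V_1,V_2),(V_1,V_3),(V_2,V_3)$, each non-inverting by the observation. I would re-realize this in the plane on three lines through a common point $O$: among the six rays pick the three that are pairwise non-adjacent in cyclic order (one per line), so the three angular sectors they bound are convex and the remaining three rays carry no vertex; place $V_i$ on its ray in the recorded order and route the between-$(V_i,V_j)$ edges inside the sector between the rays of $V_i$ and $V_j$. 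Each sector is crossing-free by non-inversion, and edges from different sectors meet only along a shared boundary ray, hence only at a common vertex. The resulting plane drawing on three lines proves simultaneously that $G$ is planar and that $\pi^1_2(G)\le 3=\bar\pi^1_3(G)$.

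Parts (b) and (c) are analogous re-embeddings into $\reals^3$ using parallel lines chosen in general position (no three coplanar), so that any two of the spanned planes meet in a single line and edges lying in different spanned planes can only meet at a common vertex. For (b): if the witnessing $2$-line drawing uses parallel lines we are done, since $\bar\pi^1_3(G)\le\bar\pi^1_2(G)\le2$; otherwise the two lines meet at a point and split into four rays, the within-ray edges are consecutive, each line carries at most one edge through the apex (joining its two innermost vertices), and each of the four between-ray subgraphs is non-inverting, so I place the four rays on four parallel lines preserving the distance-from-apex orders and draw each piece in the appropriate spanned plane; the at most $\binom{4}{2}=6$ pieces are crossing-free and do not interfere, giving $\bar\pi^1_3(G)\le4$. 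For (c): given a track drawing with tracks $\tau_1,\dots,\tau_k$, fold them onto three parallel lines by index modulo $3$, each line carrying its assigned tracks as disjoint intervals in increasing order of index and preserving vertex orders inside each interval; consecutive tracks land on two different lines, and two ``between consecutive tracks'' blocks sharing a pair of lines come from track pairs $(\tau_i,\tau_{i+1})$, $(\tau_j,\tau_{j+1})$ with $i<j$, hence occupy disjoint strips of that plane because indices increase along each line; each block is crossing-free by non-inversion and within-track edges sit on a line between consecutive vertices, so $\bar\pi^1_3(G)\le3$.

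The main obstacle is conceptual, and it lies in part (a): planarity of $G$ does not follow formally from $\bar\pi^1_3(G)\le3$, because the union of the three planes spanned by the three parallel lines is not a planar surface (four sheets meet along each of the three lines), so one genuinely has to exhibit a plane drawing. The delicate point of the construction is that a single placement of the three tracks on three concurrent lines must respect all three between-track subgraphs at once; fortunately the same placement then also delivers the bound on $\pi^1_2(G)$. In (b) and (c) the remaining effort is routine bookkeeping — verifying that the various edge groups are routed into disjoint regions (for (c), that blocks sharing a folded pair of lines land in disjoint strips) and that generic placement of the parallel lines keeps edges in distinct spanned planes from crossing.
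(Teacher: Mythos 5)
Your proposal is correct, and for parts (b) and (c) it is essentially the paper's own proof: the paper handles (b) by exactly your split of the two crossing lines into four rays, mapped order-preservingly onto four parallel tracks, and handles (c) by the same mod-3 folding of the tracks onto three parallel lines (it only remarks that this is ``a version of Theorem~\ref{PiTrDr}''). Part (a) is where you genuinely diverge. The paper argues geometrically: a drawing on three parallel lines lies on the three strips of an infinite triangular prism, which gives planarity, and a central projection from a point near one end of the prism onto a plane beyond the other end turns this into a plane drawing with the vertices on at most three lines. You instead extract only the combinatorial data (vertex orders along the lines, consecutiveness of within-line edges, non-inversion of each between-pair) and re-realize it on three concurrent lines, taking every other ray so that the three $120^\circ$ sectors are convex; this is more self-contained, avoids verifying that the projection is injective on the prism surface and maps lines to lines, and makes the common ``non-inversion'' mechanism behind (a), (b), (c) explicit, at the price of some order bookkeeping. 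One caveat on your justification of (b): choosing the four parallel lines ``in general position (no three coplanar)'' does not by itself ensure that edges lying in planes spanned by \emph{disjoint} pairs of tracks meet only at vertices -- those two planes intersect in a line containing no vertices, and if the cross-section of the four lines is a convex quadrilateral, edges over the two diagonal pairs can cross on it. This is harmless here because at most one of the two opposite-ray pairs carries an edge (two edges through the apex would cross at $O$, and if $O$ is a vertex no edge passes through it), so a placement respecting the cyclic order of the rays (or a cross-section with one point inside the triangle of the other three) avoids every conflict; alternatively, invoke the paper's observation that crossings between edges joining disjoint pairs of tracks can always be removed by slightly shifting vertices within their tracks, which is precisely why $\bar\pi^1_3(G)=\tn^{\mathrm{i}}(G)$. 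For (c) no such issue arises, since any two of the three spanned planes share one of the three lines.
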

\begin{proof}
(a)
  Since any two parallel lines in~$\mathbb R^3$
  define a plane, the graph $G$ can be drawn on
  the three rectangular faces of a triangular prism in $\mathbb{R}^3$,
  created by (a projection of) the parallel lines. Therefore, $G$ is
  planar.  Now pick a point outside of the prism, but close to its
  triangular base.  Project the drawing of~$G$ on a plane that does
  not intersect the prism, is close and parallel to the other
  triangular face of the prism.  This yields a drawing of~$G$
  whose vertices lie on $\bar\pi^1_3(G)$ lines.

(b)
Assume that the graph $G$ is drawn on two lines $\ell_1$ and $\ell_2$.
If these lines are parallel then $\bar\pi^1_3(G)\le \bar\pi^1_2(G)\le 2$ and we are done.
If the lines $\ell_1$ and $\ell_2$ intersect in a point $O$,
then the union $\ell_1\cup\ell_2$ is split into three open rays and one half-open ray.
We can put the vertices from the rays into four parallel tracks,
preserving their order from the point $O$ to infinity along the rays.

(c)
This part is a version of Theorem~\ref{PiTrDr}.
\end{proof}

\begin{question}
Can Part (b) of Theorem \ref{PlanarBarPi} be improved to $\bar\pi^1_3(G)\le 3$?
\end{question}

\begin{example}
  The parallel affine cover number of complete (bipartite)
  graphs is as follows.
  \begin{enumerate}[(a)]
  \item \textit{$\bar\pi^1_3(K_n) = n-1$ for any $n \ge 2$.}
  \item \textit{$\bar\pi^1_3(K_{p,q})=p+1$ for any $p\le q$ and $q\ge 3$.}
  \end{enumerate}
Part (a) is straightforward.

Let us prove Part (b) (for the proper track number the case $p=q$ was
considered by Dujmovi\'c and Whitesides \cite{DujmovicW13})
To show the upper bound $\bar\pi^1_3(K_{p,q})\le p+1$,
we put the independent set of $q$ vertices in a line
and use a separate line for each of the other $p$ vertices.
To show the lower bound,
let $\mathcal{L}$ be an optimal set of lines. Suppose that our bipartition is defined by $p$ white and $q$ black vertices.
First, suppose that one line $\ell \in \mathcal{L}$ contains a pair of monochromatic vertices.
Since our lines are parallel, no other line may contain two vertices of the other color.
Clearly, if $\ell$ is monochromatic, there are at least $p+1$ lines. If $\ell$ is not monochromatic,
then $\ell$ contains exactly three vertices, where the monochromatic pair is separated by the other vertex.
However, now, every line $\ell' \in \mathcal{L} \setminus \{\ell\}$ must be monochromatic,
otherwise the edges spanning between $\ell$ and $\ell'$ will produce a crossing.
Thus since $q \geq 3$ the total number of lines is at least $1+p-1+1\ge p+1$.

In the other case, no line contains two monochromatic vertices. If $p < q$, then we already have
our lower bound of $p+1$. However, if $p=q$, we need to argue a little more carefully.
Here, we note that there cannot be three lines with two vertices each since this would imply a crossing.
Note that, in order to avoid a crossing between the pair of edges connecting two lines, the order of
the colors on the first line must be reversed on the second. In particular, with three lines, some
 pair of lines will violate this condition. Thus the total number of lines is at least $2+q-2+p-2\ge p+1$, because $q \geq 3$.

\end{example}

\subsection{Placing Vertices on Few Parallel Lines in the Plane
  (\texorpdfstring{$\bar\pi^1_2$}{‾π12})}\label{barPi^1_2}

All graphs considered in this subsection are supposed to be planar.

Let $G$ be a planar graph with $\maxdeg(G)\le 2$.  Observe that $G$ is a union of
cycles and paths and, hence, $\pi^1_2(G)=\bar\pi^1_2(G)\le 2$.
However, if we relax the
degree restriction even just slightly to $\maxdeg(G)\le 3$,
the parameters $\pi^1_2(G)$ and $\bar\pi^1_2(G)$ can be different.
As a simplest examples, note that $\pi^1_2(K_4)=2$ while $\bar\pi^1_2(K_4)=3$.
In general, the gap is unbounded. In the following, we examine the gap for some interesting graph classes.

For any tree $G$,  we have $\pi^1_2(G)\le 2$ by Theorem~\ref{PiTrDr}.
On the other hand,
Felsner et al.~\cite{flw-sldri-JGAA03}
showed that $\bar\pi^1_2(G)\ge\log_3(2n+1)$ for every complete ternary tree~$G$.
We can show a much larger gap for graphs of vertex degree at most 3 with cycles.
Beforehand, we need some preliminaries.

Let $H$ be a \emph{plane} graph, that is, a planar graph drawn without edge crossings
in the plane. Removal of $H$ splits the plane into connected components,
which are called \emph{faces} of $H$. We define $H^f$ to be the graph whose vertices
are the faces of $H$; two faces are adjacent in $H^f$ if their boundaries have
a common point.
Note that this is not the same as the dual of~$G$,
which has an edge for each pair of faces that share an edge.
For a planar graph~$G$, let $\phi(G)=\min_H\diam(H^f)$, where
the minimum is taken over all plane representations~$H$ of~$G$.

\begin{lemma}\label{FaceExit}
$\bar\pi^1_2(G)\ge \phi(G)$.
\end{lemma}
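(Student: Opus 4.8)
The plan is to take an optimal parallel-lines drawing of $G$ in the plane, say on $k=\bar\pi^1_2(G)$ parallel lines $\ell_1,\dots,\ell_k$ (ordered by the direction orthogonal to them), and to exhibit a plane representation $H$ of $G$ whose face graph $H^f$ has diameter at most $k$. First I would observe that the drawing realizing $\bar\pi^1_2(G)$ is itself a plane embedding of $G$; call it $H$. The key geometric fact is that in this drawing every edge either lies on a single line $\ell_i$ or joins a point of some $\ell_i$ to a point of some $\ell_j$ with $i<j$; moreover, because the lines are parallel, the edge stays inside the closed slab bounded by $\ell_i$ and $\ell_j$. Hence, assigning to each face $f$ of $H$ the set of indices $i$ such that the closed region of $f$ meets the slab between $\ell_i$ and $\ell_{i+1}$ (or, more simply, the smallest index $i$ such that $f$ touches or lies below $\ell_i$, suitably made precise), gives a ``level'' of each face.

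The main step is then a path-finding argument in $H^f$: given two faces $f$ and $f'$, I would walk from $f$ ``upward'' or ``downward'' through the slabs, at each step moving to a face in the next slab that shares a boundary point with the current one. Concretely, between consecutive lines $\ell_i$ and $\ell_{i+1}$ the drawing looks like a collection of edges and empty strips; the faces touching line $\ell_i$ form, together with the face containing the point at infinity ``above'' $\ell_1$, a connected piece of $H^f$ because consecutive such faces share a vertex on $\ell_i$ or are joined through the unbounded region. Thus one can route from $f$ to some face touching $\ell_1$ using at most (level of $f$) steps, do the same for $f'$, and join the two endpoints along $\ell_1$ (or through the outer face) at no extra cost, or — being a bit more careful — route directly from the slab of $f$ to the slab of $f'$ in $|{\rm level}(f)-{\rm level}(f')|\le k-1$ steps plus a bounded number of steps to cross within a single slab. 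The bookkeeping has to be arranged so that the total never exceeds $k$; I expect the cleanest formulation is that the faces incident to a fixed line $\ell_i$ induce a connected dominating-type structure in the subgraph of $H^f$ on faces of level $\le i$, which makes the inductive bound $\diam(H^f)\le k$ fall out.

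The hard part will be handling the degenerate configurations: edges lying entirely on a line $\ell_i$ (which can subdivide what one might naively think of as a single slab-face into several faces), vertices that coincide in the orthogonal direction, and the interaction with the unbounded face. The safe route is to first perturb the drawing slightly — sliding vertices along their lines and, if necessary, adding a tiny generic tilt while keeping the combinatorial structure of $H$ — so that no edge is horizontal and no two vertices share a line-orthogonal coordinate except as forced by being on the same $\ell_i$; this does not change $H^f$ and makes the slab decomposition behave. After that, the connectivity claims about faces meeting a common line become routine planarity observations, and the diameter bound follows by the telescoping argument sketched above. I would present the argument by induction on $k$, peeling off the faces below $\ell_1$.
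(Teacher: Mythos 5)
You have the right setup (all vertices on $k=\bar\pi^1_2(G)$ parallel lines, the slab decomposition, and steps in $H^f$ between faces meeting adjacent strips via a shared boundary point), but the crux of the lemma is exactly the "bookkeeping" you leave open, and neither of your concrete routing schemes closes it. Routing both $f$ and $f'$ up to faces touching $\ell_1$ and joining them there costs about $\mathrm{level}(f)+\mathrm{level}(f')+O(1)$, which for two faces near the bottom line is roughly $2k$, not $k$; and your alternative, "route from the slab of $f$ to the slab of $f'$ plus a bounded number of steps to cross within a single slab," rests on a false claim: faces meeting the same strip can be far apart in $H^f$ (think of nested structures), so crossing within a slab is not $O(1)$. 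Likewise, connectivity of the faces touching a fixed line is not enough; connectivity says nothing about distances, so "at no extra cost" is unjustified and the inductive bound does not simply "fall out."

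The missing idea is to route each face toward the \emph{nearer} of the two unbounded half-planes and to use that both half-planes lie in one and the same face, the outer face (every point of the drawing on the extreme lines is on its boundary). If a face meets the strip below $\ell_i$, it is incident to a vertex on a line at or above $\ell_i$ (walk upward from an interior point until you hit the boundary; the upper endpoint of the edge you hit is such a vertex), and around that vertex there is a face meeting the strip above its line; symmetrically downward. Hence every face reaches the outer face in at most $\lfloor k/2\rfloor$ steps, giving $\diam(H^f)\le 2\lfloor k/2\rfloor\le k$ and thus $\phi(G)\le\bar\pi^1_2(G)$, which is the paper's argument. A one-directional sweep (your "peeling off the faces below $\ell_1$") cannot give better than roughly $2(k-1)$. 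The perturbation step you propose is unnecessary: degenerate edges along a line cause no trouble in this argument, and perturbing while "keeping $H^f$ unchanged" would itself need justification.
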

\begin{proof}
Let $H$ be a drawing of $G$ attaining the value $r=\bar\pi^1_2(G)$.
The $r$ underlying lines partition the plane into $r+1$ parallel strips
(including two half-planes). If a face of $H$ intersects one of the bounded strips,
then it is incident to a vertex lying in a line above this strip.
This vertex is incident to another face intersecting a strip above this line.
The same holds true also in the downward direction.
It follows that
from each face we can reach the outer face in $H^f$ along a path of length at most $\lfloor r/2\rfloor$.
Therefore, $\diam(H^f)\le 2\lfloor r/2\rfloor\le r$.
\end{proof}

\begin{theorem}\label{SashaEx}
  For each $k$, there is a planar graph~$G$ on $n=4k$ vertices with $\maxdeg(G)\le 3$,
  $\pi^1_2(G)=2$, and $\bar\pi^1_2(G)\ge n/4$.
\end{theorem}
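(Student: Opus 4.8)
The plan is to construct an explicit graph family and then apply Lemma~\ref{FaceExit}. The target parameters suggest we need a planar graph of maximum degree~3 that on one hand has a drawing on two (intersecting) lines, yet whose every plane embedding has ``deep'' faces in the sense that the face-adjacency graph $H^f$ has large diameter. The natural candidate is a long ``ladder-like'' or ``nested'' structure: something resembling $C_4 \times P_k$ or a subdivided version thereof, since products with a path give the required nesting depth while subdividing keeps the maximum degree down to~3.

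First I would describe the construction. Take roughly $k$ nested quadrilaterals, but to bring the maximum degree down from~4 to~3, replace each ``rung'' connecting consecutive quadrilaterals by a path through a fresh subdivision vertex, so that each original corner vertex has degree~3 (two neighbors along its own quadrilateral, one toward the next). Arranging $4k$ vertices this way gives $n = 4k$. The key structural fact is that this graph is essentially ``serially nested'': in \emph{any} plane embedding, the cyclic faces stack up, so to travel in $H^f$ from the innermost face to the outer face one must cross $\Omega(k)$ boundaries. This is where Whitney-type rigidity enters: the graph should be chosen so that, even though it is only of degree~3 and not 3-connected, the nesting order of the $k$ quadrilaterals is forced (for instance because each quadrilateral separates the plane and the remaining vertices lie on a definite side). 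From this one concludes $\phi(G) \ge k = n/4$, and Lemma~\ref{FaceExit} yields $\bar\pi^1_2(G) \ge \phi(G) \ge n/4$.

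Second I would verify $\pi^1_2(G) = 2$. The lower bound $\pi^1_2(G) \ge 2$ is immediate because $G$ contains a cycle, hence is not a subgraph of a single line. For the upper bound I would exhibit a straight-line crossing-free drawing on two intersecting lines. The idea is that the subdivided nested-quadrilaterals graph, drawn as concentric ``diamonds'', places its vertices on two perpendicular axes: the ``north'' and ``south'' corners of all quadrilaterals go on the vertical line, the ``east'' and ``west'' corners on the horizontal line, with successive quadrilaterals at increasing distance from the origin, and the subdivision vertices slid slightly off onto whichever of the two lines keeps the drawing planar. One then checks that all edges can be realized as straight segments without crossings — this is a routine but slightly delicate case analysis about the order of points along each ray and the fact that consecutive quadrilaterals do not interleave.

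The main obstacle I expect is making the ``nesting is forced in every embedding'' argument rigorous for a degree-3 graph that is not 3-connected, so that Whitney's theorem does not apply verbatim. The clean way around this is to design the graph so that each quadrilateral, together with the attachment points of the rest of the graph, already pins down which faces it bounds — e.g.\ by ensuring every 4-cycle is non-separating only in one of its two sides, or by adding a few extra edges/vertices (still keeping degree $\le 3$) that 3-connect enough local structure. Alternatively, one can argue directly about $H^f$: show that any face incident to a vertex of the $i$-th quadrilateral is at distance $\ge i$ from the outer face in $H^f$, by induction on $i$, using only that the $i$-th quadrilateral is a cycle separating that vertex's neighborhood from infinity in the drawing $H$. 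I would pursue this second, more hands-on route, since it sidesteps connectivity subtleties entirely and meshes directly with the statement of Lemma~\ref{FaceExit}.
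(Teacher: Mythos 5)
Your construction is in the right spirit (nested quadrilaterals drawn as concentric diamonds on two lines), but note two issues with how you set it up. First, subdividing all four rungs of $C_4\times P_k$ gives $8k-4$ vertices for $k$ quadrilaterals, so even if everything else worked you would only get $\bar\pi^1_2(G)\gtrsim n/8$, not the claimed $n/4$; the paper instead joins consecutive squares by just \emph{two} edges at antipodal corners, which simultaneously keeps the maximum degree at $3$, gives exactly $n=4k$, and keeps all vertices on the two diagonals.

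The genuine gap is in your ``hands-on'' route for the lower bound. You propose to show by induction that any face incident to the $i$-th quadrilateral is at $H^f$-distance at least $i$ from the outer face, ``using only that the $i$-th quadrilateral is a cycle separating that vertex's neighborhood from infinity in the drawing $H$.'' That separation property is exactly what must be proved, not something you may assume: the graph is only $2$-connected, so its plane embeddings are not unique, and in particular the outer face can be \emph{any} face. For instance, re-embedding so that the region bounded by the innermost quadrilateral becomes the outer face makes faces at the first quadrilateral have distance $0$ or $1$ to the outer face, killing your induction; more generally, flips across the $2$-cuts formed by the antipodal attachment corners can a priori rearrange the embedding. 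The embedding-independent quantity you need is $\diam(H^f)$ itself (which never mentions the outer face), and lower-bounding it over \emph{all} embeddings requires showing that no embedding has a face whose boundary meets two far-apart quadrilaterals. The paper does precisely this: it invokes the characterization of plane embeddings of $2$-connected graphs via flippings (Mohar--Thomassen), checks that the only flippable cycles in $S_k$ are the square cycles, and uses the axial symmetry of each square's interior to conclude that every embedding yields the same face-adjacency graph $H^f$, whence $\phi(S_k)=k$ and Lemma~\ref{FaceExit} applies. Your alternative route (forcing nesting by $3$-connecting local structure while keeping $\Delta\le 3$ and $\pi^1_2=2$) is left entirely unspecified and is not obviously realizable, so as written the proof of $\bar\pi^1_2(G)\ge n/4$ is incomplete.
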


\begin{proof}
Consider the graph~$G=S_k$ consisting of $k=n/4$ nested copies of~$C_4$ connected as depicted in
the drawing~$H_k$ in Fig.~\ref{fig:nested-squares}.
This drawing certifies that~$\pi^1_2(S_k)=2$. Note that $\diam(H_k^f)=k$.
In order to apply Lemma~\ref{FaceExit}, we have to show that
this equality holds true as well for any other drawing of~$S_k$.

Note that $S_k$ is 2-connected. We use general facts about plane embeddings
of 2-connected graphs; here we do not restrict ourselves to straight-line
drawings only. A 2-connected planar graph $G$ can have many plane representations,
but all of them are obtainable from each other by a sequence of simple
transformations. Specifically, let $H$ be a plane version of $G$
and $C$ be a cycle in $H$ containing only two vertices, $u$ and $v$,
that are incident to some edges outside $C$. We can obtain another plane
embedding $H'$ of $G$ by \emph{flipping $G$ inside $C$}, that is,
by replacing the interior of $C$ with its mirror version (up to homeomorphism).
The rest of $G$ is unchanged; in particular, $u$ and $v$ keep their location.
It turns out \cite[Theorem 2.6.8]{MoharT-book} that, for any other plane
representation $H_1$ of $G$, $H$ can be transformed into $H_1$ by a sequence
of flippings that is followed, if needed, by re-assigning the outer face and
applying a plane homeomorphism.

Let us apply this to $G=S_k$ and its plane representation $H=H_k$.
First we need to identify all cycles in $H_k$ for which the flipping
operation is possible. Recall that such a cycle $C$ is connected
to its exterior only at two vertices $u$ and $v$. Clearly, removal
of these vertices disconnects the graph. This is possible only if
$u$ and $v$ belong to two diagonal edges forming a centrally symmetric pair
of edges. If the last condition is true for $u$ and $v$, then an appropriate
cycle $C$ exists only if the pair $\{u,v\}$ is centrally symmetric.
It readily follows from here that flipping is possible only with respect
to square cycles excepting the outer one.

Note now that, for each such cycle $C$, the interior of $C$ is
symmetric with respect to the axis passing through the corresponding
vertices $u$ and $v$. This implies that flipping of $H_k$ with respect
to $C$ does not change the graph $H^f_k$. Moreover, the flipped graph
differs from $H$ only by relabeling of vertices. Therefore,
any further flipping also does not change $H^f_k$.
Moreover, $H^f_k$ stays the same up to isomorphism after
re-assigning the outer face and applying a plane homeomorphism.
We conclude that $\phi(S_k)=\diam(H_k^f)=k$.
\end{proof}

\begin{figure}[tb]
    \centering
    \includegraphics{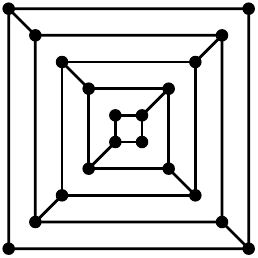}
    \caption{The nested-squares graph~$S_k$.}
    \label{fig:nested-squares}
\end{figure}

\begin{lemma}
  For any graph~$G$, it holds that $\bar\pi^1_2(G)\le \sqrt{\area(G)}$, where $\area(G)$ is the minimum
number of vertices in a rectangular grid containing a straight-line drawing of~$G$.
\end{lemma}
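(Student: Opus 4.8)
The plan is to read the parallel lines off directly from an optimal grid drawing. First I would fix a crossing-free straight-line drawing $H$ of $G$ inside a rectangular integer grid whose number of grid points equals $\area(G)$; write this grid as having $a$ columns and $b$ rows, so that $ab=\area(G)$. The $b$ horizontal grid lines are pairwise parallel, every vertex of $H$ lies on one of them, and $H$ is crossing-free; hence these $b$ lines constitute a feasible configuration for $\bar\pi^1_2$, giving $\bar\pi^1_2(G)\le b$. Applying the same observation to the $a$ vertical grid lines yields $\bar\pi^1_2(G)\le a$.

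Combining the two estimates with the elementary inequality $\min\{a,b\}\le\sqrt{ab}$, I get
\[
  \bar\pi^1_2(G)\ \le\ \min\{a,b\}\ \le\ \sqrt{ab}\ =\ \sqrt{\area(G)},
\]
which is exactly the claimed bound.

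The only mildly delicate point is making the reduction to a grid whose point count is $\area(G)$ precise: starting from an arbitrary rectangular grid that contains a drawing of $G$, I would restrict attention to the sub-grid spanned by the $x$- and $y$-coordinates actually occupied by vertices. This can only decrease the number of grid points, it keeps the drawing crossing-free, and the families of vertical (resp. horizontal) lines through the vertices have sizes equal to the numbers of used columns (resp. rows), which do not exceed $a$ and $b$. I do not expect any genuine obstacle here — once one notices that grid rows and grid columns are each parallel families of lines covering all vertices, the lemma is essentially immediate from $\min\{a,b\}\le\sqrt{ab}$.
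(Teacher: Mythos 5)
Your argument is correct: the $b$ horizontal (or $a$ vertical) grid lines are parallel, contain every vertex of an optimal grid drawing, and the drawing is straight-line and crossing-free, so $\bar\pi^1_2(G)\le\min\{a,b\}\le\sqrt{ab}=\sqrt{\area(G)}$. The paper states this lemma without proof, and your one-liner is exactly the argument the authors evidently had in mind; there is nothing missing, and the final remark about restricting to occupied rows and columns is not even needed once you take a grid realizing the minimum $\area(G)$.
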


Di Battista and Frati~\cite{df-sadog-Algorithmica07} have shown that
outerplanar graphs can be drawn straight-line in area $O(n^{1.48})$,
which yields the following corollary.

\begin{corollary}[cf.~\cite{df-sadog-Algorithmica07}]
  For any outerplanar graph~$G$ with $n$ vertices, it holds that
  $\bar\pi^1_2(G)=O(n^{0.74})$.
\end{corollary}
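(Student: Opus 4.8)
The plan is to take a grid drawing that witnesses $\area(G)$ and simply reinterpret one of its two families of grid lines as the system of parallel lines carrying all vertices. This makes the bound almost immediate once the right drawing is in hand.

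First I would fix, by definition of $\area(G)$, a straight-line crossing-free drawing $D$ of $G$ whose vertices all sit at grid points of an axis-parallel rectangular grid with $w$ columns and $h$ rows, where $w\cdot h=\area(G)$. Relabelling the axes if necessary, assume $w\le h$; by the inequality between geometric and arithmetic (here: minimum) values, $w=\min(w,h)\le\sqrt{wh}=\sqrt{\area(G)}$. The key observation is then that every vertex of $D$ lies on one of the $w$ vertical grid lines $x=1,\dots,x=w$, these $w$ lines are pairwise parallel, and $D$ is a crossing-free straight-line drawing. Hence $D$ itself certifies that $\bar\pi^1_2(G)\le w$, and since $\bar\pi^1_2(G)$ and $w$ are integers with $w\le\sqrt{\area(G)}$, we obtain $\bar\pi^1_2(G)\le\lfloor\sqrt{\area(G)}\rfloor\le\sqrt{\area(G)}$.

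I do not expect any real obstacle here; the only points that need a word are (i) that the drawing realizing $\area(G)$ may be taken crossing-free — which is built into the notion of straight-line drawing used throughout the paper — and (ii) that we are free to use whichever of the two orthogonal grid directions contains fewer lines, so choosing the shorter side of the bounding rectangle is what yields the $\min(w,h)\le\sqrt{wh}$ estimate. The corollary for outerplanar graphs then follows by plugging in the $O(n^{1.48})$ area bound of Di Battista and Frati and taking square roots.
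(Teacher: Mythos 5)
Your argument is correct and follows exactly the paper's route: the paper derives the corollary from the lemma $\bar\pi^1_2(G)\le\sqrt{\area(G)}$ (whose proof, omitted in the paper, is precisely your observation that the $\min(w,h)$ parallel grid lines of an area-optimal drawing already carry all vertices) combined with the Di Battista--Frati $O(n^{1.48})$ area bound for outerplanar graphs. Nothing further is needed.
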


\begin{question}%
  Is it true that, for any 2-tree~$G$, $\bar\pi^1_2(G)=o(n)$?
  What is the lower bound of $\bar\pi^1_2(G)$ for a 2-tree~$G$ in the
  worst case?
\end{question}

\section{The Affine Cover Numbers of General Graphs: Proofs}
\label{sec:gen-appendix}

A complete $r$-partite graph is called
\emph{balanced} if any two of its classes
differ by at most one in size.  Let $K^r(n)$ denote a
balanced complete $r$-partite graph with $n\ge r$ vertices.  In other
words, the vertex set of~$K^r(n)$ is split into $r$ disjoint classes,
$V_1,\dots,V_r$ such that $|V_i| \in \{\lfloor n/r\rfloor,\lceil
n/r\rceil\}$.

\begin{lemma}[{\cite[Lemma]{ptt-3dgdg-GD97}}]\label{pttDrawing}
  For any $r\ge2$ and for any $n$ divisible by $r$, the balanced
  complete $r$-partite graph $K^r(n)$ with vertex
  set~$V=V_1\cup\dots\cup V_r$ has a 3D-grid drawing that fits into a
  box of size $r\times 4n\times 4rn$.  The drawing is such that any
  class~$V_i$ is collinear.
\end{lemma}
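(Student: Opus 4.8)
The plan is to place each class $V_i$ on its own line $\ell_i$ inside the plane $\{x=i\}$, spacing the $m:=n/r$ vertices of $V_i$ equally along $\ell_i$, and to choose the placement generically enough that the straight-line drawing of $K^r(n)$ has no crossings. Concretely, take $P_{i,k}=(i,\,y_{i,k},\,\lambda_i y_{i,k})$ for $k=1,\dots,m$, where $\lambda_1,\dots,\lambda_r$ are pairwise distinct positive integers (the slopes of the $\ell_i$); then every vertex lies on the hyperbolic paraboloid $Q\colon z=\lambda(x)y$, $\lambda$ being the affine function with $\lambda(i)=\lambda_i$ (simplest choice: $\lambda_i=i$, so $Q\colon z=xy$). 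The coordinates $y_{i,k}$ are the part that makes the construction work: assign to class $i$ the integers in $\{1,\dots,n\}$ congruent to $i$ modulo $r$, i.e.\ $y_{i,k}=i+(k-1)r$. With $\lambda_i=i$ this gives $x\le r$, $y\le n$, and $z\le rn$, so after routine bookkeeping the drawing fits in the box $r\times 4n\times 4rn$, the factor~$4$ leaving slack for adjusting the pattern.

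Three things then need to be checked, the first two routinely. (1) No three vertices from three distinct classes are collinear: a line meeting the quadric $Q$ in three points lies on $Q$, hence in one of its two rulings $\{x=\mathrm{const}\}$ or $\{y=\mathrm{const}\}$; the former puts the three points in a single class, while the latter is impossible because the $y_{i,k}$ are pairwise distinct. This rules out an edge running through a non-incident vertex and two adjacent edges overlapping. (2) Two vertex-disjoint edges between the same class pair $\{a,b\}$ ($a<b$) do not cross: if they met at some $x_0$, then parametrizing both edges by $s=(x_0-a)/(b-a)$ and equating their $y$- and $z$-coordinates produces a homogeneous linear system whose $2\times 2$ coefficient matrix has rows $(1,1)$ and $(\lambda_a,\lambda_b)$, hence invertible determinant $\lambda_b-\lambda_a$; this forces $s=0$ and $s=1$ simultaneously, a contradiction.

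The crux is (3): two vertex-disjoint edges $e$ (between classes $a<b$) and $f$ (between classes $c<d$) with $\{a,b\}\ne\{c,d\}$ do not cross. Here I would use that a chord of $Q$ joining $P_{a,k}$ to $P_{b,\ell}$, parametrized by $s$, sits at height $z-\lambda(x)\,y=s(1-s)(\lambda_a-\lambda_b)(y_{a,k}-y_{b,\ell})$ above $Q$. A crossing at some $x_0$ in the common $x$-range forces $y_e(x_0)=y_f(x_0)$, which is linear in $x_0$ and so pins $x_0$ to a rational $x_0^\ast$ determined by $a,b,c,d$ and the four $y$-values involved (the sub-case where the two $xy$-projections are parallel or coincide being handled separately), and in addition forces the two heights above $Q$ to agree at $x_0^\ast$. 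Substituting $x_0^\ast$, the latter condition becomes a polynomial relation among $a,b,c,d$ and the four $y_{i,k}$; the plan is to refute it by clearing denominators and reducing modulo~$r$, where $y_{i,k}\equiv i\pmod r$ turns the relation into a congruence incompatible with the chosen pattern. One organizes this according to how the two class pairs interleave --- nested, properly interleaved, or sharing an endpoint. I expect pinning down a concrete $y$-pattern that both fits the width-$4n$ box and defeats every such coincidence, together with the ensuing case analysis, to be the main obstacle; steps (1), (2), and the final grid-size accounting are routine.
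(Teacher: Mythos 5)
Your geometric skeleton coincides with the paper's (which quotes Pach--Thiele--T\'oth): class $V_i$ is placed on the line $x=i$, $z=iy$, so all vertices lie on the quadric $z=xy$, and your steps (1), (2) and the height formula $z-xy=\mu(1-\mu)(\lambda_a-\lambda_b)(y_{a,k}-y_{b,\ell})$ are fine. The gap is exactly the step you flag as the ``main obstacle,'' and it is not merely unfinished: the concrete pattern you propose, $\lambda_i=i$ and $y_{i,k}=i+(k-1)r$ (i.e.\ $y\equiv i\pmod r$), genuinely admits crossings between edges joining two disjoint class pairs. Take $r=4$, $n=12$: the points $A=(1,5,5)$, $B=(4,8,32)$, $C=(2,2,4)$, $D=(3,11,33)$ are all legal placements under your scheme (classes $1,4,2,3$, with $y\equiv 1,0,2,3\pmod 4$), yet the edges $AB$ and $CD$ of $K^4(12)$ both pass through the point $(2.5,\,6.5,\,18.5)$, at parameter $\tfrac12$ on each segment. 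The reason the ``reduce mod $r$'' plan cannot close with a linear residue pattern is visible in this example: the two crossing conditions only impose linear relations among the $y$-values (here $v-u=3(t-s)$ and $s-u=t-s$), and such relations are simultaneously satisfiable within the residue classes $y\equiv i\pmod r$.

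What the paper (following Pach et al.) actually does differently is precisely the ingredient you are missing: the $y$-coordinates of class $i$ are chosen by a \emph{quadratic} congruence modulo a \emph{prime}, namely $t\equiv i^2\pmod p$ with the smallest prime $p\ge 2r-1$; Bertrand's postulate gives $p<4r$, which is where the factors $4n$ and $4rn$ in the box come from (your tighter $r\times n\times rn$ accounting is a symptom that your pattern is not the one that works). With that choice, a hypothetical crossing yields congruences modulo $p$ involving both the class indices and their squares, which over $\mathbb{Z}_p$ force the two edges to join the same pair of classes -- the case your step (2) already excludes. A linear pattern modulo $r$ has no analogue of this rigidity, so to complete your proof you would have to replace it by the quadratic-residue choice (or something equally non-linear) and carry out the corresponding congruence case analysis, which is exactly the content of the cited lemma of Pach, Thiele, and T\'oth.
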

We include the proof for the reader's convenience.
\begin{proof}
  Let $p$ be the smallest prime with $p\ge 2r - 1$ and set
  $N:=pn/r$. By Bertrand's postulate, $p<4r$ and, hence, $N<4n$.
  For any $0\le
  i\le r-1$, let $V_i=\{(i,t,it) \colon 0\le t<N, t\equiv i^2\pmod p\}$.

  Note that $V_i$ is contained in the line
  $\ell_i=\{(i,0,0)+t(0,1,i) \colon t\in\mathbb R\}$.
  These sets are pairwise disjoint, and each of them has
  precisely $N/p=n/r$ elements.  Connect any two points belonging to
  different $V_i$'s by a straight-line segment.  The resulting drawing
  of $K^r(n)$ fits into a box of size $r\times 4n\times 4rn$, as
  desired.  Pach et al.~\cite[Lemma]{ptt-3dgdg-GD97} showed that no
  two edges of this drawing cross each other.  Moreover, Case~2 of
  their proof implies that, for any~$i$, no edge of this drawing
  crosses a segment between two consecutive vertices of~$V_i$ placed
  along the line~$\ell_i$.  So we can join these vertices by edges
  without adding crossings to the drawing.
\end{proof}

We utilize the construction of Lemma \ref{pttDrawing} to show that $\lva{G} = \pi^1_3(G)$ and that
any graph $G$ admits a drawing which fits in a small 3D integer grid in terms of $\pi^1_3(G)$ and~$n$.

\begin{backInTime}{PiChar0}
\begin{theorem}
\contentThmPiCharZero
\end{theorem}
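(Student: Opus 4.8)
The plan is to prove the two directions of the equality $\pi^1_3(G) = \lva G$ separately, and then extract the grid-size bound from the upper-bound construction.

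First I would handle the easy inequality $\pi^1_3(G) \le \lva G$. Suppose $V(G) = V_1 \cup \dots \cup V_r$ with each $G[V_i]$ a linear forest and $r = \lva G$. The natural idea is to invoke Lemma~\ref{pttDrawing} (the construction of Pach et al.): take the balanced complete $r$-partite graph $K^r(n)$ whose parts are the $V_i$ (after padding $n$ up to a multiple of $r$ with isolated dummy vertices, which only affects constants). That lemma gives a crossing-free 3D-grid drawing in a box $r \times 4n \times 4rn$ in which each class $V_i$ sits collinearly on a line $\ell_i$, and — crucially — in which we may additionally add the path-edges joining consecutive vertices of each $V_i$ along $\ell_i$ without introducing crossings. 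Since $G[V_i]$ is a linear forest, each component is a path, so after ordering the vertices of $V_i$ along $\ell_i$ so that every path appears as a contiguous block in the correct internal order, the edges of $G[V_i]$ are a subset of the consecutive-vertex segments on $\ell_i$; and every edge of $G$ between different classes is an edge of $K^r(n)$, hence already drawn without crossings. Deleting the dummy vertices and any surplus edges leaves a crossing-free straight-line drawing of $G$ with all vertices on $r$ lines, so $\pi^1_3(G) \le r = \lva G$. This also delivers the grid bound in the ``moreover'' part, possibly with slightly adjusted constants from the padding, which I would absorb into the stated $r \times 4rn \times 4r^2n$ (note the box dimensions scale because padding $n$ to $\le n + r \le rn$ multiplies the Pach et al.\ box by roughly $r$).

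Next, the reverse inequality $\lva G \le \pi^1_3(G)$. Fix an optimal drawing of $G$ on $r = \pi^1_3(G)$ lines $\ell_1, \dots, \ell_r$. Assign to each vertex $v$ of $G$one line containing $v$ (break ties arbitrarily), giving a partition $V = V_1 \cup \dots \cup V_r$. I claim each $G[V_i]$ is a linear forest. The vertices of $V_i$ lie on the single line $\ell_i$; any edge of $G[V_i]$ is a straight segment between two of these points, hence lies along $\ell_i$. So $G[V_i]$ is drawn entirely on one line as a set of collinear points with some segments between them — and because the drawing is crossing-free (in particular no two edges overlap along $\ell_i$), these segments form a sub-union of the ``consecutive point'' segments in the linear order of $V_i$ along $\ell_i$. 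A graph that embeds on a line this way has maximum degree at most $2$ and contains no cycle (a cycle would have to close up, forcing overlapping segments on the line), so it is a disjoint union of paths, i.e.\ a linear forest. Hence $\lva G \le r = \pi^1_3(G)$.

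Combining the two inequalities gives $\pi^1_3(G) = \lva G$, and the upper-bound construction gives the grid statement. I expect the only delicate point to be the bookkeeping in the first direction: making sure the vertices of each class $V_i$ can be ordered along $\ell_i$ so that the linear-forest edges become consecutive-vertex segments (this just needs each path laid out contiguously), and tracking how padding $n$ to a multiple of $r$ inflates the box dimensions to match the claimed $r \times 4rn \times 4r^2n$. Everything else is a direct reading of Lemma~\ref{pttDrawing} and the elementary observation that a straight-line crossing-free drawing confined to a single line is exactly a linear forest layout.
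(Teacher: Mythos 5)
Your overall strategy is the same as the paper's: the lower bound $\lva G\le\pi^1_3(G)$ by observing that the vertices assigned to one line induce a crossing-free collinear drawing, hence a linear forest (the paper treats this direction as obvious, and your argument for it is correct), and the upper bound by embedding the lva-partition classes onto the collinear classes of the Pach et al.\ construction, laying each path out contiguously and using the observation at the end of Lemma~\ref{pttDrawing} that consecutive-vertex segments on a class line can be added without crossings.

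There is, however, a concrete flaw in your upper-bound bookkeeping. You take ``the balanced complete $r$-partite graph $K^r(n)$ whose parts are the $V_i$, after padding $n$ up to a multiple of $r$,'' i.e.\ you add at most $r-1$ dummy vertices, so each class of $K^r(n)$ has only about $n/r$ vertices. But an optimal lva partition is in general badly unbalanced: for the star $K_{1,n-1}$ we have $\lva G=2$, and any valid 2-part partition has one part with at least $n-3$ vertices, which does not fit into a class of size roughly $n/2$. So the step ``whose parts are the $V_i$'' fails as stated, and your parenthetical claim that padding to $n+r\le rn$ ``multiplies the box by roughly $r$'' does not repair it (padding by $O(r)$ dummies leaves the box at $r\times O(n)\times O(rn)$, and still leaves the classes too small). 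The fix is exactly what the paper does: apply Lemma~\ref{pttDrawing} to $K^r(rn)$, so that every class has $n$ vertices and can accommodate the largest part $V'_i$; this is precisely the reason the theorem's grid bound reads $r\times 4rn\times 4r^2n$ rather than $r\times 4n\times 4rn$. With that substitution your argument goes through verbatim.
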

\end{backInTime}
\begin{proof}
Let $r=\lva G$. The inequality $r\le\pi^1_3(G)$ is obvious.
We now prove $\pi^1_3(G)\le r$.  Let
$V(G)=V'_1\cup\dots\cup V'_r$ be a partition such that
each $G[V'_i]$ is a linear forest. Associate a graph
$K^r(nr)$ with its drawing from Lemma~\ref{pttDrawing}.
Let $V(K^r(rn)) = V_1 \cup\dots\cup V_r$ be the canonical partition
of the set $V(K^r(rn))$. Since $|V'_i|\le |V|=n=|V_i|$ for each $i$,
there exists a map $f \colon V(G)\to V(K^r(rn))$ such that,
for each $i$, $f$ maps adjacent vertices of the linear forest
$G[V'_i]$ into consecutive vertices of the set~$V_i$ placed along a
line~$\ell_i$.

Then the observation at the end of the proof of
Lemma~\ref{pttDrawing} implies that $f(V(G))$ induces a crossing-free
straight-line drawing of~$G$. %
\end{proof}

\begin{backInTime}{thm:PiCharTwo}
  \begin{theorem}
    \contentThmPiCharTwo
  \end{theorem}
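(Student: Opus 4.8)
The plan is to prove the two inequalities $\vt(G)\le\pi^2_3(G)$ and $\bar\pi^2_3(G)\le\vt(G)$, the second one together with the stated grid bound. Since $\pi^2_3(G)\le\bar\pi^2_3(G)$ always holds, these force $\pi^2_3(G)=\bar\pi^2_3(G)=\vt(G)$, and because the construction proving $\bar\pi^2_3(G)\le\vt(G)$ uses exactly $\vt(G)$ parallel planes, it simultaneously witnesses the grid size and the remark that the drawing occupies $\vt(G)$ planes. The inequality $\vt(G)\le\pi^2_3(G)$ is the easy half: given a crossing-free straight-line $3$D-drawing of $G$ whose vertices lie on planes $P_1,\dots,P_r$ with $r=\pi^2_3(G)$, assign each vertex to one plane containing it, obtaining a partition $V(G)=V_1\cup\dots\cup V_r$; for each $i$, every edge of $G[V_i]$ joins two points of $P_i$, so its segment lies in $P_i$, and restricting the crossing-free drawing to $G[V_i]$ yields a crossing-free straight-line drawing of $G[V_i]$ inside the plane $P_i$, proving $G[V_i]$ planar and $\vt(G)\le r$.

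For the converse, fix $r=\vt(G)$ and a partition $V(G)=V_1\cup\dots\cup V_r$ with every $G[V_i]$ planar, and take the parallel planes $\Pi_i=\{x=i\}$ in $\reals^3$. On $\Pi_i$ I would start from a crossing-free straight-line grid drawing $\delta_i$ of $G[V_i]$ on an $O(n)\times O(n)$ integer grid (de Fraysseix--Pach--Pollack / Schnyder). The intra-plane edges are then automatically fine, so the entire difficulty lies in the inter-plane edges: an edge $uw$ with $u\in V_i$, $w\in V_j$, $i<j$, must avoid (a)~every other inter-plane edge, and (b)~the drawn vertices and edges of every \emph{intermediate} graph $G[V_k]$ with $i<k<j$, which the segment $uw$ meets at its intersection with $\Pi_k$.

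The way around this is to reposition the $r$ planar pictures — by applying within each $\Pi_i$ a suitable affine transformation, which preserves crossing-freeness, so $G[V_i]$ stays planarly drawn — so that their ``footprints'' land at points $B_1,\dots,B_r$ in general position in $\reals^3$; one may take these on a moment-curve-like arc. Then the inter-plane edges behave like the edges of a straight-line drawing of $K_r$ on $r$ points in general position in $3$-space, which is crossing-free, and since no three of the $B_i$ are collinear, no inter-plane edge passes through an intermediate footprint once the footprints are small enough that the $O(n)$-sized pictures do not spoil these incidences. To turn ``generic'' into an explicit small-grid placement, I would run this as a one-parameter argument: deform the configuration along a single parameter $t$ (combining an explicit integer affine deformation of each $\delta_i$ with a large enough initial scaling, so that each $G[V_i]$ stays validly drawn throughout the relevant range of $t$). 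Each forbidden event — two edges of $G$ crossing, or an inter-plane edge meeting a vertex or an edge of an intermediate $G[V_k]$ — becomes the vanishing of a polynomial in $t$ of degree $O(1)$ that is not identically zero (this is where the general-position choice of directions enters), hence excludes only $O(1)$ values of $t$; and there are only $O(m^2)$ such events (at most $m$ inter-plane edges, and for a fixed one the intermediate graphs contribute at most $\sum_k|E(G[V_k])|\le m$ edges and $\sum_k|V_k|\le n$ vertices to avoid, while pairs of edges contribute $O(m^2)$). So some integer $t=O(m^2)$ works, placing every vertex on a grid of size $r\times O(m^2)\times O(m^2)$.

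The main obstacle is item (b): on \emph{parallel} planes an inter-plane edge may be forced to traverse several intermediate planes, and it is precisely keeping it clear of those planes' drawings that rules out the naive ``stack the planar pictures'' approach and makes the general-position / moment-curve idea necessary; the secondary delicate point is the bookkeeping that keeps the number of forbidden parameter values at $O(m^2)$ rather than a larger polynomial in $n$. Note that $\pi^2_3(G)=\bar\pi^2_3(G)$ requires no extra work, since the whole construction already lives on parallel planes.
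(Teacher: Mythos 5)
Your easy direction ($\vt(G)\le\pi^2_3(G)$) and your overall plan (put each class $V_i$ as a planar grid drawing on one of $r=\vt(G)$ parallel planes, then perturb within the planes and do a union-bound/parameter count to land on a grid of size $r\times O(m^2)\times O(m^2)$) coincide with the paper's proof. However, there is a genuine gap at exactly the point where the paper has to work hardest: crossings between two inter-plane edges $u_1v_1$ and $u_2v_2$ with $u_1,u_2\in V_i$ and $v_1,v_2\in V_j$, i.e.\ both edges running between the \emph{same} pair of planes. Whether such a quadruple is coplanar (the prerequisite for a crossing) is invariant under arbitrary translations of the pictures inside their planes: with $a=u_2-u_1$, $b=v_1-u_1$, $c=v_2-u_1$, translating the drawing in plane $j$ by a vector $w$ changes the coplanarity determinant by $\det[a,\,w,\,c-b]$, and $a$, $w$, $c-b$ all lie in the two-dimensional direction space of the parallel planes, so this correction is identically zero. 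Consequently your central device -- repositioning the footprints to base points $B_1,\dots,B_r$ in general position on a moment-curve-like arc -- can never remove such a crossing; the analogy with a crossing-free drawing of $K_r$ breaks down precisely because $G$ may have many edges between $V_i$ and $V_j$ while $K_r$ has only one, and whether two of them cross depends on the internal geometry of $\delta_i$ and $\delta_j$, not on the base points. Your one-parameter family is called an ``integer affine deformation'', but the only non-degeneracy argument you offer (``this is where the general-position choice of directions enters'') is exactly the ingredient that is powerless here, so the claim that every forbidden event is a not-identically-zero polynomial in $t$ is unsubstantiated for these pairs.

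This is precisely why the paper perturbs each $\delta_i$ by a random map $T_{a,b,p,q}$ whose linear part is a dilating rotation $\bigl(\begin{smallmatrix}a&-b\\ b&a\end{smallmatrix}\bigr)$ with coprime $1\le b<a\le t$, on top of a random integer shift $(p,q)$ with $0\le p,q<s$: the shifts suffice for edge-through-vertex events (probability $\le 1/s^2$ each) and for crossings in which some class contains exactly one of the four endpoints (probability $\le 1/s$ each), while the rotations are needed solely for the same-pair case, where fixing the shifts and randomizing $(a,b)$ kills the crossing in all but at most one of the $\Theta(t^2)$ Farey pairs, giving probability $O(1/t^2)$; choosing $s=O(m^2)$ and $t=O(m)$ makes the total failure probability less than $1$ and yields the $r\times O(m^2)\times O(m^2)$ grid. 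To repair your write-up you would have to build a genuine rotation or shear of each picture into your deformation and prove non-degeneracy for same-pair edge pairs directly (noting also that a rotation of $\delta_i$ must preserve its planarity, which it does, and keep coordinates integral and of size $O(m^2)$); the rest of your outline, including the observation that parallel planes give $\pi^2_3(G)=\bar\pi^2_3(G)=\vt(G)$ for free, is fine and matches the paper.
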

\end{backInTime}
\begin{proof}
The bounds $\vt(G)\le\pi^2_3(G)\le\bar\pi^2_3(G)$ are obvious.
The bound  $\bar\pi^2_3(G)\le\vt(G)$ follows from the existence
of a drawing with the specified volume, so we need to prove the last fact.

\newcommand{\pert}{\delta'}

Let $r = \vt(G)$ and let $V_1,\ldots,V_r$ be a partition of the vertex set of $G$ such that
each $G[V_i]$ is a planar graph. As well known, every planar graph admits a plane straight-line
drawing on an $O(n) \times O(n)$ grid~\cite{s-epgg-SODA90,fpp-hdpgg-Comb90}.
Let us fix such a drawing $\delta_i$ for each $G[V_i]$ and place it in
the plane $z=i$. Call an edge $uv$ \emph{horizontal} if both $u$ and $v$
belong to the same $V_i$ for some $i\le r$.
We now have to resolve two problems:
\begin{itemize}
\item
A non-horizontal edge can pass through a vertex of some $\delta_i$;
\item
Two edges that are not both horizontal can cross each other.
\end{itemize}
In order to remove all possible crossings, we replace each $\delta_i$
with its random perturbation $\pert_i$ (still in the same plane $z=i$)
and prove that, with non-zero probability, no crossing occurs.

Specifically, let $s$ and $t$ be parameters that will be chosen later.
Let $T_{a,b,p,q}$ be an affine tranformation of the $(x,y)$-plane
defined by
$$
T_{a,b,p,q}\left(\begin{array}{c}x\\y\end{array}\right)=
\left(\begin{array}{cc}a&-b\\b&a\end{array}\right)\left(\begin{array}{c}x\\y\end{array}\right)+
\left(\begin{array}{c}p\\q\end{array}\right),
$$
where $a$, $b$, $p$, and $q$ are integers such that $0\le p,q<s$, $1\le b<a\le t$, and $a$ and $b$ are coprime.
Note that $T_{a,b,p,q}$ consists of a dilating rotation followed by a shift,
and that it transforms integral points into integral points.
The random drawings $\pert_i$ are obtained by choosing $a$, $b$, $p$, and $q$
at random and applying $T_{a,b,p,q}$ to $\delta_i$
(this is done independently for different $i\le r$).
Note that the resulting drawing occupies a 3D grid of size
$r\times O(tn+s)\times O(tn+s)$.

For each fixed edge $uv$ and vertex $w$ such that $u\in V_i$, $w\in V_j$,
and $v\in V_k$ for some $i<j<k$, let us estimate the probability that
$uv$ passes through $w$ in the drawing.
Conditioned on the positions of $\pert_l$ for all $l\ne j$
and on the choice of the parameters $a$ and $b$ in $T_{a,b,p,q}$ for $\pert_j$,
this probability is clearly at most $1/s^2$. Therefore, this
probability is at most $1/s^2$ also if all $\pert_l$ are chosen at random.
It follows that there is a non-horizontal edge passing through some vertex
with probability at most $mn/s^2$.

Consider now two edges $u_1v_1$ and $u_2v_2$. If there is an $i\le r$
such that $V_i$ contains exactly one of the vertices $u_1$, $v_1$, $u_2$, and $v_2$,
then an argument similar to the above shows that $u_1v_1$ and $u_2v_2$ cross
with probability at most $1/s$. It follows that some edges of this kind
cross each other with probability at most $m^2/s$.

Suppose now that $u_1,u_2\in V_i$ and $v_1,v_2\in V_j$.
Note that shifts cannot resolve the possible crossing of the edges $u_1v_1$ and $u_2v_2$.
Luckily, if we fix $\pert_i$ and ``rotate'' $\pert_j$ by means of $T_{a,b,p,q}$
with random $a,b$ and fixed $p,q$, then $u_1v_1$ and $u_2v_2$ will cross
in at most one case. The probability of this event is bounded by $O(1/t^2)$
because the number of coprime $a$ and $b$ such that $1\le b<a\le t$
is equal to the number of Farey fractions of order $t$, which is known to be
asymptotically $\frac3{\pi^2}t^2+O(t\log t)$ \cite{GKP-book}.
It follows that some edges of this kind
cross each other with probability at most $O(m^2/t^2)$.
Summarizing, we see that the random drawing of $G$ will have a crossing with
probability bounded by
$$
\frac{mn}{s^2} + \frac{m^2}{s} + O\left(\frac{m^2}{t^2}\right).
$$
This probability can be ensured to be strictly smaller than 1
by choosing parameters $s=O(m^2)$ and $t=O(m)$.
We conclude that for such choice of $s$ and $t$ there is at least one crossing-free drawing.
Since $O(tn+s)=O(mn+m^2)=O(m^2)$ (the latter equality being true for $G$ with
no isolated vertex), such a drawing occupies volume $r\times O(m^2)\times O(m^2)$.
\end{proof}

\begin{backInTime}{pi^1_3(K_n)}
\begin{example}
\contentExPiK
\end{example}
\end{backInTime}

\begin{proof}
(a--b)   The lower bound for $\pi^1_3(K_n)$ and the upper bound for $\pi^1_3(K_{p,q})$
follow from Corollary~\ref{PiChi}.
  The upper bound $\pi^1_3(K_n)\le\lceil n/2\rceil$ is given by any
3-dimensional drawing of $K_n$; we can split the vertices in pairs
and draw a line through each pair.
(c)
By Theorem~\ref{thm:PiCharTwo}, $\pi^2_3(K_n)$ is equal to the smallest size $r$ of a partition
$V(G)=V_1\cup\ldots\cup V_r$ such that every $V_i$ induces a planar subgraph of $K_n$,
that is iff every $V_i$ has size at most $4$ (because $K_4$ is planar and $K_5$ is not).
Such a partition exists iff $r\ge \lceil n/4\rceil$.
\end{proof}

\begin{backInTime}{ex:rho12(K_n)}
\begin{example}
\contentExRho
\end{example}
\end{backInTime}

\begin{proof}
  Brief comments on this example:
  (a)~Any line contains at most one of the $\binom{n}{2}$ edges
  of~$K_n$, otherwise the line would contain a triangle.
  (b)~In any drawing that realizes $\rho_3^1(K_{p,q})$, each line
  contains at least one and at most two of the $pq$ edges of~$K_{p,q}$.
\end{proof}

\begin{backInTime}{Kpq}
\begin{example}
\contentExRhoKpq
\end{example}
\end{backInTime}

\begin{proof}
  Indeed, let $S$ be a set of planes underlying a drawing of $K_{p,q}$. Every plane in $S$ contains at most two
  points of either type $p$ or of type $q$, otherwise it would contain
  the non-planar~$K_{3,3}$.  Hence, every plane in~$S$ covers at most
  $2q$ edges. Given $pq$ edges in total, we get~$|S|\ge
  \lceil pq/2q \rceil = \lceil p/2 \rceil$.
  This lower bound is tight.
  Place all points of
  type~$q$ on a line~$\ell$ and introduce $\lceil p/2 \rceil$ distinct
  planes containing~$\ell$.  Line~$\ell$ divides the planes into $2\lceil
  p/2 \rceil\ge p$ half-planes.  Put every point of type $p$ in one of
  these half-planes and connect it to the points on~$\ell$.
\end{proof}

\begin{figure}
  \newcommand{\radius}{2}
  \begin{subfigure}[b]{0.4\linewidth}
    \centering
    \begin{tikzpicture}[]
      \foreach \angle/\i in {1,2,...,6} {
        \coordinate (o\i) at (\i*60+30:\radius);
      }
      \coordinate (o1r)  at ($ (o1) + (0.04,0) $);
      \coordinate (o1b)  at ($ (o1) - (0.04,0) $);
      \coordinate (o2bl) at ($ (o2) + (0.04,0) $);
      \coordinate (o2b)  at ($ (o2) - (0.04,0) $);
      \coordinate (o3bl) at ($ (o3) + (0.04,0) $);
      \coordinate (o3b)  at ($ (o3) - (0.04,0) $);
      \coordinate (o4r)  at ($ (o4) + (0.04,0) $);
      \coordinate (o4b)  at ($ (o4) - (0.04,0) $);
      \coordinate (o5bl) at ($ (o5) - (0.04,0) $);
      \coordinate (o5r)  at ($ (o5) + (0.04,0) $);
      \coordinate (o6bl) at ($ (o6) - (0.04,0) $);
      \coordinate (o6r)  at ($ (o6) + (0.04,0) $);
      \draw[black] (o2bl) -- (o3bl) -- (o6bl) -- (o5bl) -- (o2bl)
        (o3bl) -- (o5bl)
        (o2bl) -- (o6bl);
      \draw[red] (o1r) -- (o4r) -- (o5r) -- (o6r) -- (o1r)
        (o1r) -- (o5r)
        (o4r) -- (o6r);
      \draw[blue] (o1b) -- (o4b) -- (o3b) -- (o2b) -- (o1b)
        (o2b) -- (o4b)
        (o3b) -- (o1b);
      \foreach \angle/\i in {1,2,...,6} {
        \fill(o\i) circle (2pt);
      }
    \end{tikzpicture}
    \caption{$c(K_6,K_4)=3$.}
    \label{fig:cK6K4}
  \end{subfigure}
  \hfill
  \begin{subfigure}[b]{0.4\linewidth}
    \centering
    \begin{tikzpicture}[]
      \foreach \angle/\i in {1,2,...,7} {
        \coordinate (o\i) at (\i*51.43+37:\radius);
      }

      \draw[green!70!black] (o2) -- (o3) -- (o7) -- cycle;
      \draw[green!70!black] (o3) -- (o5) -- (o6) -- cycle;
      \draw[red] (o4) -- (o6) -- (o7) -- cycle;
      \draw[blue] (o1) -- (o2) -- (o6) -- cycle;
      \draw[blue] (o1) -- (o3) -- (o4) -- cycle;
      \draw[black] (o2) -- (o4) -- (o5) -- cycle;
      \draw[black] (o1) -- (o5) -- (o7) -- cycle;

      \foreach \angle/\i in {1,2,...,7} {
        \fill(o\i) circle (2pt);
      }
    \end{tikzpicture}
    \caption{$c(K_7,K_3)=7$.}
    \label{fig:cK7K3}
  \end{subfigure}
  \caption{Combinatorial bounds for the numbers of $K_3$'s and $K_4$'s
    needed to cover $K_7$ and $K_6$, respectively.}
\end{figure}
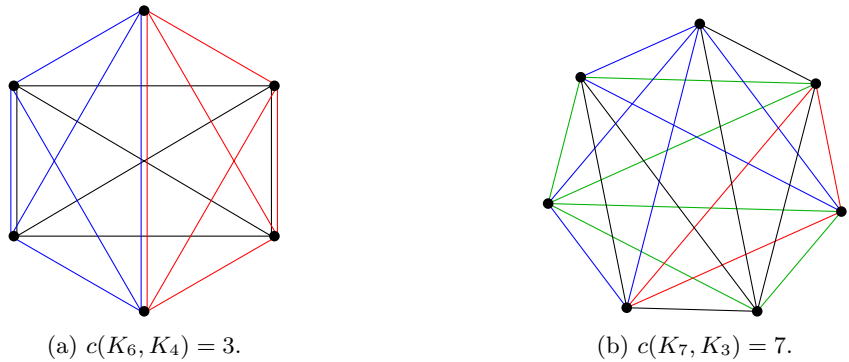
\begin{backInTime}{thm:KnLowerUpper}
  \begin{theorem}
    \contentThmKnLowerUpper
  \end{theorem}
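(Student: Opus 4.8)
The plan is to prove the two asymptotic equalities for the combinatorial covering numbers separately from the two inequalities that sandwich $\rho^2_3(K_n)$, and then to combine them. A combinatorial cover of $K_n$ by copies of $K_3$ (resp.\ $K_4$) is exactly a covering design --- a family of $3$-element (resp.\ $4$-element) subsets of $\{1,\dots,n\}$ in which every pair of elements lies in at least one member --- so counting pairs gives the trivial lower bounds $c(K_n,K_3)\ge\binom n2/3=\of{1/6+o(1)}n^2$ and $c(K_n,K_4)\ge\binom n2/6=\of{1/12+o(1)}n^2$. For the matching upper bounds I would invoke the classical existence theorems for Steiner systems: a Steiner triple system $S(2,3,n)$ exists whenever $n\equiv1,3\pmod6$ (Kirkman) and a Steiner system $S(2,4,n)$ exists whenever $n\equiv1,4\pmod{12}$ (Hanani), and in each admissible case the system has exactly $\binom n2/3$ (resp.\ $\binom n2/6$) blocks and already is a cover, so it meets the lower bound with equality. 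For a general $n$ I would pad it up to the next admissible value $n'=n+O(1)$, take the corresponding Steiner system on $\{1,\dots,n'\}$, discard every block meeting $\{1,\dots,n\}$ in fewer than two points, and replace each surviving block $B$ by an arbitrary $3$-set (resp.\ $4$-set) of $\{1,\dots,n\}$ containing $B\cap\{1,\dots,n\}$ --- possible once $n\ge4$, the cases $n\le3$ being trivial. This gives a cover of $K_n$ of size at most $\binom{n'}2/3=\of{1/6+o(1)}n^2$ (resp.\ $\binom{n'}2/6=\of{1/12+o(1)}n^2$), so $c(K_n,K_3)=\of{1/6+o(1)}n^2$ and $c(K_n,K_4)=\of{1/12+o(1)}n^2$.

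For the lower bound $c(K_n,K_4)\le\rho^2_3(K_n)$ (with $n\ge4$; for $n\le3$ one has $c(K_n,K_4)=0$ and nothing to prove), I would take an optimal geometric cover $\mathcal L$ of some crossing-free straight-line drawing $d$ of $K_n$, so that $|\mathcal L|=\rho^2_3(K_n)$, and pass to the induced combinatorial cover $\mathcal K_{\mathcal L}=\{G_\ell\mid\ell\in\mathcal L\}$. As already observed --- five coplanar vertices would yield a plane straight-line drawing of $K_5$ --- each $G_\ell$ is a complete graph on at most four vertices, hence is contained in some $K_4$-subgraph of $K_n$. Replacing every $G_\ell$ by such a $K_4$ turns $\mathcal K_{\mathcal L}$ into a combinatorial cover of $K_n$ by copies of $K_4$ of the same cardinality, whence $c(K_n,K_4)\le|\mathcal L|=\rho^2_3(K_n)$.

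For the upper bound $\rho^2_3(K_n)\le c(K_n,K_3)$, I would fix an optimal combinatorial cover $\{T_1,\dots,T_c\}$ of $K_n$ by triangles, $c=c(K_n,K_3)$, and place the $n$ vertices at points of $\reals^3$ in \emph{general position}, i.e.\ so that no four of them are coplanar; such placements exist because the forbidden configurations form a proper algebraic (hence null) subset of $(\reals^3)^n$, and for $n\ge4$ this condition also rules out three collinear points. The resulting straight-line drawing of $K_n$ is then crossing-free, since two disjoint edges can meet only when their four endpoints are coplanar and no edge can run through a non-incident vertex; moreover each triangle $T_i$ spans a plane $P_i$, and distinct triangles span distinct planes. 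Whenever an edge $uv$ lies in $T_i$, both $d(u)$ and $d(v)$, and therefore the entire segment between them, lie on $P_i$, so $\{P_1,\dots,P_c\}$ is a geometric cover of this drawing and $\rho^2_3(K_n)\le c=c(K_n,K_3)$. Together with the first paragraph, this yields the full chain of (in)equalities for all $n\ge3$.

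I expect the only mildly delicate part to be the upper-bound side of the two covering-number computations: one must check that padding $n$ to an admissible $n'$ costs only $O(n)$ additional blocks and that the discard-and-extend step genuinely preserves the covering property for every residue class of $n$, so that the exact Steiner-system values transfer to asymptotically tight estimates valid for all $n\ge3$. The two geometric inequalities, by contrast, are short and draw only on facts already recorded above, together with the observation that generic point placements in $\reals^3$ produce crossing-free straight-line drawings.
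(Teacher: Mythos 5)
Your proof is correct and follows essentially the same route as the paper: the edge-counting lower bounds together with the Kirkman/Hanani Steiner systems $S(2,3,n)$ and $S(2,4,n)$ (plus padding to the next admissible order) give the asymptotics of $c(K_n,K_3)$ and $c(K_n,K_4)$, the induced combinatorial cover by complete subgraphs on at most four vertices gives $c(K_n,K_4)\le\rho^2_3(K_n)$, and planes spanned by the triangles of an optimal $K_3$-cover of a crossing-free straight-line 3D drawing give $\rho^2_3(K_n)\le c(K_n,K_3)$, your explicit general-position construction of such a drawing being merely a self-contained substitute for the paper's appeal to the standard fact that crossing-free 3D drawings exist. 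Note also that your value $(1/12+o(1))n^2$ for $c(K_n,K_4)$ agrees with the paper's appendix computation $n^2/12+\Theta(n)$, so the coefficient $1/2$ in the displayed statement is evidently a typo.
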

\end{backInTime}
\begin{proof}
  For the lower bound, take a drawing of $K_n$ with a geometric cover $\mathcal
  L$ using $\rho^2_3(K_n)$ planes.
This geometric cover induces a combinatorial cover
  $\mathcal K_{\mathcal L}$ of $K_n$ by $\rho^2_3(K_n)$ copies of~$K_{\leq 4}$.
It follows that $\rho^2_3(K_n)\ge c(K_n,K_4)$.

  For the upper bound, let $d \colon V(K_n)\to\mathbb R^3$ be an
  arbitrary drawing of~$K_n$ in 3-space (with non-crossing edges).
  Since, for any $K_3$-subgraph of $K_n$, its image $d(K_3)$ is
  contained in a plane, it is clear that $\rho^2_3(K_n)\le
  c(K_n,K_3)$.

  Now we show lower and upper bounds for $c(K_n, K_3)$ and $c(K_n, K_4)$
and determine their asymptotics.
  Since the graph $K_n$ has $\binom{n}{2}=n(n-1)/2$ edges
  and each copy of the graph $K_k$ has $\binom{k}{2}=k(k-1)/2$ edges,
  we see that $c(K_n,K_k)\ge \frac{n(n-1)}{k(k-1)}$,
  in particular we get $c(K_n,K_4)\ge n^2/12-n/12$ for $k=4$.
  This lower bound
  is attained provided there exists a Steiner system
  $S(2,k,n)$, so in this case $c(K_n,K_k)=|S(2,k,n)|=\frac{n(n-1)}{k(k-1)}$.
Hanani~\cite{h-ecbibd-AMS61}
\savespace{(see also~\cite[Theorem 2.1]{rr-ssss-EJC10})}
showed that a Steiner system
$S(2,4,n)$ exists iff $n\equiv 1\pmod{12}$ or $n\equiv 4\pmod{12}$.
This implies that $c(K_n, K_4)=n^2/12+\Theta(n)$ for any~$n$.

  In 1847 Kirkman showed (see, for instance,~\cite[p.113]{Bol1}) that a Steiner
  system $S(2,3,n)$ exists iff $n\equiv 1\pmod 6$ or $n\equiv 3\pmod 6$ (for instance,
  Fig.~\ref{fig:cK7K3} shows that $c(K_7,K_3)=7$).  Hence, for any~$n$,
  we have $c(K_n, K_3)=n^2/6+O(n)$.
\end{proof}

\begin{backInTime}{lem:KnCovTriangle}
  \begin{lemma}
    \contentLemKnCovTriangle
  \end{lemma}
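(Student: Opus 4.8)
The plan is to exploit the crossing-freeness of $d$ after restricting attention to the four vertices realized in the plane $\ell$. Write $\{v_1,v_2,v_3,v_4\}=d^{-1}(\ell)\cap V(G_\ell)$; since $G_\ell$ is induced in $K_n$ and has four vertices, it is a copy of $K_4$, and all six edges among $v_1,\dots,v_4$ are straight segments joining points of the plane $\ell$. Hence the whole $K_4$ on these vertices is drawn inside $\ell$, and this sub-drawing inherits the crossing-freeness of $d$. The goal is then the purely planar statement that four points carrying a crossing-free straight-line $K_4$ must have one of them in the interior of the triangle formed by the other three.

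First I would rule out collinearities. In a straight-line drawing an edge is not allowed to pass through a vertex other than its two endpoints, so no three of $d(v_1),\dots,d(v_4)$ can be collinear (otherwise the edge between the two outer ones would run through the middle one). Four points of a plane in general position fall into exactly one of two combinatorial types: either they are in convex position, i.e.\ the vertices of a convex quadrilateral, or one of them lies strictly inside the triangle spanned by the other three.

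Next I would exclude the convex-position case. If $d(v_1),\dots,d(v_4)$ were the vertices of a convex quadrilateral, say in this cyclic order, then the two diagonals $v_1v_3$ and $v_2v_4$ — both edges of $K_4$, drawn as straight segments in $\ell$ — would meet in a common interior point, contradicting that $d$ is crossing-free. Therefore only the remaining configuration is possible: some $d(v_i)$ lies in the interior of the triangle on the other three points, which is precisely the asserted property of $d(G_\ell)$.

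I do not expect a real obstacle here; the one point needing a little care is the standard convention that a straight-line drawing forbids an edge from passing through a non-incident vertex, since it is exactly this that rules out the degenerate collinear configurations and lets us reduce to the clean convex-versus-interior dichotomy.
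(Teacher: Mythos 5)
Your proposal is correct and follows essentially the same route as the paper: the paper's proof is exactly the observation that a crossing-free straight-line $K_4$ cannot have its four vertices in convex position because the diagonals would cross, leaving the triangle-with-interior-vertex configuration. Your write-up merely makes explicit the (standard) exclusion of collinear triples, which the paper leaves implicit.
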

\end{backInTime}
\begin{proof}
  In any planar drawing of $K_4$, the four
  vertices cannot be drawn as vertices of a convex quadrilateral for else
  its diagonals would intersect.
\end{proof}

\begin{backInTime}{lem:KnCovCommon}
  \begin{lemma}
    \contentLemKnCovCommon
  \end{lemma}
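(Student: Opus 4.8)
The plan is to argue by contradiction, using the elementary fact that three non-collinear points determine a unique plane. Suppose, for contradiction, that two \emph{different} $4$-vertex graphs $G_\ell, G_{\ell'}\in\mathcal K_{\mathcal L}$ share three common vertices $u,v,w$. By the definition of the induced combinatorial cover, this means that $d(u),d(v),d(w)$ all lie on $\ell$ and all lie on $\ell'$.

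The key step is to rule out that $d(u),d(v),d(w)$ are collinear. Since $G_\ell$ is a subgraph of $K_n$, the pairs $uv$, $vw$, $uw$ are all edges, drawn as straight-line segments. If $d(u),d(v),d(w)$ were collinear, then one of the three points --- say $d(v)$ --- would lie strictly between the other two, so the segment representing the edge $uv$ would be a proper sub-segment of the segment representing the edge $uw$. These are two distinct edges of $K_n$ that would then overlap in infinitely many points, contradicting the assumption that $d$ is a crossing-free straight-line drawing. (Alternatively, one may invoke Lemma~\ref{lem:KnCovTriangle}: the four vertices of $G_\ell$ form a triangle with the fourth vertex in its interior, so no three of them are collinear.)

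Hence $d(u),d(v),d(w)$ are affinely independent and span a unique plane. As this plane is contained in both $\ell$ and $\ell'$, and both $\ell$ and $\ell'$ are $2$-dimensional, we conclude $\ell=\ell'$. But then $d^{-1}(\ell)=d^{-1}(\ell')$, so $G_\ell$ and $G_{\ell'}$ are one and the same induced subgraph of $K_n$, contradicting the assumption that they were different. The only part of this argument requiring more than a line is the exclusion of three collinear pairwise-adjacent vertices, and that is precisely where crossing-freeness (or Lemma~\ref{lem:KnCovTriangle}) enters; everything else is linear algebra.
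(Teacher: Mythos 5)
Your proof is correct and follows essentially the same route as the paper's: the three shared vertices cannot be collinear (you even spell out the edge-overlap argument that the paper leaves implicit), hence they span a unique plane, forcing $\ell$ and $\ell'$ to coincide. The only cosmetic difference is the endgame: you conclude $G_\ell=G_{\ell'}$ directly from $\ell=\ell'$, whereas the paper instead notes that this common plane would contain five vertices of the drawing, contradicting the non-planarity of $K_5$.
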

\end{backInTime}
\begin{proof}
  Indeed, assume the converse: both graphs~$K_\ell$ and
  $K_{\ell'}$ contain the same copy~$K_3'$ of~$K_3$.  Since the
  triangle $d(K_3')$ cannot be collinear, both
  $d(K_\ell)$ and $d(K_{\ell'})$ lie in the plane~$\ell''$ spanned
  by the set~$d(K_3')$. But then the plane~$\ell''$ contains five
  points of~$d(K_n)$, which is impossible.
\end{proof}

\begin{lemma}
  \label{lem:RhoKn+1}
  \begin{enumerate}[a)]
    \item $\rho^2_3(K_{n+1}) \leq \rho^2_3(K_n) + \lceil n/2 \rceil$.
    \item $\rho^2_3(K_{n+1}) \leq \rho^2_3(K_n) + \lceil (n-3)/2 \rceil$ if
      there is a geometric cover of a drawing $d$ of $K_n$ realizing the value of
      $\rho^2_3(K_n)$, where one of the covering planes contains exactly three
      vertices.
  \end{enumerate}
\end{lemma}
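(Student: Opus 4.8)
The plan is to take a drawing $d$ of $K_n$ with an optimal geometric cover $\mathcal{L}$ (so $|\mathcal{L}|=\rho^2_3(K_n)$), add one new vertex $v_{n+1}$ to it, and cover the $n$ new edges $v_{n+1}v_1,\dots,v_{n+1}v_n$ by few extra planes without creating crossings. For part~(a), first I would place $v_{n+1}$ at a point $p$ chosen \emph{generically} in $\reals^3$: outside each of the finitely many lines through two vertices of $d$ and outside each of the finitely many planes $\mathrm{aff}(v_k,e)$ spanned by a vertex $v_k$ and an edge $e$ of $d$ not incident to $v_k$. Such a $p$ exists since finitely many lines and planes do not cover $\reals^3$, and for such a $p$ every segment $pv_k$ avoids all other vertices and all old edges while no two of these segments overlap; hence $d$ together with $v_{n+1}$ and its $n$ incident segments is a crossing-free straight-line drawing of $K_{n+1}$. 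Now partition $\{v_1,\dots,v_n\}$ into $\lceil n/2\rceil$ pairs; for a pair $\{v_i,v_j\}$ the plane $\mathrm{aff}(v_{n+1},v_i,v_j)$ (well defined by genericity) contains the two segments $v_{n+1}v_i$ and $v_{n+1}v_j$. Adjoining these $\lceil n/2\rceil$ planes to $\mathcal{L}$ yields a geometric cover of the new drawing, giving $\rho^2_3(K_{n+1})\le\rho^2_3(K_n)+\lceil n/2\rceil$.

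For part~(b), let $\ell^*\in\mathcal{L}$ be a covering plane containing exactly three vertices, say $v_1,v_2,v_3$. In a crossing-free straight-line drawing of $K_n$ no three vertices are collinear (else the third edge of their triangle would run through a vertex), so $d(v_1)d(v_2)d(v_3)$ is a non-degenerate triangle, and the only edges of $d$ contained in $\ell^*$ are its three sides. The idea is to place $v_{n+1}$ in the \emph{open} triangle $d(v_1)d(v_2)d(v_3)$: then the segments $v_{n+1}v_1,v_{n+1}v_2,v_{n+1}v_3$ stay inside the closed triangle, hence inside $\ell^*$, which is already in the cover, and they meet the triangle's sides only at their endpoints, so they cross nothing. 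It then remains to cover only the $n-3$ edges $v_{n+1}v_4,\dots,v_{n+1}v_n$, which we pair up and cover by $\lceil (n-3)/2\rceil$ further planes through $v_{n+1}$ exactly as in part~(a). To keep the overall drawing crossing-free we again want $v_{n+1}$ generic, but now only within the $2$-dimensional plane $\ell^*$: I would choose its position in the open triangle avoiding (i) the finitely many points where old edges pierce $\ell^*$, (ii) the finitely many points $\ell^*\cap\mathrm{line}(v_k,v_l)$ and lines $\ell^*\cap\mathrm{aff}(v_k,e)$ for vertices $v_k$ with $k\ge4$ and edges $e$ of $d$ not incident to $v_k$, and (iii) the finitely many lines through some $v_i$, $i\in\{1,2,3\}$, and a piercing point from (i). This excluded set is a nowhere-dense subset of $\ell^*$, so a legal position exists, and the augmented cover has size $\rho^2_3(K_n)+\lceil (n-3)/2\rceil$.

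I expect the main obstacle to be exactly this crossing-avoidance bookkeeping in part~(b): once $v_{n+1}$ is constrained to the fixed plane $\ell^*$ one has only a $2$-dimensional family of choices, so one must verify that every type of potential crossing forces $v_{n+1}$ into an at most $1$-dimensional subset of $\ell^*$, and that the relevant affine spans really are planes (which uses that no three vertices of $d$ are collinear and that the vertices $v_k$ with $k\ge4$ do not lie on $\ell^*$). Everything else — the pairing argument and the elementary fact that the segment from an interior point of a triangle to one of its corners stays inside the triangle — is routine.
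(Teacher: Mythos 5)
Your proposal is correct and follows essentially the same route as the paper: extend the given optimal drawing by one vertex, cover the new segments in pairs by planes through the common endpoint, and in part~(b) place the new vertex inside the triangle lying in the special covering plane so that only $n-3$ segments need new planes. The paper states this more tersely, leaving the crossing-free placement implicit, whereas you spell out the genericity bookkeeping, which is a fine (and harmless) elaboration of the same argument.
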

\begin{proof}
  \begin{enumerate}[a)]
    \item Since each drawing of the graph $K_n$ can be
      extended to a drawing of the graph $K_{n+1}$ by adding $n$
      segments with a common endpoint $d(v_{n+1})$, which can be covered by
      $\lceil n/2\rceil$ planes, we see that $\rho^2_3(K_{n+1})\le\rho^2_3(K_n)+\lceil
      n/2\rceil$.
    \item
      Let $\ell$ by the covering plane that contains exactly three vertices $d(v)$,
      $d(v')$ and $d(v'')$ of $d(K_n)$.
      If we extend $d$ to a drawing of the graph $K_{n+1}$ by adding
      the endpoint $d(v_{n+1})$ inside of the triangle with the vertices $d(v)$, $d(v')$ and $d(v'')$
      then it suffices to cover by additional planes only $n-3$
      segments with a common endpoint $d(v_{n+1})$, connecting it with vertices of
      $d(V(K_n))\setminus\ell$. \qedhere
  \end{enumerate}
\end{proof}

\begin{backInTime}{thm:rhoCompleteGraphs}
  \begin{theorem}
    \contentThmRhoCompleteGraphs
  \end{theorem}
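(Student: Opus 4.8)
The plan is to treat the six values of $n$ one at a time, recycling the machinery already developed. The three small cases are quick: $K_4$ is planar, so $\rho^2_3(K_4)=1$; for $n=5$ the lower bound $\rho^2_3(K_5)\ge3$ is a counting argument, since a plane carries at most four of the vertices (five coplanar vertices would give a crossing-free plane drawing of $K_5$) and hence covers at most $\binom42=6$ edges, while any two four-element subsets of $V(K_5)$ meet in three vertices, i.e.\ in a triangle, so two planes cover at most $6+6-3=9<10=|E(K_5)|$ edges; the matching bound $\rho^2_3(K_5)\le3$ is Lemma~\ref{lem:RhoKn+1}(a), namely $\rho^2_3(K_5)\le\rho^2_3(K_4)+\lceil4/2\rceil=3$; and $n=6$ is the case carried out in the main text. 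What remains is $n\in\{7,8,9\}$.

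For the upper bounds I would argue constructively. For $\rho^2_3(K_7)\le6$ I would apply Lemma~\ref{lem:RhoKn+1}(b) to a $4$-plane drawing of $K_6$ in which one covering plane holds exactly three vertices (one verifies that a suitable such drawing exists, e.g.\ based on a triangular face of the convex hull in Fig.~\ref{fig:K6-photo}), which gives $\rho^2_3(K_7)\le4+\lceil(6-3)/2\rceil=6$. For $\rho^2_3(K_8)\le7$ a plain application of Lemma~\ref{lem:RhoKn+1}(b) to $K_7$ only yields $6+\lceil(7-3)/2\rceil=8$, so I would place the eighth vertex more cleverly: start from a $6$-plane drawing of $K_7$ that has a covering plane $\ell$ with exactly three vertices, spanning a triangle $\tau$, together with a second covering plane $\ell'$ whose intersection line with $\ell$ passes through the interior of $\tau$ and which carries at least two further vertices; putting the eighth vertex on $\ell\cap\ell'$ inside $\tau$, all but at most two of its incident edges already lie on $\ell$ or $\ell'$, so one additional plane completes the cover and $\rho^2_3(K_8)\le7$. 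If arranging this configuration is awkward, the fallback is to take a small combinatorial cover of $K_8$ by copies of $K_4$ (cf.\ Theorem~\ref{thm:KnLowerUpper}) and realize it geometrically at the cost of one correction plane, in the spirit of the $K_6$ construction.

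For the lower bounds at $n=8$ and $n=9$ the combinatorial bound $\rho^2_3(K_n)\ge c(K_n,K_4)$ of Theorem~\ref{thm:KnLowerUpper} already suffices. Counting copies of $K_4$ through a fixed vertex: that vertex has degree $n-1$ and each $K_4$ through it covers exactly three incident edges, so at least $\lceil(n-1)/3\rceil$ copies of $K_4$ pass through every vertex; summing over all $n$ vertices and dividing by $4$ yields $c(K_n,K_4)\ge\lceil\tfrac{n}{4}\lceil\tfrac{n-1}{3}\rceil\rceil$, which is $6$ for $n=8$ and $7$ for $n=9$. Hence $\rho^2_3(K_8)\ge6$ and $\rho^2_3(K_9)\ge7$. (For $K_8$ one could instead use monotonicity of $\rho^2_3$ under taking subgraphs together with the $K_7$ bound below.)

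The genuinely hard step, and the one I would spend most effort on, is $\rho^2_3(K_7)\ge6$: here the combinatorial bound is too weak (one checks $c(K_7,K_4)\le5<6$), so a geometric argument extending the $K_6$ analysis is needed. I would assume for contradiction that some crossing-free drawing $d$ of $K_7$ admits a geometric cover $\mathcal L$ with $|\mathcal L|=5$. Since each plane carries at most four vertices and hence covers at most six of the $21$ edges, and since Lemma~\ref{lem:KnCovCommon} forbids two $4$-vertex covering planes from sharing more than one edge, a short counting argument forces essentially all five planes to be $4$-vertex planes and constrains how their vertex sets overlap. By Lemma~\ref{lem:KnCovTriangle} each such plane presents its four points as a triangle with the fourth point in the interior; I would collect these interior points into a set $V_0$, whose size is bounded by the number of $4$-vertex planes, so that the Krein--Milman theorem, $\Conv(d(V(K_7)))=\Conv(d(V(K_7)\setminus V_0))$, leaves too few points in $V(K_7)\setminus V_0$ to span a three-dimensional convex hull unless several of the interior points coincide. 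The final and main step is then a case analysis on which interior points are equal — repeatedly invoking that the four vertices on a $4$-vertex covering plane are never in convex position — mirroring the $K_6$ argument to reach a contradiction in every case. This case distinction is the principal obstacle and will constitute the bulk of the proof.
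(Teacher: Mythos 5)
The critical entry of Table~\ref{tab:Kn} is $\rho^2_3(K_7)\ge 6$, and here your proposal has a genuine gap: you correctly observe that the combinatorial bound collapses (indeed $c(K_7,K_4)\le 5$), but what you offer in its place is a plan, not a proof, and its two pillars are shaky. First, the ``short counting argument'' does not force all five planes to carry four vertices: four $4$-vertex planes pairwise sharing at most one edge cover at least $4\cdot 6-\binom{4}{2}=18$ edges, so one (or even two) $3$-vertex planes survive the edge count, and the vertex--plane incidence count does not exclude them either; the structural conclusion you need is simply not pinned down. Second, the Krein--Milman step only bites when at most three vertices remain outside the set of interior points, i.e.\ when at least two coincidences among the (up to five) interior vertices have already been forced; with fewer coincidences four or more hull candidates remain and no contradiction follows, so everything is delegated to a case analysis over coincidence patterns of five planes that is far larger than in the $K_6$ case --- and this is exactly the part you declare to be ``the bulk of the proof'' and do not carry out. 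The paper's argument is different and complete: with $r\le 5$ planes the number of vertex--member incidences is at most $20<21=7\cdot 3$, so some vertex $v_0$ lies in exactly two cover members, both copies of $K_4$ meeting only in $v_0$; since the remaining $r-2\le 3$ planes cannot cover $K_7-v_0\cong K_6$ (because $\rho^2_3(K_6)\ge 4$), an uncovered edge inside one of the two $K_4$'s is found, and Lemma~\ref{lem:KnCovCommon} then produces an edge missed by the whole cover. No reduction of this kind appears in your sketch.

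A secondary issue is the upper bound $\rho^2_3(K_8)\le 7$. The paper gives an explicit construction (two extra points placed symmetrically in the drawing of Fig.~\ref{fig:K6-photo}, covered by one extra plane and two further planes through the segments $d(u_3)d(u_7)$ and $d(u_2)d(u_8)$). Your alternative --- an eighth vertex on $\ell\cap\ell'$ inside the triangle of $\ell$ --- is plausible but unverified: you neither exhibit an optimal $K_7$ drawing with the required configuration ($\ell$ with exactly three vertices, $\ell'$ meeting the triangle's interior and carrying two more vertices) nor check that the new segments lying inside $\ell'$ avoid the edges already drawn in that plane; and the fallback of ``realizing a combinatorial $K_4$-cover of $K_8$ geometrically with one correction plane'' runs into precisely the obstruction the paper emphasizes (e.g.\ $c(K_6,K_4)=3<4=\rho^2_3(K_6)$). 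The remaining items are fine and essentially the paper's own reasoning: $n\le 6$, the bound $\rho^2_3(K_7)\le 4+\lceil 3/2\rceil=6$ via Lemma~\ref{lem:RhoKn+1}(b), and the lower bounds for $n=8,9$ by the incidence count $c(K_n,K_4)\ge n\lceil(n-1)/3\rceil/4$ (the paper does exactly this for $n=9$ and uses monotonicity for $n=8$).
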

\end{backInTime}
\begin{proof}
$n=5$. By Lemma~\ref{lem:RhoKn+1}(a), $\rho^2_3(K_5)\le \rho^2_3(K_4)+2=3$.
To obtain a lower bound, remark that $\rho^2_3(K_5)\ge c(K_5, K_4)=3$.
To prove the last equality remark that although a graph $K_5$ has
$10<2\cdot 6$ edges, it cannot be covered by two copies $K_4'$ and $K_4''$ of a graph $K_4$,
because in this case they should have at least three common vertices, so their intersection should have at
least three common edges, but $12-3<10$. From the other side,
each two different copies of $K_4$ cover all edges of $K_5$ but one, so $c(K_5, K_4)=3$.

$n=6$. This case is treated in the main part of the paper
(page~\pageref{thm:rhoCompleteGraphs}).

$n=7$. Since in the cover of $d(K_6)$ in Fig.~\ref{fig:K6-photo}
by $4$ planes, one of the covering planes contains exactly three vertices $d(v)$,
by Lemma~\ref{lem:RhoKn+1}(b), we obtain $\rho^2_3(K_7)\le 4+\lceil (6-3)/2\rceil=6$,
Now we show that $\rho^2_3(K_7)\ge 6$. Assume that $r=\rho^2_3(K_7)<6$.
Consider a combinatorial cover $\mathcal K_{\mathcal L}$ of $K_7$ by its complete planar subgraphs
corresponding to a geometric cover $\mathcal L$ of its drawing by $r$ planes.
Count number $s=|\{(v,K):v\in V(K_7),K\in\mathcal K_{\mathcal L}\}|$. Since
$|\mathcal K_{\mathcal L}|=r$, and each graph $K\in\mathcal K_{\mathcal L}$ has at most $4$
vertices, $s\le 5\cdot 4=20<21=7\cdot 3$. Therefore there exists a vertex $v_0\in V(K_7)$ covered
by at most two members of the cover $\mathcal K_{\mathcal L}$. Since each element $K$ of
cover $\mathcal K_{\mathcal L}$ covers at most three vertices incident to $v_0$ (and exactly three vertices
only if $K$ is a copy of graph $K_4$) and, in graph $K_7$, there are $6$ edges
incident to vertex $v_0$, we see that vertex $v_0$ belongs to exactly two members
$K^1_4$ and $K^2_4$ of the cover $\mathcal K_{\mathcal L}$, and each of these members is a copy of
graph $K_4$. Moreover, $v_0$ is the unique common vertex of the graphs $K^1_4$ and $K^2_4$.
For each $i$ let $V_i=V(K^i_4)\setminus\{v_0\}$.
Let $\mathcal K'=\mathcal K_{\mathcal L}\setminus\{K^1_4,K^2_4\}$.
Since $|\mathcal K'|=r-2$ and $r-2<4=\rho^2_3(K_6)$, there exists an edge $(v,v')$
of the graph $K_7-v_0$ which is not covered by the family $\mathcal K'$.
Since a set $\{(u,u'):u\in V_1, u'\in V_2\}$ of edges is covered by the family $\mathcal K'$,
there exists an index $i$ with $V_0=\{v,v'\}\subset V_i$. Let $j\ne i$ be the other index.
We have $|V_0\cap V(K)|\le 1$
for each $K\in\mathcal K'$. Since $|\mathcal K'|=r-2\le 3$, there exists a vertex $v_1\in V_0$
which belongs to at most one set $K\in \mathcal K'$. In fact, such a set $K$ exists, because
in in opposite case no edge $(v_1,w)$ for $w\in V_j$ is covered by $\mathcal K_{\mathcal L}$.
Since both $K^j_4$ and $K$ are members of the cover $\mathcal K_{\mathcal L}$, by
Lemma~\ref{lem:KnCovCommon}, there exists a vertex $w\in V_j\setminus V(K)$. Then
an edge $(v_1,w)$ is not covered by $\mathcal K_{\mathcal L}$, a contradiction.

$n=8$.  Clearly, $\rho^2_3(K_8)\ge \rho^2_3(K_7)=6$.  Put a point
$d(u_7)$ inside of a triangle $d(u_2)d(u_5)d(u_6)$ and a point $d(u_8)$ inside
of a triangle $d(u_3)d(u_5)d(u_6)$ in the drawing of $d(K_6)$ in
Fig.~\ref{fig:K6-photo} symmetrically with respect to the axis
$d(u_1)d(u_4)$.  Then, to cover all edges of the drawing~$d(K_8)$, it
suffices to add the four planes of Fig.~\ref{fig:K6-photo}, an
additional plane spanned by triangle $d(u_1)d(u_7)d(u_8)$ and lines
spanned by segments $d(u_3)d(u_7)$ and $d(u_2)d(u_8)$.  Therefore,
$\rho^2_3(K_8)\le 7$.

$n=9$. $\rho^2_3(K_9)\ge c(K_9,K_4)>6$. We prove the last inequality.
Since a graph $K_9$ has $36=6\cdot 6$ edges, each cover of $K_6$ by
six copies of $K_4$ generates a Steiner system $S(2,4,9)$. The absence of such a system
follows from the result of Hanani mentioned earlier, but we give a direct proof.
Indeed, assume that $c(K_9,K_4)\le 6$. Since degree of each vertex $v$ of $K_9$ is $8$,
$v$ belongs to at least $3$ copies of $K_4$ from the cover $\mathcal K'$. Then
$s=|\{(v,K):v\in V(K_8),K\in\mathcal K'|\ge 9\cdot 3=27$.
But since each member $K$ of the cover $\mathcal K'$ contains $4$ vertices of $V(K_9)$,
we have $s\le 6\cdot 4=24$, a contradiction.
\end{proof}

\section{The Affine Cover Numbers of Planar Graphs: Proofs}
\label{sec:plan-appendix}

\begin{backInTime}{rem:PlanarLvaThree}
  \begin{example}
    \contentRemPlanarLvaThree
  \end{example}
\end{backInTime}
\begin{proof}
Indeed, in the picture it is easy to see that $2 \le
\lva G \le 3$.  In order to show that $\lva G>2$, assume that the
vertex set of $G$ is colored black and white where each monochromatic component induces a linear forest.
Without loss of generality, we may
assume that the central vertex is white.  Since the central vertex
cannot have more than two white neighbors, at most two other vertices
are white.  Note that the neighbors of the central vertex form a
square in Fig.~\ref{fig:three-lines} and that each side of the square
contains a cycle.  Hence, none of the sides of the square can be
monochromatic; it must contain at least one white vertex.  Therefore,
the boundary of the square contains exactly two white vertices, which
must be placed in opposite corners. If the white vertices are placed
in the top left and the bottom right corners, then the two other
corners, which are black, have three black neighbors.  If the white
vertices are placed in the top right and the bottom left corners, then
the three white vertices induce a cycle.  In both cases, we have a
contradiction.
\end{proof}

\begin{figure}[tb]
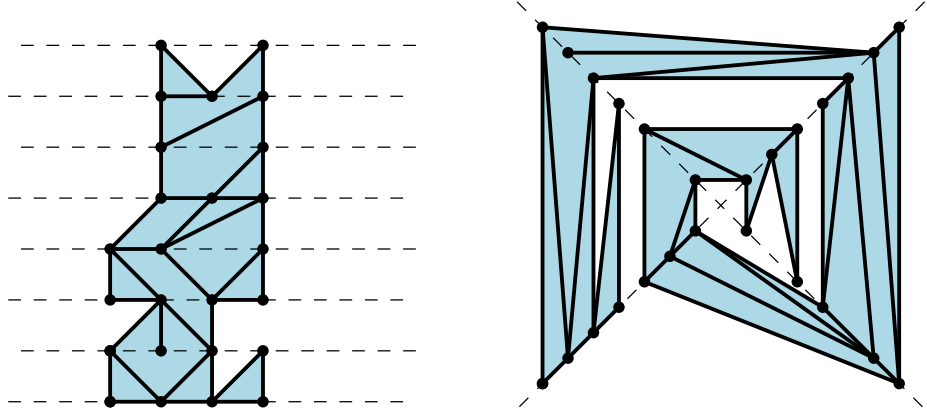

  \centering
  \includegraphics[page=2,scale=1.2]{trdr}
  \hfill\includegraphics[page=1,scale=1.2]{trdr}
  \caption{Proof of Theorem~\ref{PiTrDr}}
  \label{fig:trdr}
\end{figure}

\begin{backInTime}{thm:segm-vs-rho12}
\begin{theorem}
\contentThmSegmRho
\end{theorem}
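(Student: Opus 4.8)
The plan is to start from an optimal drawing of $G$ and then \emph{re-draw} $G$ so that long induced paths of degree-$2$ vertices are realized as straight segments. Write $k=\rho^1_2(G)$ and fix a straight-line crossing-free drawing $\mathcal D$ of $G$ whose edges lie on $k$ lines $\ell_1,\dots,\ell_k$. If $G$ is a path or a cycle then $\segm(G)\le 3$ and we are done, so I may assume that $G$ has a vertex of degree at least $3$; I call such vertices \emph{branch vertices}.

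The heart of the argument will be the inequality $\sum_{\deg v\ge 3}\deg v\le 2k(k-1)$, which I extract from $\mathcal D$. In any straight-line drawing at most two of the edges at a vertex $v$ are collinear, so in $\mathcal D$ the vertex $v$ has edges on at least $\lceil\deg v/2\rceil$ of the $k$ lines; in particular every branch vertex has edges on at least two of them. Hence if a branch vertex $v$ has an edge on $\ell_i$, then it also has an edge on some $\ell_j$ with $j\ne i$, so $v=\ell_i\cap\ell_j$. As there are at most $k-1$ points of the form $\ell_i\cap\ell_j$ with $j\ne i$, at most $k-1$ branch vertices have an edge on $\ell_i$; summing over $i$ bounds the number of incidences (branch vertex, line used by it) by $k(k-1)$, and since $\deg v\le 2\cdot(\text{number of lines used by }v)$ this yields the claimed bound.

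Next I pass to the multigraph $\bar G$ obtained from $G$ by suppressing every degree-$2$ vertex: its vertices are the branch vertices and the leaves of $G$, every edge of $\bar G$ represents a path of $G$ all of whose internal vertices have degree $2$, and the degrees at branch vertices are unchanged. Since $G$ is connected and not a path, every edge of $\bar G$ is incident to a branch vertex, so the number of leaves is at most $\sum_{\deg v\ge 3}\deg_{\bar G}v$, and therefore $2|E(\bar G)|=(\#\text{leaves})+\sum_{\deg v\ge 3}\deg_{\bar G}v\le 2\sum_{\deg v\ge 3}\deg_G v\le 4k(k-1)$; thus $\bar G$ has only $O(k^2)$ edges. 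Finally I build the drawing: $\bar G$ is planar (suppressing degree-$2$ vertices preserves planarity), so its underlying simple graph has a straight-line crossing-free drawing by F\'ary's theorem; I add back parallel edges as slightly bent curves with $O(1)$ bends each and loops as small triangles, keeping the drawing crossing-free and using only $O(|E(\bar G)|)=O(k^2)$ segments in total. Re-inserting the suppressed degree-$2$ vertices as points along the corresponding straight (or once-bent) pieces produces a straight-line crossing-free drawing of $G$ in which every segment is contained in one of these $O(k^2)$ pieces, so $\segm(G)=O(\rho^1_2(G)^2)$.

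The step I expect to be the main obstacle is the degree inequality in the second paragraph: this is the only place where the hypothesis $\rho^1_2(G)=k$ is really used, and it is what forces $\bar G$ to be small. Turning the resulting purely combinatorial bound on $|E(\bar G)|$ into an honest straight-line drawing — correctly accounting for leaves, loops, and parallel edges of $\bar G$ — is routine but needs a little care.
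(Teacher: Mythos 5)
Your proof is correct and follows essentially the same route as the paper: contract the maximal degree-$2$ chains to a small multigraph on the branch vertices, bound $\sum_{\deg v\ge 3}\deg v=O(\rho^1_2(G)^2)$ by counting intersections of the covering lines (the paper does this via Lemma~\ref{Ess}(b) with $\rho^1_3$, you do it directly with intersection points on each line), and then redraw via F\'ary's theorem spending $O(1)$ segments per chain, leaf path, and degree-$2$ cycle. The only differences are bookkeeping: the paper charges segments to branch vertices ($b_v,l_v,t_v$) rather than to edges of the contracted multigraph, which changes nothing substantive.
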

\end{backInTime}
\begin{proof}
  Call a vertex $v$ of the graph $G$ \emph{branching} if its degree is at least three.
A path between vertices of the graph $G$ will be called \emph{straight} if
it has no branching vertices other than its endpoints.

Reduce the graph $G$ to a graph $G'$ as follows. The set of vertices of $G'$
is the set of branching vertices in $G$, and two vertices are adjacent in $G'$
if they are connected by a straight path in $G$.
Being a planar graph, $G'$ has a straight-line drawing.

If $G'$ is empty then $G$ is a path or a cycle, and $\segm(G)\le 3$.
So, assume that $G$ has a branching vertex. Since $G$ is connected,
every vertex in it is connected to a branching vertex by a straight path (possibly of zero length).
Thus, we can construct a straight-line drawing of $G$ from a straight-line drawing of $G'$ as follows.
Note that an edge $e$ of $G'$ corresponds to a bond of paths in $G$ connecting the
incident vertices of $e$.
We draw one path in the bond on the segment $e$, and each other is drawn as a pair of two segments
that are close to $e$.
A branching vertex $v$ in $G$ can be connected by a straight path to a degree 1 vertex
(which disappears in $G'$). We restore each such path by drawing it as a small segment.
Moreover, $v$ can belong to a cycle whose all vertices except $v$ have degree 2.
We draw each such cycle as a small triangle.

Note that the segments incident to
the branching vertex $v$ are split into three parts: $b_v$ segments which belong to the edge bonds,
$l_v$ segments going to leaves, and $t_v$ segments that are sides of the small triangles.
Unless $l_v=0$ and $t_v=2$, we can ensure that the last $l_v+t_v$ segments are drawn in at most $\lceil (l_v+t_v)/2 \rceil$
all crossing at the point $v$.
Therefore, the vertex $v$ is incident to at most $b_v+\lceil (l_v+t_v)/2 \rceil$ segments.
This holds true even if $l_v=0$ and $t_v=2$; we need to use the fact that in this case $b_v\ne0$.
We will say that these segments are \emph{related} to $v$. We also relate to $v$
the opposite sides of the corresponding small triangles; there are
$t_v/2$ of them. Thus, the vertex $v$ has at most
$b_v+\lceil (l_v+t_v)/2 \rceil+t_v/2\le b_v+\lceil (l_v+t_v)/2 \rceil+\lfloor (l_v+t_v)/2 \rfloor=b_v+l_v+t_v=\deg v$
segments. Since every segment of the constructed drawing of $G$ is related to some vertex,
the total number of the segments is bounded by
$\sum_{v\in V(G), \deg v\ge 3} \deg v\le
4\sum_{v\in V(G), \deg v\ge 3}\left\lceil\frac{\deg v}{2} \right\rceil\left(\left\lceil\frac{\deg v}2\right\rceil-1\right)
\le 4\rho^1_3(G)(\rho^1_3-1)$.
The last estimate is the first inequality in the proof of  Lemma~\ref{Ess}(b).
\end{proof}

\label{sec:edges-appendix}
\begin{figure}
  \begin{minipage}[b]{.95\textwidth}
    \begin{subfigure}[b]{.48\textwidth}
      \centering
      \includegraphics{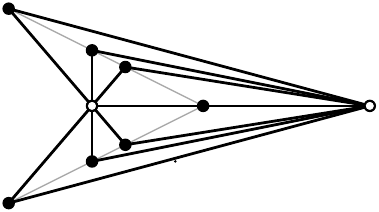}
      \caption{$\rho^1_2(K_{2,q})\le \lceil (3n-7)/2\rceil$.}
       \label{fig:K2n-2upper}
    \end{subfigure}
    \hfill\hfill
    \begin{subfigure}[b]{.48\textwidth}
      \centering
      \includegraphics{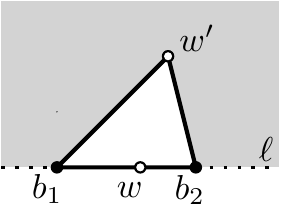}
      \caption{$\rho^1_2(K_{2,q})\ge\lceil (3n-7)/2\rceil$.}
       \label{fig:K2n-2lower}
    \end{subfigure}
    \caption{Establishing lower and upper bounds for~$\rho^1_2(K_{2,q})$.}
  \end{minipage}
\end{figure}

\begin{backInTime}{K2n-2}
\begin{example}
\contentExKTwoNMinusTwo
\end{example}
\end{backInTime}

\begin{proof}
The former equality is obvious. We have to prove for $G=K_{2,q}$ that $\rho^1_2(G)=\lceil (3n-7)/2\rceil$.
Fig.~\ref{fig:K2n-2upper} shows that
$\rho^1_2(G)\le \lceil (n-3)/2\rceil+n-2=\lceil (3n-7)/2\rceil$.
It remains to show the lower bound $\rho^1_2(G)\ge \lceil (3n-7)/2\rceil$.
Suppose that our bipartition is defined by $2$ white vertices
and $q$ black vertices. Associate the graph $G$ with its plane drawing.
If there exist no line containing one white vertex and two black vertices of the graph $G$
then we need $m=2n-4\ge \lceil (3n-7)/2\rceil$ lines to cover all edges of the graph $G$.
Assume from now that
there exists a line $\ell$ containing a white vertex $w$ and two black vertices $b_1$ and $b_2$ of the graph
$G$. Then the vertex $w$ lies on the line $\ell$ between the vertices $b_1$ and $b_2$.
Let $w'$ be the other white vertex.
Since the point $w$ sees all black points, no one of them can be placed inside the shaded area,
see Fig.~\ref{fig:K2n-2lower}. Then the point $w'$ cannot be an interior point of a segment between
two black points.

Thus all lines which cover at least two edges of the graph $G$ go through the point $w$, at
most one of these lines go through the point $w'$ and the remaining lines can cover only
the edges incident to the vertex $w$. Now let $\mathcal L$ be a family of
lines such that $|\mathcal L|=\rho^1_2(G)$ and each edge of the graph $G$ belongs to some
line $\ell\in\mathcal L$. Let $\mathcal L_w=\{\ell\in\mathcal L:w\in\ell\}$,
$\mathcal L_{w'}=\{\ell\in\mathcal L:w'\in\ell\}$, and $\mathcal L_{w,w'}=\mathcal L_w\cap
\mathcal L_{w'}$. Clearly, $|\mathcal L_w|\ge\rho^1_2(K_{1,n-2})=\lceil (n-2)/2\rceil$.
By the above, $|\mathcal L_{w'}|=n-2$ and $|\mathcal L_{w,w'}|\le 1$.
If the set $\mathcal L_{w,w'}$ is empty, then
$|\mathcal L|=|\mathcal L_{w}|+|\mathcal L_{w'}|\ge\lceil (n-2)/2\rceil+n-2=\lceil (3n-6)/2\rceil$.
If  $\mathcal L_{w,w'}=\{\ell_0\}$ then since the line $\ell_0$ covers exactly
edges $(w,b)$ and $(w',b)$ for some black vertex $b$,
$|\mathcal L_w\setminus \{\ell_0\}|\ge \rho^1_2(K_{1,n-3})=\lceil (n-3)/2\rceil$, and
$|\mathcal L|=|\mathcal L_w\setminus \{\ell_0\}|+
|\mathcal L_{w'}|\ge\lceil (n-3)/2\rceil+n-2=\lceil (3n-7)/2\rceil$.
In both cases. $\rho^1_2(G)\ge \lceil (3n-7)/2\rceil$.
\end{proof}

\begin{figure}[tb]
  \centering
  \includegraphics[scale=0.9]{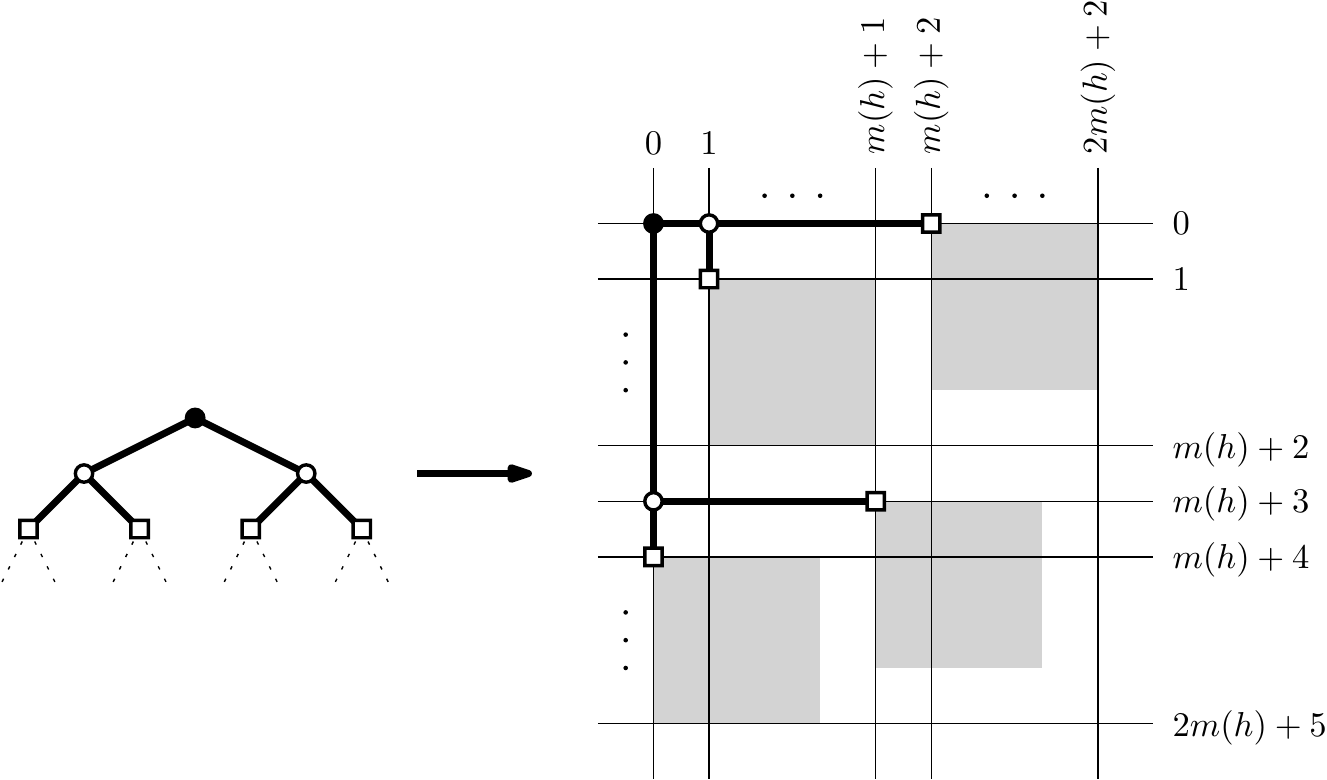}
  \caption{Drawing of a complete binary tree of height $h+2$ on a grid of size
  $(2m(h)+2)\times (2m(h)+5)$.}
  \label{fig:bintree-tight-rho}
\end{figure}
\begin{example}\label{prop:CompleteBinTreeRho12}
 If $G$ is the complete binary tree of height $h \ge 1$ consisting of $n$ nodes, then
$\rho^1_2(G)>\sqrt{n-3}$. On the other hand,
$\rho^1_2(G)\le (6/{\sqrt{2}})\sqrt{n+1}-5$ if $h$ is even
and $\rho^1_2(G)\le 4\sqrt{n+1}-5$ otherwise.
\end{example}
\begin{proof}
Indeed, since $G$ has $(n-3)/2$ vertices with degree $3$,
$\rho^1_2(G)>\sqrt{n-3}$ by Lemma~\ref{Ess}(a).

To obtain an upper bound, let $m(2)=2$, $m(3)=4$, and $m(h+2)=2m(h)+4$
for $h\ge 2$.
We prove by induction that we can draw
the tree of height $h\ge2$ on an $m(h)\times(m(h)+1)$-grid
with the root placed at the top left corner
(see also \cite[Fig.~3(a)]{gr-sldbt-JGAA04}).
For the induction basis note that this is true for $h=2$ and $h=3$.
We can also draw the tree of height $0$ on a $0\times 0$-grid and
the tree of height $1$ on a $1\times 1$-grid, respectively,
with the root placed at the top left corner.
As the induction step, we use a $m(h+2)\times(m(h+2)+1)$ grid
and place the root of the tree of height $h+2$ on the top left grid point~$(0,0)$,
place its child nodes on points~$(1,0)$
and~$(0,m(h)+3)$, and place its grandchild nodes on points~$(m(h)+2,0)$, $(1,1)$, $(m(h)+1, m(h)+3)$ and
$(0,m(h)+4)$ (see Fig.~\ref{fig:bintree-tight-rho}).
For each grandchild~$v$, consider the intersection of $m(h)$ gridlines to the
right and the $m(h)+1$ gridlines to the
bottom of~$v$ (including the grid lines containing~$v$).
We reserve this part of the grid to the subtree of~$v$.
Note that every grid point is reserved to at most one grandchild due to the placement of the
grandchild nodes. Since the subtree of a grandchild is a complete binary tree of height $h$,
by induction, we can draw each of them in the reserved part of size $m(h)\times (m(h)+1)$
with the roots placed at the top left corners.
Thus all edges of the tree of height $h$ belong to $2m(h)+1$ grid lines.
We can easily show that $m(h)=6\cdot 2^{(h-2)/2}-4$ if $h\ge 2$ is even
and $m(h)=8\cdot 2^{(h-3)/2}-4$ if $h\ge 3$ is odd.
Since $n=2^{h+1}-1$, we get the claimed upper bounds.
\end{proof}

\begin{figure}[b]
  \centering
  \includegraphics[width=\textwidth]{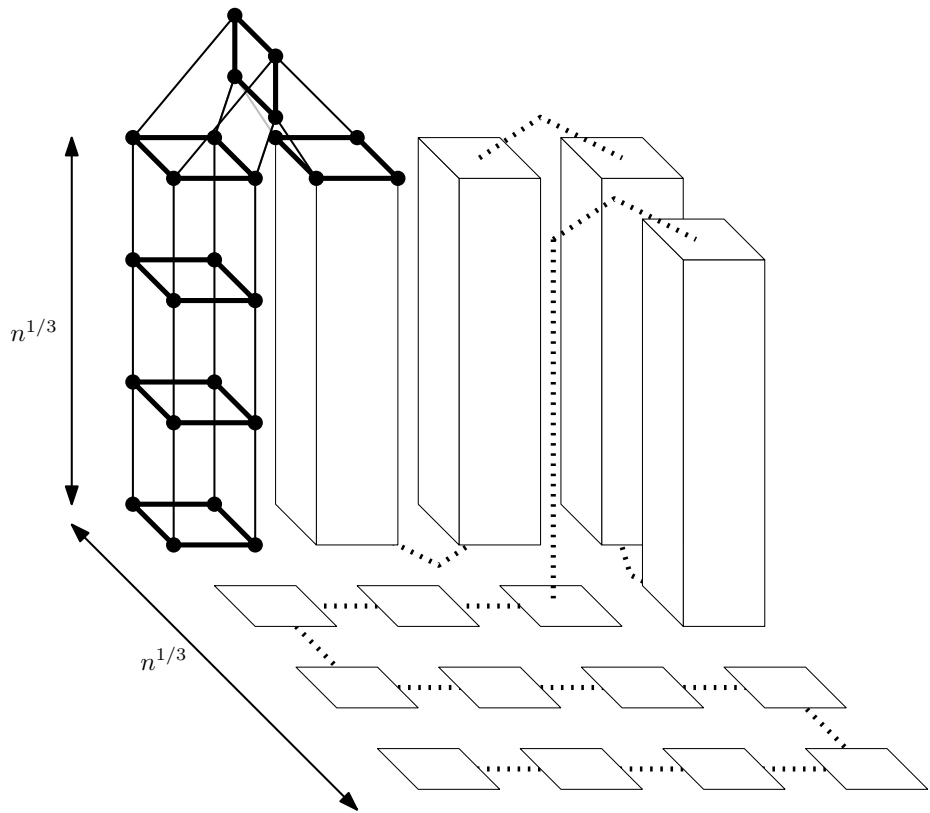}
  \caption{The graph $C_4\times P_k$ drawn into a 3D grid of linear volume
    on $O(n^{2/3})$ lines (the proof of Theorem~\ref{thm:nested-triangles}(b)).}
  \label{fig:3D-grid}
\end{figure}

\section{Conclusion and Open Problems}
\label{sec:open-problems}

Apart from the many open problems that we have mentioned throughout
the paper, we suggest to study a topological version of~$\rho^l_d$,
say $\tau^l_d$, where there are $\tau^l_d(G)$ $l$-dimensional planes
such that each edge of $G$ is contained in one such plane. It seems
that $\tau^1_d(G) = \rho^1_d(G)$, and for $d \geq 3$, $\tau^3_d(G)=
1$. So, the only interesting such parameter is $\tau^2_3(G)$. This
would relate more closely to book embeddings.

Can we bound $\rho^2_3$ (and $\pi^2_3$) for 1-planar graphs
or RAC graphs? %
Is $\rho^2_3$ bounded by a constant or linear function for bounded-degree
graphs?
Are there tighter bounds for $\rho^2_3(K_n)$ than those in Theorem~\ref{thm:KnLowerUpper}?
Find $\lim_{n\to\infty} \rho^2_3(K_n)/n^2$ if it exists.

Given that it is NP-complete to decide whether the vertex arboricity
of a maximal planar graph is at most $2$~\cite{hs-va-SDM89},
how hard it is to test whether $\pi^1_3(G)=2$ for a planar graph~$G$?
\end{document}